\newcommand{\old}[1]{{}}
\newcommand{\eps}{\varepsilon}
\newcommand{\diam}{\mathrm{diam}}
\newcommand{\per}{\mathrm{per}}
\newcommand{\proj}{\mathrm{proj}}
\newcommand{\wdth}{\mathrm{width}}
\newcommand{\hght}{\mathrm{height}}
\newcommand{\MST}{\mathrm{MST}}
\newcommand{\dir}{\mathrm{dir}}
\newcommand{\hper}{\mathrm{hper}}
\newcommand{\vper}{\mathrm{vper}}
\newcommand{\slope}{\mathrm{slope}}
\newcommand{\shad}{\mathrm{shad}}
\title{Euclidean Steiner Spanners: Light and Sparse}
\titlerunning{Euclidean Steiner Spanners: Light and Sparse}
\author{Sujoy Bhore}{Indian Institute of Science Education and Research, Bhopal, India.}{sujoy.bhore@gmail.com}{0000-0003-0104-1659}{}
\author{Csaba D. T\'oth}{California State University Northridge, Los Angeles, CA; and Tufts University, Medford, MA, USA.}{csaba.toth@csun.edu}{0000-0002-8769-3190}{}
\authorrunning{S.~Bhore and C.~D.~T\'oth}
\keywords{Geometric spanner, $(1+\eps)$-spanner, lightness, sparsity, minimum weight.} 
\begin{document}

\maketitle

\begin{abstract}
Lightness and sparsity are two natural parameters for Euclidean $(1+\eps)$-spanners. Classical results show that, when the dimension $d\in \mathbb{N}$ and $\eps>0$ are constant, every set $S$ of $n$ points in $d$-space admits an $(1+\eps)$-spanners with $O(n)$ edges and weight proportional to that of the Euclidean MST of $S$. In a recent breakthrough, Le and Solomon (2019) established the precise dependencies on $\eps>0$, for constant $d\in \mathbb{N}$, of the minimum lightness and sparsity of $(1+\eps)$-spanners, and observed that Steiner points can substantially improve the lightness and sparsity of a $(1+\eps)$-spanner. They gave upper bounds of $\tilde{O}(\eps^{-(d+1)/2})$ for the minimum lightness in dimensions $d\geq 3$, and $\tilde{O}(\eps^{-(d-1)/2})$ for the minimum sparsity in $d$-space for all $d\geq 1$.
Subsequently, Le and Solomon (2020)
constructed Steiner $(1+\eps)$-spanners of lightness $O(\eps^{-1}\log\Delta)$ in the plane, where
$\Delta\in  \Omega(\sqrt{n})$ is the \emph{spread} of $S$, defined as the ratio between the maximum and minimum distance between a pair of points.

In this work, we improve several bounds on the lightness and sparsity of Euclidean Steiner $(1+\eps)$-spanners. We establish lower bounds of $\Omega(\eps^{-d/2})$ for the lightness and $\Omega(\eps^{-(d-1)/2})$ for the sparsity of such spanners in Euclidean $d$-space for all constant $d\geq 2$. Our lower bound constructions generalize previous constructions by Le and Solomon, but the analysis substantially simplifies previous work, using new geometric insight, focusing on the directions of edges.

Next, we show that for every finite set of points in the plane and every $\eps\in (0,1]$, there exists a Euclidean Steiner $(1+\eps)$-spanner of lightness
$O(\eps^{-1})$; this matches the lower bound for $d=2$. We generalize the notion of shallow light trees, which may be of independent interest, and use directional spanners and a modified window partitioning scheme to achieve a tight weight analysis.
\end{abstract}


\section{Introduction}\label{sec:intro}
For an edge-weighted graph $G$, a subgraph $H$ of $G$ is a $t$-\emph{spanner} if $\delta_H(u,v) \le t\cdot \delta_G(u,v)$, where $\delta_G(u,v)$ denotes the shortest path distance between any two vertices $u$ and $v$.
The parameter $t$ is called the \emph{stretch factor} of the spanner.
Spanners are fundamental graph structures with many applications in the area of distributed systems and communication, distributed queuing protocol, compact routing schemes, and more; see~\cite{demmer1998arrow,
herlihy2001competitive,
PelegU89a, 
peleg1989trade}.
Two important parameters of a spanner $H$ are \emph{lightness} and \emph{sparsity}.
The \emph{lightness} of $H$ is the ratio $w(H)/w(\MST)$ between the total weight of $H$ and the weight of a minimum spanning tree (MST). The \emph{sparsity} of $H$ is the ratio $|E(H)|/|E(\MST)|\approx |E(H)|/|V(G)|$ between the number of edges of $H$ and any spanning tree. As $H$ is connected, the trivial lower bound for both the lightness and the sparsity of a spanner is 1.
When the vertices of $G$ are points in a metric space, the edge weights obey the triangle inequality. The most important examples include Euclidean $d$-space and, in general, metric spaces with constant doubling dimensions (the doubling dimension of $\mathbb{R}^d$ with $L_2$-norm is $\Theta(d)$).

In a \emph{geometric spanner}, the underlying graph $G=(S,\binom{S}{2})$ is the complete graph on a finite point set $S$ in $\mathbb{R}^d$, and the edge weights are the Euclidean distances between vertices.
Euclidean spanners are one of the fundamental geometric structures that find applications across domains, such as, topology control in wireless networks~\cite{schindelhauer2007geometric}, efficient regression in metric spaces~\cite{gottlieb2017efficient},
approximate distance oracles~\cite{gudmundsson2008approximate}, and others.
Rao and Smith~\cite{rao1998approximating} showed the relevance of Euclidean spanners in the context of other geometric \textsf{NP}-hard problems, e.g., Euclidean traveling salesman problem and Euclidean minimum Steiner tree problem, and introduced the so called \emph{banyans}\footnote{A
$(1+\eps)$-banyan for a set of points $A$ is a set of points $A'$
and line segments $S$ with
endpoints in $A\cup A'$ such that a
$1+\eps$ optimal Steiner Minimum Tree for any subset of $A$ is
contained in $S$}, which is a generalization of graph spanners.
Apart from lightness and sparsity, various other optimization criteria have been considered, e.g.,
bounded-degree spanners~\cite{bose2005constructing} and $\alpha$-diamond spanners~\cite{das1989triangulations}. Several distinct construction approaches have been developed for Euclidean spanners,
that each found further applications in geometric optimization, such as
well-separated pair decomposition (WSPD) based spanners~\cite{callahan1993optimal, GudmundssonLN02}, skip-list spanners~\cite{arya1994randomized},
path-greedy and gap-greedy spanners~\cite{althofer1993sparse, arya1997efficient}, and more. For an excellent survey of  results and techniques on Euclidean spanners up to 2007, we refer to the book by Narasimhan and Smid~\cite{narasimhan2007geometric}.

\subparagraph{Sparsity.}
A large body of research on spanners has been devoted to \emph{sparse spanners} where the objective is to obtain a spanner with small number of edges, preferably $O(|S|)$, with $1+\eps$ stretch factor, for any given $\eps>0$.
Chew~\cite{Chew86} was the first to show that there exists a Euclidean spanner with a linear number of edges and stretch factor $\sqrt{10}$. The stretch factor was later improved to $2$~\cite{Chew89}. 
Later, Keil and Gutwin~\cite{keil1992classes} showed that the Delanauy triangulation of the point set $S$ is a $2.42$-spanner. 
Clarkson~\cite{Clarkson87} designed the first Euclidean $(1+\eps)$-spanner, for arbitrary small $\eps>0$; an alternative algorithm was  presented by Keil~\cite{keil1988approximating}.
Moreover, these papers introduced the fixed-angle $\Theta$-graph\footnote{The $\Theta$-graph is a type of geometric spanner similar to Yao graph~\cite{yao1982constructing}, where the space around each point $p\in P$ is partitioned into cones of angle $\Theta$, and $S$ will be connected to a point $q\in P$ whose orthogonal projection to some fixed ray contained in the cone is closest to $S$.} as a potential new tool for designing spanners in $\mathbb{R}^2$, which was later generalized to higher dimension by Ruppert and Seidel~\cite{ruppert1991approximating}. One can construct a $(1+\eps)$-spanner with $O(n\eps^{-d+1})$ edges by taking the angle $\Theta$ to be proportional to $\eps$ in any constant dimension $d\geq 1$.
Recently, Le and Solomon~\cite{le2019truly} showed that this bound is tight, as for every $\eps>0$ and constant $d\in \mathbb{N}$, there are sets of $n$ points in $\mathbb{R}^d$ for which any $(1+\eps)$-spanner must have sparsity $\Omega(\eps^{-d+1})$, whenever $\eps = \Omega(n^{-1/(d-1)})$.

\subparagraph{Lightness.}
For a set of points $S$ in a metric space, the  lightness is the ratio of the spanner weight (i.e., the sum of all edge weights) to the weight of the minimum spanning tree $\MST(S)$.
Das et al.~\cite{das1993optimally} showed that the \emph{greedy-spanner}, introduced by Alth\"ofer et al.~\cite{althofer1993sparse}, has constant lightness in $\mathbb{R}^3$ for any constant $\eps>0$. This was generalized later to $\mathbb{R}^d$, for all $d\in \mathbb{N}$, by Das et al.~\cite{narasimhan1995new}. However the dependency on the parameter $\eps$ (for constant $d$) has not been addressed. Rao and Smith showed that the greedy-spanner has lightness $\eps^{-O(d)}$ in $\mathbb{R}^d$ for every constant $d$, and asked what is the best possible constant in the exponent.
A detailed analysis in the book on geometric spanners~\cite{narasimhan2007geometric} shows that the lightness of the greedy-spanner is $O(\eps^{-2d})$ in $\mathbb{R}^d$.
%
Le and Solomon~\cite{le2019truly} showed that the greedy-spanner has lightness $O(\eps^{-d}\log \eps^{-1})$  in $\mathbb{R}^d$.
Moreover, they constructed, for every $\eps>0$ and constant
$d\in \mathbb{N}$, a set $S$ of $n$ points in $\mathbb{R}^d$ for which any $(1+\eps)$-spanner must have lightness $\Omega(\eps^{-d})$, whenever $\eps = \Omega(n^{-1/(d-1)})$.
Recently, Borradaile et al.~\cite{borradaile2019greedy} showed that the greedy-spanner of a finite metric space of doubling dimension $d$ has lightness $\eps^{-O(d)}$.

\subparagraph{Euclidean Steiner spanners.}
Steiner points are additional vertices in a network (via points) that are not part of the input, and a $t$-spanner must achieve stretch factor $t$ only between pairs of the input points in $S$.
A classical problem on Steiner points arises in the context of minimum spanning trees. The \emph{Steiner ratio} is the supremum ratio between the weight of a \emph{minimum Steiner tree} and a \emph{minimum spanning tree} of a finite point set, and it is at least $\frac{1}{2}$ in any metric space due to the triangle inequality.

Le and Solomon~\cite{le2019truly} noticed that Steiner points can substantially improve the bound on the lightness and sparsity of an $(1+\eps)$-spanner. Previously, Elkin and Solomon~\cite{elkin2015steiner} and Solomon~\cite{Solomon15} showed that Steiner points can improve the weight of the network in the single-source setting. In particular, the so-called \emph{shallow-light trees} (\textsf{SLT}) is a single-source spanning tree that concurrently approximates a shortest-path tree (between the source and all other points) and a minimum spanning tree (for the total weight). They proved that Steiner points help to obtain exponential improvement on the lightness of \textsf{SLT}s in a general metric space~\cite{elkin2015steiner}, and quadratic improvement on the lightness in Euclidean spaces~\cite{Solomon15}.

Le and Solomon, used Steiner points to improve the bounds for lightness and sparsity of Euclidean spanners. For minimum sparsity, they gave an upper bound of $O(\eps^{(1-d)/2})$ for $d$-space and a lower bound of $\Omega(\eps^{-1/2}/\log\eps^{-1})$ in the plane ($d=2$)~\cite{le2019truly}.
For minimum lightness, Le and Solomon~\cite{le2020light} gave an upper bound of $O(\eps^{-1}\log\Delta)$ in the plane and $O(\eps^{-(d+1)/2}\log\Delta)$ in dimension $d\geq 3$, where $\Delta$ is the \emph{spread} of the point set, defined as the ratio between the maximum and minimum distance between a pair of points. In any space with doubling dimension $d$ (including $\mathbb{R}^d$), we have $\log \Delta\geq \Omega_d(\log n)$. 
Subsequently, Le and Solomon~\cite{le2020light-arxiv} noted that the factor $\log \Delta$ can be improved to $\log n$ using standard techniques.
Moreover, Le and Solomon~\cite{le2020unified} constructed Steiner $(1+\eps)$-spanners with lightness $\tilde{O}(\eps^{-(d+1)/2})$ in dimensions $d\geq 3$.
Recently, we have studied online spanners in both Euclidean and general metrics, and obtained several asymptotically tight bounds~\cite{BT-ons-21, BFKT22}.


\begin{table}
\begin{center}
\begin{tabular}{ |c|c|c| }
\hline
Bounds & Sparsity & Lightness \\
\hline
\multirow{2}{5em}{Lower Bounds}
& $\Omega(\varepsilon^{-1/2}/\log \varepsilon^{-1})$ for $d=2$~\cite{le2019truly} & $\Omega(\varepsilon^{-1}/\log \varepsilon^{-1})$ for $d=2$~\cite{le2019truly}\\
& \textcolor{magenta}{$\Omega(\varepsilon^{(1-d)/2})$}~\cite{BT-less-21} &
\textcolor{magenta}{$\Omega(\varepsilon^{-d/2})$ for $d\geq 2$}~\cite{BT-less-21}\\
\hline
\multirow{3}{5em}{Upper Bounds} &  & $\tilde{O}(\varepsilon^{-(d+1)/2})$ for $d\ge 3$~\cite{le2020unified}\\
&  $O(\varepsilon^{(1-d)/2})$~\cite{le2019truly} & $O(\varepsilon^{-1}\log n)$ for $d=2$~\cite{le2020light}\\
&  & \textcolor{magenta}{$O(\varepsilon^{-1})$ for $d=2$}~\cite{BT-lessp-20}\\
\hline
\end{tabular}
\vspace{.2cm}
\caption{\label{tab:table-name} Previous and new results on Euclidean Steiner $(1+\eps)$-spanners; new results are highlighted in magenta. The $\Omega(.)$ and $O(.)$ notation hides constant coefficients dependent on $d$; and $\tilde{O}(.)$ also hides polylogarithmic factors in $\eps$.}
\end{center}
\end{table}

\subparagraph{Our Contribution.}
We improve the bounds on the lightness and sparsity of Euclidean Steiner $(1+\eps)$-spanners; see Table~\ref{tab:table-name}. First, in Section~\ref{sec:lower}, we prove the following lower bounds.

\begin{restatable}{theorem}{lowerboundth}
\label{thm:lb}
Let $d\in \mathbb{N}$, $d\geq 2$, be a constant and let $\varepsilon>0$.
For every integer $n\geq \Omega(\eps^{(1-d)/2})$, there exists a set $S$ of $n$ points in $\mathbb{R}^d$ such that any Euclidean Steiner $(1+\eps)$-spanner for $S$ has
lightness $\Omega(\eps^{-d/2})$ and sparsity $\Omega(\eps^{(1-d)/2})$.	
\end{restatable}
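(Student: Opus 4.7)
The plan is to generalize the sphere-based construction of Le and Solomon and analyze it through a clean directional argument that avoids their technical recursion. Let $K = \Theta(\eps^{(1-d)/2})$ and fix a $\sqrt{\eps}$-net $U = \{\mathbf{u}_1, \ldots, \mathbf{u}_K\} \subset S^{d-1}$ of unit vectors with pairwise angular separation $\Omega(\sqrt{\eps})$.

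\textbf{Construction.} Place $n = \Theta(\eps^{(1-d)/2})$ points on a unit sphere in $\mathbb{R}^d$ in a well-spread manner (e.g.\ a $\sqrt{\eps}$-net on the sphere), so that the directions $\dir(s_j - s_i)$ are distributed across $S^{d-1}$: for each net direction $\mathbf{u}_\ell$, the demand pairs $(s_i, s_j)$ with $\dir(s_j - s_i)$ within $\sqrt{\eps}$ of $\mathbf{u}_\ell$ have endpoints whose projections perpendicular to $\mathbf{u}_\ell$ are $\Omega(\sqrt{\eps})$-separated. For this configuration, $|\MST(S)| = \Theta(n^{(d-2)/(d-1)}) = \Theta(\eps^{-(d-2)/2})$. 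For $n$ larger than $\Theta(\eps^{(1-d)/2})$, we pad with collinear clusters that have negligible effect on both the MST and the spanner weight.

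\textbf{Directional lemma.} Given any Steiner $(1+\eps)$-spanner $H$ and any pair $p, q \in S$, fix a $(1+\eps)$-short $p$-$q$ path $\pi$ in $H$ and let $\theta_e$ denote the angle between edge $e \in \pi$ and $\dir(q - p)$. Combining $\sum_{e\in\pi} |e| \leq (1+\eps)|pq|$ with the projection bound $\sum_{e\in\pi} |e|\cos\theta_e \geq |pq|$ yields
\begin{equation*}
\sum_{e \in \pi} |e|\,(1-\cos\theta_e) \;\leq\; \eps\,|pq|.
\end{equation*}
Since $1 - \cos\theta \geq \theta^2/4$ for $|\theta| \leq \pi/2$, the edges of $\pi$ with $\theta_e > 3\sqrt{\eps}$ contribute total length at most $|pq|/2$, so the remaining ``aligned'' edges form an $O(\sqrt{\eps})$-wide tube of total length at least $|pq|/2$.

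\textbf{Charging and counting.} Partition $E(H)$ into bins $E_1, \ldots, E_K$ by assigning each edge to the index $\ell$ with $\mathbf{u}_\ell$ closest to the edge's direction. The aligned portion of the path for a demand pair with direction near $\mathbf{u}_\ell$ lies entirely in $E_\ell$, contributing weight at least $|p_iq_j|/2 = \Omega(1)$ to $w(E_\ell)$. Since our construction guarantees that the $\Omega(n)$ pairs per bin have perpendicular offsets of separation $\Omega(\sqrt{\eps})$, their aligned $O(\sqrt{\eps})$-wide tubes are geometrically disjoint, so their contributions to $w(E_\ell)$ add. Hence $w(E_\ell) = \Omega(n)$, $w(H) = \sum_\ell w(E_\ell) = \Omega(nK) = \Omega(\eps^{1-d})$, and dividing by $|\MST(S)|$ gives lightness $\Omega(\eps^{-d/2})$. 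The sparsity bound follows from a degree-based variant: at each $s_i \in S$, the $\Omega(K)$ distinct net directions with a demand from $s_i$ force $\Omega(K)$ distinct incident edges, so $|E(H)| = \Omega(nK)$ and sparsity $= \Omega(K) = \Omega(\eps^{(1-d)/2})$.

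\textbf{Main obstacle.} The delicate technical point is the tube disjointness underlying the lightness argument: aligned paths for different demand pairs in the same direction bin must not overlap, forcing us to match quantitatively the $\Theta(n^2/K) = \Theta(n)$ pairs per bin against the capacity $\Theta(\eps^{-(d-1)/2}) = \Theta(n)$ of disjoint $O(\sqrt{\eps})$-wide tubes in the $(d-1)$-dimensional perpendicular space. The new geometric insight highlighted by the authors is precisely that one can realize this matching on $S^{d-1}$ directly, converting the combinatorial packing arguments used in earlier work into a transversal-spread analysis driven by edge directions alone.
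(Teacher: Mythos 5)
Your directional lemma is fine and is essentially the paper's Lemma~\ref{lem:parallel} (same projection/Taylor argument), but the proof has genuine gaps at exactly the points that carry the theorem. The central one is the tube-disjointness claim within a direction bin: you assert that distinct demand pairs near $\mathbf{u}_\ell$ have perpendicular offsets $\Omega(\sqrt{\eps})$ and hence their $O(\sqrt{\eps})$-wide tubes are disjoint, and then you yourself flag this as the unresolved ``main obstacle.'' As stated it does not follow: an $\Omega(\sqrt{\eps})$ separation against an $O(\sqrt{\eps})$ tube width is a race of unspecified constants, and two chords of the sphere whose directions differ by a small nonzero angle (below any alignment threshold) can drift toward each other by $\Theta(\sqrt{\eps})$ over unit length, so near-parallel-but-not-parallel pairs can share aligned edges. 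Moreover, for pairs whose endpoints lie near the equator orthogonal to $\mathbf{u}_\ell$, the perpendicular offsets degenerate to $O(\eps)$ and the chord lengths tend to $0$, so both the separation claim and the ``each pair contributes $\Omega(1)$'' claim fail there; these pairs must be discarded and the $\Omega(n)$-pairs-per-bin count re-established. The paper's construction (two $\Theta(\sqrt{\eps})$-spaced grids in parallel hyperplanes rather than a sphere) is chosen precisely to kill this case analysis: if two pair directions differ by less than $4\sqrt{\eps}$, the lattice structure forces the two segments to be \emph{exactly} parallel, whence their cylinders are disjoint, and otherwise their near-parallel direction ranges are disjoint. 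Without the lattice (or an equally rigid structure) this dichotomy, which is the heart of the lower bound, is missing from your argument. There is also a smaller constant-tracking issue: edges aligned to within $3\sqrt{\eps}$ of a pair direction that is within $\sqrt{\eps}$ of $\mathbf{u}_\ell$ need not be assigned to bin $\ell$ by a nearest-net-direction rule unless the net separation is a sufficiently large multiple of $\sqrt{\eps}$, which in turn perturbs your per-bin pair count.

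Two further steps are incorrect rather than merely incomplete. First, the sparsity argument: in a \emph{Steiner} spanner, demands from $s_i$ in $\Omega(K)$ distinct directions do not force $\Omega(K)$ distinct edges incident to $s_i$ --- all those paths may leave $s_i$ through a single short edge and branch later at Steiner points (the directional lemma controls only the total length of misaligned edges, not the edge at the source; shallow-light trees are exactly such counterexamples). The paper instead deduces sparsity from the weight bound: after clipping the spanner to the bounding box, every edge has length $O(\sqrt{d})$, so $|E(H)|\geq w(H)/O(1)=\Omega(\eps^{1-d})$, giving sparsity $\Omega(\eps^{(1-d)/2})$. Second, the extension to arbitrary $n$ by padding with a tight collinear cluster preserves the lightness bound but not the sparsity bound: the forced number of edges stays $\Omega(\eps^{1-d})$ while $n$ grows, so $|E(H)|/n$ drops below $\eps^{(1-d)/2}$ for large $n$. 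The paper instead places $\Theta(n\,\eps^{(d-1)/2})$ well-separated copies of the base construction, so that both the forced weight and the forced edge count scale linearly with $n$.
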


For lightness in dimension $d=2$, this improves the earlier bound of $\Omega(\eps^{-1}/\log \eps^{-1})$ by Le and Solomon~\cite{le2019truly} by a logarithmic factor; and it is the first lower bound in dimensions $d\geq 3$. The point set $S$ in Theorem~\ref{thm:lb} is fairly simple: It consists of two square grids in two parallel hyperplanes in $\mathbb{R}^d$. However, our lower-bound analysis is significantly simpler than that of~\cite{le2019truly}. In particular, our analysis does not depend on planarity, and it generalizes to higher dimensions. The key new insight pertains to a geometric property of Steiner $(1+\eps)$-spanners: If the length of an $ab$-path $S$ between points $a,b\in\mathbb{R}^d$ is at most $(1+\eps)\|ab\|$, then ``most'' of the edges of $S$ are almost parallel to $ab$. We expand on this idea in Section~\ref{sec:pre}.

Then, in Section~\ref{sec:redux} we prove the following theorem on light spanners.

\begin{restatable}{theorem}{upperboundtheorem}
\label{thm:upper}
For every set $S$ of $n$ points in Euclidean plane and every $\eps\in (0,1)$,
there exists a Steiner $(1+\eps)$-spanner of
lightness $O(\eps^{-1})$.
\end{restatable}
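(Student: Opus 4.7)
The plan is to decompose the stretch requirement along $O(\eps^{-1/2})$ canonical directions and sum their costs. Set $\Theta = c\sqrt{\eps}$ for a small constant $c>0$ and partition the set of unit directions in the plane into $K=O(\eps^{-1/2})$ cones $C_1,\dots,C_K$ of aperture $\Theta$. A \emph{directional $(1+\eps)$-spanner} for $C_k$ is a network $H_k$ that $(1+\eps)$-approximates $\|ab\|$ for every pair $a,b\in S$ with $\vec{ab}\in C_k$; the union $\bigcup_k H_k$ is then a Steiner $(1+\eps)$-spanner for $S$. It therefore suffices to construct each $H_k$ with $w(H_k) = O(\eps^{-1/2})\cdot w(\MST(S))$.

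Fix such a cone and rotate so that it is centered along the positive $x$-axis. Any relevant pair $(a,b)$ then satisfies $|a_y - b_y|\le \Theta\,\|ab\|$, so routing $(a,b)$ through nearly horizontal ``highways'' together with short transverse access links incurs only $1+O(\eps)$ overhead provided each access link has length $O(\eps)\|ab\|$. As a local building block I would introduce a \emph{generalized shallow light tree}: a Steiner network whose ``root'' is a horizontal segment rather than a single vertex, connecting a finite set of input points to the root via a family of short access paths, with a weight bound depending on the horizontal extent of the root and the access-path budget. This generalizes the single-source shallow light trees of Solomon~\cite{Solomon15} to a line-segment source and is the structure promised in the abstract that ``may be of independent interest.''

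The directional spanner $H_k$ is then assembled from one generalized SLT per \emph{window} in a modified partition of the plane. The windows are axis-aligned rectangles of aspect ratio $\Theta^{-1}=\eps^{-1/2}$ in the cone direction, laid out at a dyadic hierarchy of scales indexed by MST edge length; within each nonempty window $W$, a horizontal trunk is chosen so that every point of $S\cap W$ sits within the required vertical budget, and a generalized SLT connects the two. The elongated aspect ratio is what ensures that every relevant pair $(a,b)$ with $\|ab\|$ at dyadic scale $2^i$ can be routed through a single window at scale $2^i$ whose trunk is nearly parallel to $\vec{ab}$. The modification over the window scheme of~\cite{le2020light} is this direction-adapted shape together with a scale-sensitive trunk placement inside each window.

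The main obstacle I anticipate is the weight amortization across scales without the $\log n$ overhead of~\cite{le2020light}. I expect to replace the per-scale union bound (which costs a $\log n$ factor) by a direct charging argument: the trunk weight at scale $2^i$ is charged against MST edges of length comparable to $2^i$ that intersect a slight enlargement of the corresponding window, and a geometric packing estimate then shows that each MST edge is charged only $O(\eps^{-1/2})$ times across all relevant windows and scales. Given this charging, $w(H_k)=O(\eps^{-1/2})\,w(\MST(S))$, and the outer sum over the $O(\eps^{-1/2})$ cones yields total lightness $O(\eps^{-1/2})\cdot O(\eps^{-1/2}) = O(\eps^{-1})$, matching the lower bound obtained in Theorem~\ref{thm:lb} for $d=2$.
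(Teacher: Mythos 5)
Your outer reduction is exactly the one used in the paper: partition the directions into $O(\eps^{-1/2})$ cones of aperture $\Theta(\sqrt{\eps})$, build a directional $(1+\eps)$-spanner of weight $O(\eps^{-1/2})\cdot\|\MST(S)\|$ per cone, and take the union (cf.\ Lemma~\ref{lem:dir} and the proof of Theorem~\ref{thm:UB}). The gap is in the inner construction. Your dyadic hierarchy of direction-adapted windows with a trunk per nonempty window is, in essence, the multi-scale scheme of Le and Solomon that yields $O(\eps^{-1}\log\Delta)$, and the entire difficulty of the theorem is eliminating the sum over scales. The charging you propose --- trunk weight at scale $2^i$ charged to MST edges of length comparable to $2^i$ intersecting an enlarged window --- does not hold up: take two vertical segments at horizontal distance $1$, each carrying densely spaced points over a long vertical extent $H\gg 1$, and a cone of nearly horizontal directions. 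Every scale-$1$ window in the strip between the segments must carry a trunk (it serves cross pairs at scale $\approx 1$), there are $\Theta(H\eps^{-1/2})$ such windows, but the MST has only one edge of length comparable to $1$; all other MST edges have tiny length and lie far from most of the trunks' scale. Retreating to ``charge all MST weight inside the window'' reintroduces the cross-scale double counting (a short MST edge lies in one window at \emph{every} scale above its length), which is precisely where the $\log\Delta$ comes from, and you give no mechanism (geometric decay across scales, or a bound of $O(1)$ charged scales per edge per direction) to prevent it. A secondary issue: your stretch condition ``access links of length $O(\eps)\|ab\|$'' is incompatible with windows of transverse extent $\Theta(\sqrt{\eps}\,2^i)$ and a single trunk; the access must itself be an angle-bounded or SLT-type path (overhead quadratic in the angle, as in Lemma~\ref{lem:angle2}), and the weight bound of your ``generalized SLT'' with a segment root is never pinned down, although it is exactly what the amortization would have to rest on.

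For contrast, the paper avoids any scale hierarchy: per direction it starts from a rectilinear MST plus bounding square (weight $O(\|\MST(S)\|)$), applies a window partition of each face into histograms and a sweep-line refinement into \emph{tame} and \emph{thin} histograms with $\sum_F \per(F)=O(\eps^{-1/2}\|\MST(S)\|)$ and, crucially, $\sum_F \hper(F)=O(\|\MST(S)\|)$ (Lemma~\ref{lem:tiling}); then each face gets a directional spanner of weight $O(\per(F)+\eps^{-1/2}\hper(F))$ (Lemma~\ref{lem:hist}), built from SLTs generalized to staircase and $y$-monotone paths together with a shadow-based recursion whose subproblem widths decay geometrically (Equation~\eqref{eq:width0}). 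The horizontal-perimeter budget plays the role your charging argument was meant to play, and it is the step your proposal is missing.
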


This result improves on an earlier bound of $O(\eps^{-1}\log \Delta)$ by Le and Solomon~\cite{le2020light}, where $\Delta$ is the \emph{spread} of the point set, defined as the ratio between the maximum and minimum distance between a pair of points. Note that $\Delta\geq n^{\Omega(1/d)}$ in a metric space of doubling dimension $d$.

The tight bound in Theorem~\ref{thm:upper} relies on three new ideas, which may be of independent interest:
First, we generalize Solomon's \textsf{SLT}s to points on a staircase path or a monotone rectilinear path (Section~\ref{sec:SLT}).
Second,  we reduce the proof of Theorem~\ref{thm:upper} to the construction of ``directional'' spanners, in each of $\Theta(\eps^{-1/2})$ directions, where it is enough to establish the stretch factor $1+\eps$ for point pairs $s,t\in S$ where the direction of $st$ is in an interval of size $\sqrt{\eps}$ (Section~\ref{sec:redux}).
Combining the first two ideas, we show how to construct light directional spanners for points on a staircase path (Section~\ref{sec:staircases}).
In each direction, we start with a rectilinear MST of $S$, and augment it into a directional spanner.
We refine the classical window partition of a rectilinear polygon into histograms by subdividing each histogram into special histograms (called tame histograms), whose boundary does not oscillate wildly; this is the final piece of the puzzle. These histograms are sufficiently flexible to keep the total weight of the subdivision under control, and we can construct directional $(1+\eps)$-spanners for each face of such a subdivision (Sections~\ref{sec:tame}--\ref{sec:thin}).

\section{Preliminaries}
\label{sec:pre}

Let $d\geq 2$ be an integer, and $S$ a set of $n$ points in $\mathbb{R}^d$. For $a,b\in \mathbb{R}^d$, the Euclidean distance between $a$ and $b$ is denoted by $\|ab\|$. For a set $E$ of line segments in $\mathbb{R}^d$, let $\|E\|=\sum_{e\in E}\|e\|$ be the total weight of all segments in $E$. For a geometric graph $G=(S,E)$, where $S\subset \mathbb{R}^d$, we also use the notation $\|G\|=\|E\|$, which is the Euclidean weight of graph $G$.

We briefly review a few geometric primitives in $d$-space.
For $a,b\in \mathbb{R}^d$, the locus of points $c\in \mathbb{R}^d$ with $\|ac\|+\|cb\|\leq (1+\eps)\|ab\|$ is an ellipsoid $\mathcal{E}_{ab}$ with foci $a$ and $b$, and major axis of length $(1+\eps)\|ab\|$; see Fig.~\ref{fig:ellipse}(a).
Note that all $d-1$ minor axes of $\mathcal{E}_{ab}$ are $\sqrt{(1+\eps)^2-1}\|ab\|=\sqrt{2\eps+\eps^2}\|ab\|<\sqrt{3\eps}\|ab\|$ when $\eps<1$.
In particular, the aspect ratio of the minimum bounding box of $\mathcal{E}_{ab}$ is roughly  $\sqrt{\eps}$. By the triangle inequality, $\mathcal{E}_{ab}$ contains every  $ab$-path of weight at most $(1+\eps)\|ab\|$.

\begin{figure}[htbp]
 \centering
 \includegraphics[width=\textwidth]{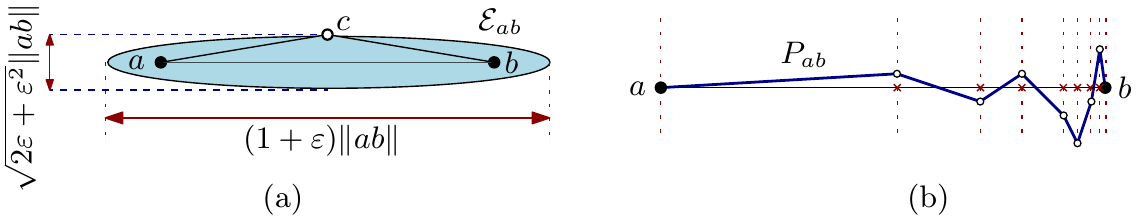}
 \caption{(a) An ellipse $\mathcal{E}_{ab}$ with foci $a$ and $b$, and major axis $(1+\eps)\|ab\|$.
 (b) A monotone $ab$-path $P_{ab}$, and the projections of its edges to $ab$.}
    \label{fig:ellipse}
\end{figure}

The unit vectors in $\mathbb{R}^d$ are on the $(d-1)$-sphere $\mathbb{S}^{d-1}$; the direction vectors of a line in $\mathbb{R}^d$ can be represented by vectors of a hemisphere. The \emph{angle} between two unit vectors, $\overrightarrow{u}_1$ and $\overrightarrow{u}_2$ is
$\angle(\overrightarrow{u}_1,\overrightarrow{u}_2)=\arccos(\overrightarrow{u}_1\cdot \overrightarrow{u}_2)\in (-\pi,\pi)$. Between two (undirected) edges $e_1$ and $e_2$ with unit direction vectors $\pm\overrightarrow{u}_1$ and $\pm\overrightarrow{u}_2$, we define the angle as $\angle (e_1,e_2)=\arccos | \overrightarrow{u}_1\cdot \overrightarrow{u}_2|\in [0,\pi)$. A path $(v_0,v_1,\ldots , v_m)$ in $\mathbb{R}^d$ is \emph{monotone in direction $\overrightarrow{u}$} if $\overrightarrow{v_{i-1}v_i}\cdot \overrightarrow{u}\geq 0$ for all $i\in \{1,\ldots, m\}$; and it is simply \emph{monotone} if it is monotone in direction $\overrightarrow{v_0v_m}$. Let $\proj_{ab}(e)$ denote the orthogonal projection of an edge $e$ to the supporting line of $ab$, see Fig.~\ref{fig:ellipse}(b); and note that $\|\proj_{ab}(e)\|=\|e\|\cos\angle(ab,e)$.

In Euclidean plane ($d=2$), we can parameterize the directions by angles.
The \emph{direction} of a line segment $ab$ in $\mathbb{R}^2$, denoted $\mathrm{dir}(ab)$, is the minimum counterclockwise angle $\alpha\in [0,\pi)$ that rotates the $x$-axis to be parallel to $ab$. The set of possible directions $[0,\pi)$ is homeomorphic to the unit circle $\mathbb{S}^1$, and an interval $(\alpha,\beta)$ of directions corresponds to the counterclockwise arc of $\mathbb{S}^1$ from $\alpha\pmod{\pi}$ to $\beta\pmod{\pi}$.
A path in $\mathbb{R}^2$ is \emph{$x$-monotone} (resp., \emph{$y$-monotone}) if it is monotone in direction $\overrightarrow{u}=(1,0)$ (resp., $\overrightarrow{u}=(0,1)$). A \emph{staircase path} is simple path that is both $x$- and $y$-monotone.
The \emph{width} and \emph{height} of a path or a polygon $P$ is the Euclidean length of its orthogonal projection to the $x$-axis and $y$-axis, respectively.

\subparagraph{Angle-Bounded Paths.}
For $\delta\in (0,\pi/2]$, a polygonal path $(v_0,\ldots, v_m)$ in $\mathbb{R}^d$ is \emph{$(\theta\pm \delta)$-angle-bounded} if the direction of every segment $v_{i-1}v_i$ is in the interval $[\theta-\delta,\theta+\delta]$.
Borradaile and Eppstein~\cite[Lemma~5]{BorradaileE15} observed that the
weight of a $(\theta\pm \delta)$-angle-bounded $st$-path is at most $(1+O(\delta^2))\|st\|$.
We prove this observation in a more precise form in $\mathbb{R}^d$.
The quadratic growth rate in $\delta$ is due to the Taylor estimate $\sec(x)=\frac{1}{\cos(x)}\leq 1+x^2$ for $0\leq x\leq \frac{\pi}{4}$.

\begin{lemma}\label{lem:angle2}
Let $a,b\in \mathbb{R}^d$ and let $P=(v_0,v_1, \ldots ,v_m)$ be an $ab$-path such that $P$ is monotonic in direction $\overrightarrow{ab}$ and $\angle(ab,v_{i-1}v_i)\leq \delta\leq \frac{\pi}{4}$, for $i=1,\ldots ,m$. Then $\|P\|\leq (1+\delta^2)\|ab\|$.
\end{lemma}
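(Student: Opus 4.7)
The plan is to exploit the identity $\|\proj_{ab}(e)\| = \|e\|\cos\angle(ab,e)$ (already noted in the preliminaries), together with the monotonicity hypothesis which lets us sum the projected lengths exactly.

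First, I would observe that since $P$ is monotone in the direction $\overrightarrow{ab}$, the projections $\proj_{ab}(v_{i-1}v_i)$ are nonoverlapping segments (sharing only endpoints) that together tile the segment $ab$. Thus
\[
\sum_{i=1}^{m}\|\proj_{ab}(v_{i-1}v_i)\| \;=\; \|ab\|.
\]
Combining this with $\|\proj_{ab}(v_{i-1}v_i)\|=\|v_{i-1}v_i\|\cos\angle(ab,v_{i-1}v_i)$ gives
\[
\sum_{i=1}^{m}\|v_{i-1}v_i\|\cos\angle(ab,v_{i-1}v_i) \;=\; \|ab\|.
\]

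Next, I would use the angle bound. Since $0\le\angle(ab,v_{i-1}v_i)\le\delta\le\pi/4$ and $\cos$ is positive and decreasing on $[0,\pi/4]$, we have $\cos\angle(ab,v_{i-1}v_i)\ge \cos\delta>0$, hence
\[
\|v_{i-1}v_i\| \;\le\; \sec(\delta)\,\|\proj_{ab}(v_{i-1}v_i)\|.
\]
Summing over $i$ yields $\|P\|\le\sec(\delta)\,\|ab\|$.

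It remains to show $\sec(\delta)\le 1+\delta^2$ for $\delta\in[0,\pi/4]$, which is the only genuinely analytic step; I expect this to be the main (though very minor) obstacle. A clean way is to verify that $f(\delta):=1+\delta^2-\sec(\delta)$ satisfies $f(0)=0$ and $f'(\delta)=2\delta-\sec(\delta)\tan(\delta)\ge 0$ on $[0,\pi/4]$; the latter follows because $\sec(\delta)\tan(\delta)\le \sqrt{2}\cdot 1=\sqrt{2}$ at $\delta=\pi/4$ while $2\delta=\pi/2>\sqrt 2$, and a short convexity/monotonicity check (or a direct term-by-term comparison of the Taylor series of $\sec\delta=1+\tfrac{\delta^2}{2}+\tfrac{5\delta^4}{24}+\cdots$ against $1+\delta^2$ on $[0,\pi/4]$) closes the inequality. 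Plugging this into $\|P\|\le \sec(\delta)\|ab\|$ gives $\|P\|\le (1+\delta^2)\|ab\|$, as required.
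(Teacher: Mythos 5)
Your proposal is correct and follows essentially the same route as the paper: project the edges onto the line through $a$ and $b$, use monotonicity so the projections sum to $\|ab\|$, bound each edge by $\sec\delta$ times its projection, and apply $\sec\delta\leq 1+\delta^2$ on $[0,\pi/4]$ (which the paper simply quotes as a Taylor estimate, while you verify it; note the pointwise comparison at $\delta=\pi/4$ alone does not give $f'\geq 0$, but your convexity check or the series comparison does close it).
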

\begin{proof}
For $i=0,\ldots , m$, let $u_i$ be the orthogonal projection of $v_i$ to the line $ab$, and let $\alpha_i=\angle(ab,v_{i-1}v_i)$. Since $P$ is monotonic in direction $\overrightarrow{ab}$, then
$\|P\| =\sum_{i=1}^m \|v_{i-1}v_i\|
=\sum_{i=1}^m \|u_{i-1}u_i\|\sec \angle(ab,v_{i-1}v_i)
\leq \sec \delta\cdot \sum_{i=1}^m \|u_{i-1}u_i\|
\leq  (1+\delta^2)\|ab\|$, as claimed.
%
\end{proof}

\subparagraph{Characterization for Short $ab$-Paths.}
Let $a,b\in \mathbb{R}^d$, and let $P_{ab}$ be a polygonal $ab$-path of weight at most $(1+\eps)\|ab\|$. We show that ``most'' edges along $P_{ab}$ must be ``nearly'' parallel to $ab$. Specifically, for an angle $\alpha\in [0,\pi/2)$, we distinguish between two types of edges in $P_{ab}$. Denote by $E(\alpha)$ the set of edges $e$ in $P_{ab}$ with $\angle(ab,e)<\alpha$; and let $F(\alpha)$ be the set of all other edges of $P_{ab}$.
Clearly, we have $\|P_{ab}\|=\|E(\alpha)\|+\|F(\alpha)\|$ for all $\alpha$.

\begin{lemma}\label{lem:parallel}
Let $a,b\in \mathbb{R}^d$ and let $P_{ab}$ be an $ab$-path of weight $\|P_{ab}\|\leq (1+\eps)\|ab\|$. Then for every $i\in \{1,\ldots, \lfloor \frac{\pi}{2}/ \sqrt{\eps}\rfloor\}$,
we have $\|E(i\cdot \sqrt{\eps})\|\geq (1-2/i^2)\,\|ab\|$.
\end{lemma}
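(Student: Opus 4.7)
The plan is to translate the length slack $\|P_{ab}\|-\|ab\|\le\eps\|ab\|$ into a quantitative bound on how much of the path can deviate from the direction $\overrightarrow{ab}$. The main gadget is the orthogonal projection onto the supporting line of $ab$: for each oriented edge $e=v_{j-1}v_j$ of $P_{ab}$, set $\sigma(e)=\overrightarrow{v_{j-1}v_j}\cdot\hat u$ with $\hat u=\overrightarrow{ab}/\|ab\|$, and observe that $|\sigma(e)|=\|\proj_{ab}(e)\|=\|e\|\cos\alpha_e$, where $\alpha_e=\angle(ab,e)$.

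The first step is a projection identity that does \emph{not} require monotonicity of $P_{ab}$. Telescoping along the path gives $\sum_e \sigma(e)=\hat u\cdot\overrightarrow{ab}=\|ab\|$, so the triangle inequality on the scalar sum yields $\sum_e \|e\|\cos\alpha_e=\sum_e|\sigma(e)|\ge\|ab\|$. Subtracting this from $\|P_{ab}\|=\sum_e\|e\|$ converts the weight bound into a weighted sum of angular deviations,
\[
\sum_{e\in P_{ab}}\|e\|(1-\cos\alpha_e)\;\le\;\|P_{ab}\|-\|ab\|\;\le\;\eps\,\|ab\|.
\]

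For the second step I would isolate the edges in $F(i\sqrt\eps)$. By definition, every such edge satisfies $\alpha_e\ge i\sqrt\eps$, and since $\cos$ is decreasing on $[0,\pi/2]$, $1-\cos\alpha_e\ge 1-\cos(i\sqrt\eps)$. Plugging this into the inequality above yields $\|F(i\sqrt\eps)\|\le\eps\|ab\|/(1-\cos(i\sqrt\eps))$. Combining this with an elementary Taylor-type lower bound $1-\cos x\ge \frac{x^{2}}{2}(1-x^{2}/12)$ on $[0,\pi/2]$, together with $i\sqrt\eps\le\pi/2$ coming from the range of $i$, gives $\|F(i\sqrt\eps)\|\le(2/i^{2})\|ab\|$ after absorbing the bounded quartic correction. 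Invoking $\|P_{ab}\|\ge\|ab\|$ one last time,
\[
\|E(i\sqrt\eps)\|\;=\;\|P_{ab}\|-\|F(i\sqrt\eps)\|\;\ge\;\|ab\|-\|F(i\sqrt\eps)\|\;\ge\;\bigl(1-2/i^{2}\bigr)\|ab\|.
\]

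The main subtlety is the trigonometric estimate in the last step: the clean inequality $1-\cos x\ge x^{2}/2$ is actually \emph{false} (the true Taylor expansion makes the quartic term negative), so one has to use a sharper two-sided estimate and check that the $x^{4}/24$ correction does not spoil the constant $2$ across the whole range $i\sqrt\eps\in[0,\pi/2]$. The other ingredients — the projection identity (which crucially avoids assuming monotonicity of $P_{ab}$), the monotonicity of $\cos$, and the rearrangement into $\|E\|$ versus $\|F\|$ — are purely mechanical.
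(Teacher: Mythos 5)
Your overall strategy is the same as the paper's (project onto the line through $a,b$, split the edges into $E(i\sqrt\eps)$ and $F(i\sqrt\eps)$, and use a Taylor-type trigonometric estimate on the $F$-edges), but the final quantitative step fails, and it fails exactly at the point you flag as the ``main subtlety.'' From $\sum_{e}\|e\|(1-\cos\alpha_e)\le \eps\|ab\|$ you deduce $\|F(i\sqrt\eps)\|\le \eps\|ab\|/(1-\cos(i\sqrt\eps))$, and then you need $\|F(i\sqrt\eps)\|\le (2/i^2)\|ab\|$, i.e.\ $1-\cos x\ge x^2/2$ for $x=i\sqrt\eps$. That inequality is false for every $x>0$, and the sharper bound $1-\cos x\ge \tfrac{x^2}{2}(1-\tfrac{x^2}{12})$ cannot ``absorb'' the correction: it only gives $\|F(i\sqrt\eps)\|\le \frac{2}{i^2}\cdot\frac{1}{1-x^2/12}\,\|ab\|$, and since $x$ may be as large as $\pi/2$ this is roughly $\frac{2.5}{i^2}\|ab\|$, strictly worse than $\frac{2}{i^2}\|ab\|$ for all $i$. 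So your chain proves only $\|E(i\sqrt\eps)\|\ge\bigl(1-\frac{2}{i^2(1-x^2/12)}\bigr)\|ab\|$, not the stated bound with constant $2$. (A weaker constant would still suffice for the paper's application with $i=2$, but it does not prove the lemma as stated.)

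The loss occurs when you pass from the projection inequality to $\|E\|\ge\|ab\|-\|F\|$, discarding the fact that $F$-edges project with a factor at most $\cos(i\sqrt\eps)$. Keep it: $\|ab\|\le \sum_{e\in E}\|\mathrm{proj}_{ab}(e)\|+\sum_{e\in F}\|\mathrm{proj}_{ab}(e)\|\le \|E(i\sqrt\eps)\|+\cos(i\sqrt\eps)\,\|F(i\sqrt\eps)\|$. Combining this with your bound $\|F(i\sqrt\eps)\|\,(1-\cos(i\sqrt\eps))\le\eps\|ab\|$ reduces the target to $\frac{x^2\cos x}{1-\cos x}\le 2$, i.e.\ $\cos x\,(2+x^2)\le 2$, equivalently $\sec x\ge 1+\tfrac{x^2}{2}$, which \emph{is} true on $[0,\pi/2)$ (and the boundary case $x=\pi/2$ is trivial). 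This is precisely the estimate the paper uses: it measures the excess of an $F$-edge multiplicatively relative to its projection, via $\|e\|\ge\|\mathrm{proj}_{ab}(e)\|(1+\tfrac{i^2\eps}{2})$, rather than relative to $\|e\|$ via $1-\cos\alpha_e$; the paper then runs the same computation (phrased as a proof by contradiction) to reach the constant $2$. With that one change your direct argument goes through and matches the lemma exactly.
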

\begin{proof}
Suppose, to the contrary, that $\|E(i\cdot \sqrt{\eps})\|<(1-2/i^2)\,\|ab\|$ for some $i\in \{1,\ldots, \lfloor\frac{\pi}{2}/\sqrt{\eps}\rfloor\}$. We have
\begin{equation}\label{eq:proj}
\sum_{e\in E(i\,\sqrt{\eps})\cup F(i\,\sqrt{\eps})}  \| \text{proj}_{ab}(e)\|\geq \|ab\|,
\end{equation}
which implies
\begin{align}\label{eq:proj2}
\sum_{e\in F(i\,\sqrt{\eps})} \| \text{proj}_{ab}(e)\|
    &\geq  \|ab\|-\sum_{e\in E(i\,\sqrt{\eps})} \| \text{proj}_{ab}(e)\| \\
    &\geq  \|ab\|-\sum_{e\in E(i\,\sqrt{\eps})} \| e\| \nonumber\\
    &= \|ab\|-\|E(i\,\sqrt{\eps})\|. \nonumber
\end{align}
Recall that for every edge $e\in F(i\,\sqrt{\eps})$, we have $\angle(e,ab)\geq i\cdot \sqrt{\eps}$. Using the Taylor estimate $\frac{1}{\cos(x)}\geq 1+\frac{x^2}{2}$, for every $e\in F(i\,\sqrt{\eps})$, we obtain
\[|e\|\geq \frac{\|\text{proj}_{ab}(e)\|}{\cos(i\cdot \sqrt{\eps})}
\geq \|\text{proj}_{ab}(e)\|\left(1+\frac{(i\,\sqrt{\eps})^2}{2}\right)
= \|\text{proj}_{ab}(e)\|\left(1+\frac{i^2\,\eps}{2}\right),\]
Combined with \eqref{eq:proj2}, this yields
\begin{align*}
\|P_{ab}\|
    &= \sum_{e\in E(i\,\sqrt{\eps})} \|e\| +  \sum_{e\in F(i\,\sqrt{\eps})} \|e\| \\
    &\geq \sum_{e\in E(i\,\sqrt{\eps})} \|e\| +  \sum_{e\in F(i\,\sqrt{\eps})} \|\text{proj}_{ab}(e)\|\left(1+\frac{i^2\,\eps}{2}\right) \\
    &\geq \|E(i\,\sqrt{\eps})\| +\big(\|ab\|-\|E(i\,\sqrt{\eps})\|\big) \left(1+\frac{i^2\,\eps}{2}\right)\\
    &= \left(1+\frac{i^2\,\eps}{2}\right) \|ab\|
        - \frac{i^2\,\eps}{2}\, \|E(i\,\sqrt{\eps})\|\\
    &> \left(1+\frac{i^2\,\eps}{2}\right)  \|ab\|
        -\frac{i^2\,\eps}{2} \left(1-\frac{2}{i^2}\right)\,\|ab\|\\
    &\geq \left(1+\frac{i^2\,\eps}{2}\right) \|ab\| -\left(\frac{i^2}{2}-1\right)\eps\,\|ab\| \\
    &= (1+\eps) \|ab\|,
\end{align*}
which is a contradiction, and completes the proof.
\end{proof}

We use Lemma~\ref{lem:parallel} in the analysis of our lower bound construction in Section~\ref{sec:lower}.
We can also derive a converse of Lemma~\ref{lem:parallel} for monotone $ab$-paths. An $ab$-path is \emph{monotone} if $\angle(\overrightarrow{ab},\overrightarrow{e})>0$ for every directed edge $\overrightarrow{e}$ of $P_{ab}$, where the path is directed from $a$ to $b$. Equivalently, an $ab$-path is monotone if it crosses every hyperplane orthogonal to $ab$ at most once.
We show that if the angle $\angle(\overrightarrow{ab},\overrightarrow{e})$ is sufficiently small for ``most'' of the directed edges $\overrightarrow{e}$ of $P_{ab}$, then $\|P_{ab}\|\leq (1+\eps)\|ab\|$.

\begin{lemma}\label{lem:parallel+}
For every $\delta>0$, there is a $\kappa>0$ with the following property.
For $a,b\in \mathbb{R}^d$ and a monotone $ab$-path $P_{ab}$,
if $\|F(i\cdot\sqrt{\eps\kappa})\|\leq \|P_{ab}\|/i^{2+\delta}$ for all $i\in\{1,\ldots ,\lceil \frac{\pi}{2}/\sqrt{\eps\kappa}\rceil\}$, then $\|P_{ab}\|\leq (1+\eps)\|ab\|$.
\end{lemma}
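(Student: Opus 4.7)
\smallskip

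\noindent\textbf{Proof plan.} The plan is to bound the excess length $\|P_{ab}\|-\|ab\|$ by a convergent weighted sum of $\|F(\cdot)\|$'s. Since $P_{ab}$ is monotone in direction $\overrightarrow{ab}$, the projections of its edges onto the line of $ab$ tile the segment $ab$, so $\sum_{e\in P_{ab}}\|\proj_{ab}(e)\|=\|ab\|$. Writing $\|e\|-\|\proj_{ab}(e)\|=\|e\|(1-\cos\angle(ab,e))$ and summing yields the identity
\begin{equation*}
\|P_{ab}\|-\|ab\|\;=\;\sum_{e\in P_{ab}}\|e\|\bigl(1-\cos\angle(ab,e)\bigr).
\end{equation*}
I will apply the Taylor bound $1-\cos x\leq x^2/2$ to each summand and then group the edges according to how steep they are.

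Set $\beta=\sqrt{\eps\kappa}$ and $N=\lceil \frac{\pi}{2}/\beta\rceil$. I would partition the edges into \emph{shells} $S_i$ $(1\leq i\leq N)$ where $S_i$ consists of edges $e$ with $\angle(ab,e)\in [(i-1)\beta,i\beta)$. Each edge in $S_i$ contributes at most $\|e\|\cdot (i\beta)^2/2=\|e\|\cdot i^2\eps\kappa/2$ to the excess, so
\begin{equation*}
\|P_{ab}\|-\|ab\|\;\leq\;\tfrac{\eps\kappa}{2}\sum_{i=1}^{N} i^2\,\|S_i\|.
\end{equation*}
Since $\|S_i\|=\|F((i-1)\beta)\|-\|F(i\beta)\|$ (with the convention $\|F(0)\|=\|P_{ab}\|$), Abel summation by parts converts $\sum_i i^2\|S_i\|$ into
$\|P_{ab}\|+\sum_{j=1}^{N-1}(2j+1)\,\|F(j\beta)\|$, at the cost of dropping a nonpositive tail term.

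At this point the hypothesis $\|F(j\beta)\|\leq \|P_{ab}\|/j^{2+\delta}$ feeds in directly: each term is at most $(2j+1)\|P_{ab}\|/j^{2+\delta}\leq 3\|P_{ab}\|/j^{1+\delta}$, and since $\delta>0$ the sum $\sum_{j\geq 1}j^{-(1+\delta)}$ converges to a constant $C_\delta$ depending only on $\delta$ (this is where $\delta>0$ is used, and it is the reason one cannot push a similar statement through with an exponent of $2$). Combining everything gives a self-bounding inequality
\begin{equation*}
\|P_{ab}\|-\|ab\|\;\leq\;\tfrac{\eps\kappa}{2}\,(1+3C_\delta)\,\|P_{ab}\|.
\end{equation*}

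Finally, I would choose $\kappa=\kappa(\delta)>0$ small enough that $\tfrac{\kappa}{2}(1+3C_\delta)\leq 1/4$; then solving the displayed inequality for $\|P_{ab}\|$ yields $\|P_{ab}\|\leq \|ab\|/(1-\eps/4)\leq (1+\eps)\|ab\|$ for $\eps\leq 1$ (and the case $\eps>1$ is immediate from the triangle inequality together with the monotonicity of $P_{ab}$, which forces $\|P_{ab}\|\leq \|ab\|\sec(\pi/2-\beta)$ in the worst case, handled separately). The main technical obstacle is the bookkeeping in the summation-by-parts step and choosing $\kappa$ so that the telescoped sum is absorbed into the $(1+\eps)$ factor; everything else is arithmetic around the Taylor estimate.
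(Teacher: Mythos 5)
Your proposal is correct and follows essentially the same route as the paper's proof: the identity $\|P_{ab}\|-\|ab\|=\sum_e\|e\|(1-\cos\angle(ab,e))$ from monotonicity, the Taylor bound, the decomposition into angle shells of width $\sqrt{\eps\kappa}$, summation by parts against $\|F(i\sqrt{\eps\kappa})\|$, the convergent series $\sum_i (2i+1)/i^{2+\delta}$ enabled by $\delta>0$, and a choice of $\kappa=\kappa(\delta)$ yielding a self-bounding inequality that gives $\|P_{ab}\|\leq(1+\eps)\|ab\|$. The only differences are cosmetic bookkeeping of constants, so no further comment is needed.
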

\begin{proof}
Let $P_{ab}$ be an $ab$-path with edge set $E$. Note that, by definition, $F(0)=E$.
For angles $0\leq \alpha<\beta\leq \pi/2$, let $E(\alpha,\beta)$ denote the set of edges $e\in E$ with $\alpha\leq \angle (ab,e)<\beta$. For convenience, we put $m=\lceil \frac{\pi}{2}/\sqrt{\eps\kappa}\rceil$. Using the Taylor estimate $\cos x\geq 1-x^2/2$, we can bound the excess weight of $P_{ab}$ as follows.
\begin{align*}
\|P_{ab}\|-\|ab\|
&= \sum_{e\in E} \|e\| - \sum_{e\in E} \|\proj_{ab}e\|\\
&= \sum_{e\in E} \|e\| (1-\cos \angle (ab,e))\\
&\leq \sum_{i=1}^m \|E((i-1)\sqrt{\eps\kappa},i\,\sqrt{\eps\kappa})\| (1-\cos (i\cdot\sqrt{\eps\kappa})) \\
&\leq \sum_{i=1}^m \|E((i-1)\sqrt{\eps\kappa},i\,\sqrt{\eps\kappa})\| \cdot \frac{i^2\,\eps\kappa}{2} \\
&\leq \sum_{i=1}^m \left(\|F((i-1)\sqrt{\eps})\| - \|F(i\,\sqrt{\eps})\|\right)
    \cdot \frac{i^2\,\eps\kappa}{2} \\
&= F(0)\cdot \frac{1^2\eps\kappa}{2}+\sum_{i=1}^m \|F(i\,\sqrt{\eps\kappa})\|
    \left(\frac{(i+1)^2\,\eps\kappa}{2}-  \frac{i^2\,\eps\kappa}{2}\right) \\
&\leq \|P_{ab}\|\cdot \frac{\eps\kappa}{2}+\sum_{i=1}^m \frac{\|P_{ab}\|}{i^{2+\delta}}\cdot (2i+1)\eps\kappa \\
&\leq \eps\kappa\cdot \|P_{ab}\| \left(\frac12+\sum_{i=1}^\infty \frac{2i+1}{i^{2+\delta}}\right).
\end{align*}
For $\kappa= \frac12(\frac12+\sum_{i=1}^\infty (2i+1)/2^{2+\delta})^{-1}$, we obtain
\[\|P_{ab}\|-\|ab\|\leq \frac{\eps}{2}\,\|P_{ab}\|,\]
which readily implies $\|P_{ab}\|\leq (1-\eps/2)^{-1}\|ab\|<(1+\eps)\|ab\|$,
as required.
\end{proof}

The criteria in Lemma~\ref{lem:parallel+} can certify that a geometric graph $G$ is a Euclidean Steiner $(1+\eps)$-spanner for a point set $S$. Intuitively, a geometric graph is a Steiner $(1+\eps)$-spanner for $S$ it it contains, for all point pairs $a,b\in S$, a monotone $ab$-path in which the majority of edges $e$ satisfy $\angle(ab,e)\leq O(\sqrt{\eps})$, with exceptions quantified by Lemma~\ref{lem:parallel+}. This property has already been used implicitly by Solomon~\cite{Solomon15} in the single-source setting, for the design of shallow-light trees.
We use shallow-light trees in our upper bound (Section~\ref{sec:redux}), instead of Lemma~\ref{lem:parallel+}. However, the characterization of $ab$-paths of weight at most $(1+\eps)\|ab\|$, presented in this section, may be of independent interest.

\old{
\begin{figure}[htbp]
 \centering
 \includegraphics[width=0.98\textwidth]{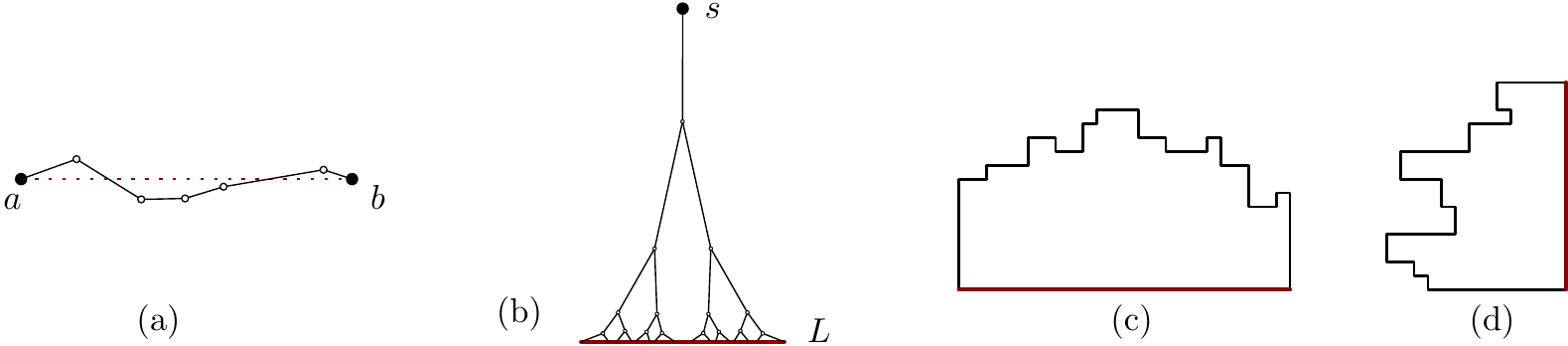}
 \caption{(a) A $(0\pm\delta)$-angle-bounded path.
(b) A shallow-light tree between a source $s$ and a horizontal line segment $L$.
(c)--(d) An $x$- and a $y$-monotone histogram.}
    \label{fig:1}
\end{figure}
}

\section{Lower Bounds}
\label{sec:lower}

In this section we prove the following result.

\lowerboundth*


\begin{proof}
First we establish the result for a point set of size $\Theta_d(\eps^{(1-d)/2})$ and then generalize to arbitrary $n$. We may assume that $0<\eps<(8d)^{-2}$.
Let $Q=[0,1]^d$ be a unit cube in $\mathbb{R}^d$; see Fig.~\ref{fig:twogrids}.
The point set $S$ will consist of two square grids in two opposite faces of $Q$, with roughly $8d\cdot\sqrt{\eps}$ spacing. Specifically, let $\varrho=\lceil \frac{1}{8d\cdot\sqrt{\eps}}\rceil$ and  consider the lattice $L=\varrho^{-1} \cdot \mathbb{Z}^d$. Let $Q_0$ and $Q_1$, respectively, be the two faces of $Q$ orthogonal to the $x_d$-axis. Now let $S_0=L\cap Q_0$ and $S_1=L\cap Q_1$. We have $|S_0|=|S_1|=(\varrho+1)^{d-1}=\Theta_d(\eps^{(1-d)/2})$,
hence $|S|=\Theta_d(\eps^{(1-d)/2})$.

\begin{figure}[htbp]
 \centering
 \includegraphics[width=0.44\textwidth]{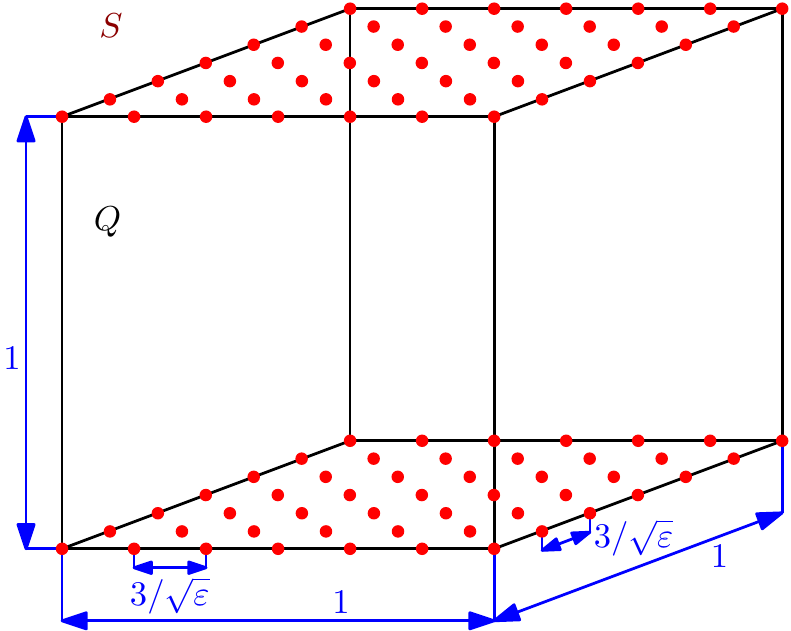}
 \caption{A schematic image of $S$ in $\mathbb{R}^3$.}
    \label{fig:twogrids}
\end{figure}

Let $N$ be a Euclidean Steiner $(1+\eps)$-spanner for $S$. For any pair of points $(a,b)\in S_0\times S_1$, the spanner $N$ contains an $ab$-path $P_{ab}$ of weight at most $(1+\eps)\|ab\|$, which lies in the ellipsoid $\mathcal{E}_{ab}$ with foci $a$ and $b$, and major axis $(1+\eps)\|ab\|$. The ellipsoid $\mathcal{E}_{ab}$ is, in turn, contained in an infinite cylinder $C_{ab}$ with axis $ab$ and radius $\frac{1}{2}\cdot\sqrt{(1+\eps)^2-1^2}\cdot \|ab\| < \sqrt{\eps}\cdot\diam(Q) = \sqrt{d\eps}$. The intersection of the cylinder $C_{ab}$ with hyperplanes containing $Q_0$ and $Q_1$, resp., is an ellipsoid of half-diameter less than $\sqrt{d\eps}/\cos\angle(ab,x_d) \leq \sqrt{d\eps}\cdot \sqrt{d}\leq d\cdot \sqrt{\eps}$, and their centers are  $a$ and $b$, respectively. Hence, all point in $S$, other than $a$ and $b$, are in the exterior of $C_{ab}$.

We distinguish between two types of edges in the $ab$-path $P_{ab}$.
An edge $e$ of $P_{ab}$ is \emph{near-parallel} to $ab$ if $\angle(ab,e)<2\cdot\sqrt{\eps}$.
Let $E(ab)$ be the set of edges of $P_{ab}$ that are near-parallel to $ab$, and $F(ab)$ the set of all other edges of $P_{ab}$. Lemma~\ref{lem:parallel} with $i=2$ yields
\begin{equation}\label{eq:12}
    \|E(ab)\|\geq \frac12 \|ab\|\geq \frac12.
\end{equation}

Notice that for two pairs $(a_1,b_1), (a_2,b_2)\in S_0\times S_1$, if $\{a_1,b_1\}\neq \{a_2,b_2\}$, then $E(a_1b_1)\cap E(a_2b_2)=\emptyset$. Indeed, in case $\angle(a_1 b_1,a_2 b_2) \geq 4\sqrt{\eps}$, this follows from the fact that the \emph{directions} near-parallel to $a_1b_1$ and $a_2b_2$, resp., are disjoint. Assume now that $\angle(a_ 1b_1,a_2 b_2) < 4\sqrt{\eps}$. Translate $a_2b_2$ to a line segment $a_1c_1$. Then we have $c_1\in L$, and the sine theorem in the triangle $\Delta(a_1b_1c_1)$ yields
\[
\|a_1c_1\|
=\|a_1b_1\| \frac{\sin\angle(b_1a_1, b_1c_1)}{\sin\angle (c_1a_1,c_1b_1)}
\leq \diam(Q)\frac{\sin \angle(a_1 b_1,a_2 b_2)}{1}
\leq \sqrt{d}\cdot \sin\left(4\sqrt{\eps}\right)
<4\cdot \sqrt{\eps d}.
\]
However, the minimum distance between any two points in the lattice $L$ is
$\varrho^{-1} = \lceil \frac{1}{8d\cdot \sqrt{\eps}}\rceil^{-1}$.
Since  $\varrho=\lceil \frac{1}{8d\cdot \sqrt{\eps}}\rceil< \frac{2}{8d\cdot \sqrt{\eps}}\leq \frac{1}{4\cdot \sqrt{\eps d}}$ for $0<\eps<(8d)^{-2}$,  then $b_1$ and $c_1$ cannot be distinct lattice points.
Therefore $c_1=b_1$, hence $a_2b_2$ is parallel to $a_1b_1$. Consequently the cylinders $C_{a_1 b_1}$ and $C_{a_2 b_2}$ have disjoint interiors, and so $E(a_1b_1)\cap E(a_2b_2)=\emptyset$, as claimed.
Combined with \eqref{eq:12}, this yields
\begin{equation}\label{eq:weight}
\|N\|
\geq \sum_{(a,b)\in S_0\times S_1} \|E(ab)\|
\geq |S_0|\cdot |S_1|\cdot \frac12
\geq \Theta_d( \eps^{1-d}).
\end{equation}

Similarly to~\cite[Claim~5.3]{le2019truly}, we may assume that $N\subseteq Q$ (indeed, we can replace every vertex of $N$ outside of $Q$ by the closest point in the boundary of $Q$; such replacements do not increase the weight of $N$).
In follows that the weight of every edge is at most $\text{diam}(Q)=\sqrt{d}$.
Consequently,
\[|E(N)|\geq \frac{\|N\|}{\max_{e\in E(N)} \|e\|} =\frac{\Omega_d(\eps^{1-d})}{\sqrt{d}}=\Omega_d(\eps^{1-d}).\]
The sparsity of $N$ is
$|E(N)|/|S|=\Omega_d(\eps^{1-d}/\eps^{(1-d)/2})=\Omega_d(\eps^{(1-d)/2})$, as required.

The MST for the point set $S$ contains one unit-weight edge between $S_0$ and $S_1$, and the remaining $|S|-2$ edges each have weight $d\sqrt{\eps}$, which is the minimum distance between lattice points in $L$ (see~\cite{SteeleS89} for the asymptotic behavior of the MST of a section of the lattice). Therefore $\|\MST(S)\|=1+(|S|-2)d\sqrt{\eps}=\Theta_d(\eps^{1-d/2})$.
It follows that the lightness of $N$ is $\|N\|/\|\MST(S)\|=\Omega_d(\eps^{1-d}/\eps^{1-d/2})=\Omega_d(\eps^{-d/2})$, as claimed.
This completes the proof when $n=\Theta_d(\eps^{(1-d)/2})$.

\subparagraph{General Case.}
Let $S_0$ denote the above construction with $|S_0|=\Theta_d(\eps^{(1-d)/2})$. Finally, if $n> |S_0|$, we can generalize the construction by duplication. Assume w.l.o.g.\ that $n=k\, |S_0|$ for some integer $k\geq 1$.
Let $Q_1,\ldots ,Q_k$, be disjoint axis-aligned unit hypercubes,
such that they each have an edge along the $x_1$-axis,
and two consecutive cubes are at distance 3 apart.
Let $S$ be the union of $k$ translates of the point set $S_0$,
on the boundaries of $Q_1,\ldots, Q_k$.
Let $N$ be a Euclidean Steiner $(1+\eps)$-spanner for $S$;
and $N_i=N\cap Q_i$ for $i=1,\ldots k$.

Since the ellipsoids induced by point pairs in different copies of $S_0$ are disjoint,
we have $\|N\|\geq \sum_{i=1}^k \|N_i\|=\Omega_d(k\eps^{1-d})$ and $|E(N)|\geq \sum_{i=1}^k |E(N_i)|$.
This immediately implies that the sparsity of $N$ is at least $|E(N)|/n  =|E(N_1)|/|S_0| \geq \Omega_d(\eps^{(1-d)/2})$.

The MST of $S'$ consists of $k$ translated copies of $\text{MST}(S)$ and
$k-1$ edges of weight 3 between consecutive copies. That is,
$\|\MST(S')\|=k\, \|\text{MST}(S)\|+3(k-1)=\Theta_d(k\eps^{1-d/2})$.
It follows that the lightness of $N'$ is $\Omega_d(\eps^{-d/2})$, as claimed.
\end{proof}

\section{Upper Bound in the Plane: Reduction to Directional Spanners}
\label{sec:redux}

In this section, we present our general strategy for the proof of Theorem~\ref{thm:upper}, and reduce the construction of a light $(1+\eps)$-spanner for a point set $S$ in the plane to a special case of \emph{directional} spanners for
a point set on the boundary of faces in a (modified) \emph{window partition}.

\subparagraph{Directional $(1+\eps)$-Spanners.}
Our strategy to construct a $(1+\eps)$-spanner for a point set $S$ is to partition the interval of directions $[0,\pi)$ into $O(\eps^{-1/2})$ intervals, each of length $O(\eps^{1/2})$. For each interval $D\subset [0,\pi)$, we construct a geometric graph that serves point pairs $\{a,b\}\subseteq S$ with $\mathrm{dir}(ab)\in D$. Then the union of these graphs over all $O(\eps^{-1/2})$ intervals will serve all point pairs $\{a,b\}\subseteq S$. The following definition formalizes this idea.

\begin{definition}[Directional spanner]
Let $S$ be a finite point set in the plane, $D\subset [0,\pi)$ a set of directions, and $\eps>0$. A geometric graph $G$ is a \emph{directional $(1+\eps)$-spanner} for $S$ and $D$ if $G$ contains an $ab$-path of weight at most $(1+\eps)\|ab\|$ for every $a,b\in S$ with $\mathrm{dir}(ab)\in D$.
\end{definition}

\subparagraph{Reduction to Tilings.}
Assume that we wish to construct a directional $(1+\eps)$-spanner for a set $S$ of $n$ points in the plane and an interval $D=[\frac{\pi-\sqrt{\eps}}{2},\frac{\pi+\sqrt{\eps}}{2}]$ of nearly vertical directions. Our general strategy is the following two-step process: (1) Subdivide a bounding box of $S$ into a collection $\mathcal{F}$ of weakly simple polygons (\emph{faces}) such that no point in $S$ lies in the interior any face. (2) For each face $F\in \mathcal{F}$, construct a directional $(1+\eps)$-spanner $G_F$ for a finite point set $S_F$ on the boundary $\partial F$ of $F$. Specifically, let $S_F$ be the union of  $S\cap \partial F$ and all points where a segment $ab$, with $a,b\in S$ and $\dir(ab)\in D$, crosses an edge of $F$ or contains a vertex of $F$. 
It is easily seen that this construction yields a directional $(1+\eps)$-spanner for $S$.

\begin{lemma}\label{lem:retiling}
Let $S\subset \mathbb{R}^2$, $D$, $\mathcal{F}$, and $S_F$ for all $F\in \mathcal{F}$ as defined above. For each face $F\in \mathcal{F}$, let $G_F$ be a geometric graph that contains, for all $p,q\in S_F$ with $\dir(pq)\in D$ and $pq\subset F$, a $pq$-path of weight at most $(1+\eps)\|pq\|$.
Then $G=\bigcup_{F\in \mathcal{F}} G_F$ is a directional $(1+\eps)$-spanner for $S$.
\end{lemma}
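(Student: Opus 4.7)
My plan is to fix an arbitrary pair $a,b\in S$ with $\dir(ab)\in D$ and exhibit an $ab$-path of weight at most $(1+\eps)\|ab\|$ in $G=\bigcup_{F\in\mathcal{F}}G_F$. Since no point of $S$ lies in the interior of any face, both $a$ and $b$ lie on the boundary of some face in $\mathcal{F}$. Traversing the straight segment $ab$ from $a$ to $b$, one visits a finite sequence of faces $F_1,F_2,\ldots,F_k$ of the subdivision, yielding break points $a=p_0,p_1,\ldots,p_k=b$ on $ab$ such that each subsegment $p_{i-1}p_i$ is contained in $F_i$. Since the $p_i$ are collinear and ordered along $ab$, we have $\sum_{i=1}^k\|p_{i-1}p_i\|=\|ab\|$, and each subsegment inherits the direction $\dir(p_{i-1}p_i)=\dir(ab)\in D$.

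The key observation is that every break point lies in the appropriate point set: $a\in S_{F_1}$ and $b\in S_{F_k}$ directly, since $S\cap\partial F\subseteq S_F$, while each interior break point $p_i$ arises because $ab$ leaves $F_i$ and enters $F_{i+1}$. Thus $p_i$ is either a point of $S$ on a shared boundary edge, a transverse crossing of $ab$ with an edge of the tiling, or a vertex of the tiling contained in $ab$; all three cases are explicitly covered by the definition of $S_F$, so $p_i\in S_{F_i}\cap S_{F_{i+1}}$. Applying the hypothesis of the lemma to the face $F_i$ with the pair $(p_{i-1},p_i)\in S_{F_i}\times S_{F_i}$ (noting $p_{i-1}p_i\subset F_i$ and $\dir(p_{i-1}p_i)\in D$) produces a $p_{i-1}p_i$-path $\Pi_i$ in $G_{F_i}\subseteq G$ of weight at most $(1+\eps)\|p_{i-1}p_i\|$.

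Concatenating $\Pi_1,\ldots,\Pi_k$ yields an $ab$-walk in $G$ of total weight at most
\[
\sum_{i=1}^k(1+\eps)\|p_{i-1}p_i\|=(1+\eps)\|ab\|,
\]
and a standard shortcutting argument extracts from this walk a simple $ab$-path of no greater weight, establishing the stretch bound. The main technical obstacle I foresee is the careful handling of degenerate configurations where $ab$ runs along a shared edge of the tiling, is tangent to the boundary of a face, or passes through one of its vertices. This is precisely what the somewhat elaborate definition of $S_F$ is engineered to absorb: by pre-installing every edge-crossing and every vertex of the tiling met by such a segment $ab$ as a vertex of $S_F$, the decomposition $p_0,\ldots,p_k$ is always well-defined and every consecutive pair qualifies as a valid input to the spanner hypothesis inside a single face.
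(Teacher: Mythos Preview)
Your proposal is correct and follows essentially the same approach as the paper's proof: decompose the segment $ab$ at the face boundaries into collinear subsegments, observe that each breakpoint lies in the appropriate $S_F$, apply the per-face hypothesis, and concatenate. Your treatment is in fact slightly more careful than the paper's about the degenerate configurations and the walk-to-path shortcutting, but the underlying argument is identical.
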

\begin{proof}
Let $a,b\in S$ with $\dir(ab)\in D$. The segment $ab$ is contained in the convex hull of $S$, which is in turn contained in the union of faces in $\mathcal{F}$.
The boundaries of the faces in $\mathcal{F}$ subdivide the line segment $ab$ into a path $(v_0,v_1,\ldots  ,v_m)$ of collinear segments, each of which lies in some face $F\in \mathcal{F}$ with both endpoints in $S_F$. For each $i=1,\ldots , m$, graph $G$ contains a $v_{i-1}v_i$-path of weight at most $(1+\eps)\|v_{i-1}v_i\|$. The concatenation of these paths
is an $ab$-path of weight at most $\sum_{i=1}^m (1+\eps)\|v_{i-1}v_i\| =(1+\eps)\|ab\|$, as required.
\end{proof}

\subparagraph{Remark: How to Tile?}
We need to construct a tiling $\mathcal{F}$ for $S$ that allows us to control the total weight of the spanner $\|G\|=\sum_{F\in \mathcal{F}} \|G_F\|$.
An obvious lower bound for the spanner weight is the weight of the tiling, which is the sum of the perimeters of the faces  $\sum_{F\in \mathcal{F}} \per(F)$.
Let $B$ be an axis-aligned bounding box of $S$.
The simplest tiles would be rectangles, convex polygons, or possibly orthogonally convex polygons (bounded by four staircase paths).
However, the minimum weight of a rectangulation for $n$ points in $B$
is $O(\|\MST(S)\|\log n)$, and this bound is the best possible~\cite{BergK94}. The same bound holds for the minimum weight tiling of $B$ into
orthogonally convex polygons. The minimum weight of a convex partition for $n$ points in $B$ is $O(\|\MST(S)\|\log n/\log \log n)$, and this bound is also tight~\cite{DumitrescuT11}. Due to the logarithmic factors, these tilings would be too heavy for our purposes, for the construction of a $(1+\eps)$-spanner of weight $O(\eps^{-1}\|\MST(S)\|)$.
Instead, we start with a histogram decomposition of weight $O(\|MST(S)\|)$.

\subparagraph{Histogram Decomposition.}
In Section~\ref{sec:hist}, we modify the standard window partition algorithm and tile a bounding box of $S$ with \emph{tame histograms} and \emph{thin histograms}, that we define here. Refer to Fig.~\ref{fig:histogram1}.

\begin{definition}
\begin{itemize}
\item[]
\item A polygon $P=(v_0,v_1,\ldots ,v_m)$ is \emph{simple} if its boundary is a Jordan curve (i.e., the image of an injective map $\gamma:\mathbb{S}^2\rightarrow \mathbb{R}^2$); and $P$ is \emph{weakly simple} if for every $\delta>0$, there exists a simple polygon $P'=(v_0',v_1',\ldots ,v_m')$ with $\|v_iv_i'\|\leq \delta$ for all $i=0,1,\ldots , m$. (Intuitively, this means that $\gamma$ may have self-intersections, but no self-crossings.)
%
%
\item  An \emph{$x$-monotone histogram} is a rectilinear weakly simple polygon bounded by a horizontal line segment and an $x$-monotone path $L$. Similarly,
a \emph{$y$-monotone histogram} is a rectilinear weakly simple polygon bounded by a vertical line segment and a $y$-monotone path $L$.
\item An $x$-monotone (resp., $y$-monotone) histogram is
    \emph{$\tau$-tame} for $\tau>0$ if for every horizontal (resp., vertical) chord $ab$, with $a,b\in L$, the subpath $L_{ab}$ of $L$ between $a$ and $b$ satisfies $\|L_{ab}\|\leq (1+\tau)\|ab\|$.
\item Finally, a \emph{tame histogram} is an $x$-monotone $1$-tame histogram, and a \emph{thin histogram} as a $y$-monotone $\eps^{1/2}$-tame histogram.
\end{itemize}
\end{definition}

\begin{figure}[htbp]
 \centering
 \includegraphics[width=0.95\textwidth]{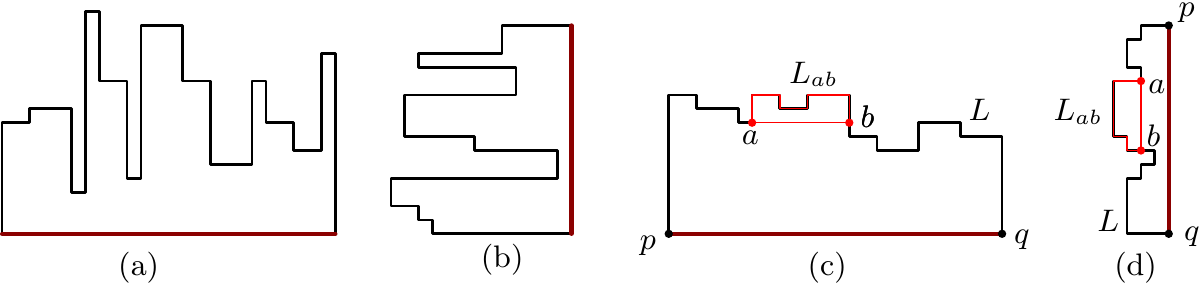}
 \caption{
 (a) An $x$-monotone histogram.
 (b) A $y$-monotone histogram.
 (c) A tame histogram.
 (d) A thin histogram.}
    \label{fig:histogram1}
\end{figure}

For a tame or thin histogram $P$ and a finite set $S$ of points on the boundary of $P$, we construct geometric graphs that achieve the stretch factor $1+\eps$ for all point pairs $a,b\in S$ such that $\dir(ab)\in [\frac{\pi-\sqrt{\eps}}{2},\frac{\pi+\sqrt{\eps}}{2}]$ \emph{and} $ab$ is a chord of $P$. A line segment $ab$ is a \emph{chord} of a weakly simple polygon $P$ if $a,b\in \partial P$, and $ab\subset P$. The \emph{perimeter} of a weakly simple polygon $P$, denoted $\per(P)$, is the total weight of the edges of a closed polygonal chain on the boundary of $P$; and the \emph{horizontal perimeter}, denoted $\hper(P)$, is the total weight of the horizontal edges of that chain.
Note that $\hper(H)=2\,\wdth(P)$ for an $x$-monotone histogram $H$; and $\hper(H)\geq 2\,\wdth(P)$ for a $y$-monotone histogram.

\begin{restatable}{lemma}{tilinglemma}\label{lem:tiling}
We can subdivide a (weakly) simple rectilinear polygon $P$ into a collection $\mathcal{F}$ of tame histograms and thin histograms such that 
$\sum_{F\in \mathcal{F}} \per(F) \leq O(\eps^{-1/2}\per(P))$
and $\sum_{F\in \mathcal{F}} \hper(F) \leq O(\per(P))$.
\end{restatable}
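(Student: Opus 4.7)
The plan is a two-phase construction. In phase one, I would apply the classical window partition of the rectilinear polygon $P$, which shoots a horizontal chord from every upward- or downward-pointing reflex vertex and decomposes $P$ into $x$-monotone histograms $H_1,\ldots,H_k$. The standard amortization (charging each window cut against the disjoint segment of $\partial P$ that it ``sees'') yields $\sum_i\per(H_i)=O(\per(P))$; and since each horizontal edge contributing to $\sum_i\hper(H_i)$ is either a horizontal edge of $\partial P$ (counted once) or a horizontal window cut (counted at most twice), the same bound $\sum_i\hper(H_i)=O(\per(P))$ follows immediately. Every $H_i$ that already satisfies the $1$-tame condition is retained; the remaining histograms enter phase two.

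In phase two, I would refine each non-tame $H_i$ into a tame $x$-monotone residue plus a collection of thin $y$-monotone histograms. The guiding principle is that an $x$-monotone histogram with base $[0,w]$ and upper path $L$ is $1$-tame precisely when, on every subinterval $I\subseteq[0,w]$, the total length of vertical edges of $L$ with $x$-coordinate in $I$ is at most $|I|$. Wherever this fails, $L$ exhibits a ``spike'' over a subinterval $I$ of horizontal extent $w'=|I|$ and height $h\gg w'$. I would identify a maximal collection of pairwise-disjoint spike intervals (processing them, say, top-down by peak height), carve each spike out of $H_i$ using two vertical chords, and tile the resulting column by $\Theta(w'/(\sqrt{\eps}\,h))$ side-by-side thin histograms of width $\Theta(\sqrt{\eps}\,h)$ and height $\Theta(h)$; each such piece is $\sqrt{\eps}$-tame because its aspect ratio is $\Theta(\sqrt{\eps})$. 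By construction, the residue of $H_i$ after all columns are removed then satisfies the local vertical-variation bound at every scale, and therefore decomposes into tame $x$-monotone histograms.

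For the weight accounting, a thin histogram of height $h$ contributes $\Theta(h)$ to $\sum_F\per(F)$ and $\Theta(\sqrt{\eps}\,h)$ to $\sum_F\hper(F)$, so tiling a spike of extent $w'$ contributes $\Theta(w'/\sqrt{\eps})$ and $\Theta(w')$, respectively. Since spikes inside a single $H_i$ occupy pairwise-disjoint subintervals of its base, their extents sum to at most $\wdth(H_i)\leq \per(H_i)$. Summing across $i$ and combining with the phase-one bound then yields $\sum_F\per(F)=O(\per(P)/\sqrt{\eps})$ and $\sum_F\hper(F)=O(\per(P))$, as required. The tame residues contribute $O(\per(P))$ to both sums because the newly inserted vertical extraction chords can be charged against the vertical variation of $L$, which is bounded by $\per(H_i)$.

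The main obstacle I anticipate is formalizing the spike extraction when spikes are nested at multiple vertical scales. A top-down threshold sweep--lowering a horizontal line and, each time a connected component above it accumulates enough vertical variation to violate the $1$-tame condition in its horizontal footprint, cutting it off with two vertical chords--appears to be the cleanest organization, but one must verify after each extraction that the residue stays weakly simple, remains $x$-monotone, and still satisfies the $1$-tame condition at \emph{every} horizontal scale, not only at the scales of the already-extracted spikes. The delicate point is the amortized charging of nested spike widths to disjoint portions of the base, which is what prevents double-counting in the $O(\per(P)/\sqrt{\eps})$ bound; once this is handled, the remainder of the analysis is routine geometry.
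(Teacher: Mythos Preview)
Your two-phase plan diverges from the paper at both phases, and the second phase contains a genuine gap.

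\emph{Phase one.} What you describe is not the classical window partition. The window partition (Lemma~\ref{lem:window} in the paper) is a recursive alternation: from a horizontal base edge one sweeps vertically to carve out an $x$-monotone histogram, leaving \emph{vertical} windows; the recursion on each window then produces a $y$-monotone histogram, and so on. The output is therefore a mix of $x$-monotone and $y$-monotone histograms, not only $x$-monotone ones. Shooting only horizontal chords does not, in general, give an $x$-monotone histogram decomposition of total perimeter $O(\per(P))$: take a tall $W\times N$ rectangle with $\Theta(N)$ unit tabs alternating on the left and right. This polygon has $\per(P)=\Theta(N+W)$, but separating the tabs with horizontal chords costs $\Theta(NW)$. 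The standard alternating window partition avoids this by cutting the tabs off with \emph{vertical} windows.

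\emph{Phase two.} Even granting a collection of $x$-monotone histograms, your spike extraction does not produce thin histograms. A spike carved from an $x$-monotone histogram $H$ by two vertical chords is bounded above by a portion of the $x$-monotone path $L$; that portion is \emph{not} $y$-monotone in general (it may go up and down as $x$ increases). Slicing the column into vertical strips of width $\Theta(\sqrt{\eps}\,h)$ does not help: each strip still inherits a non-$y$-monotone top boundary, so it is not a $y$-monotone histogram at all, let alone a thin one. Concretely, if over $[0,w']$ the path $L$ rises to height $h$, drops to $h/2$, and rises back to $h$, the carved column has vertical variation $h>w'$ (so it is a spike in your sense) but is not $y$-monotone from either vertical side.

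The paper's route is considerably simpler and sidesteps both issues. After the alternating window partition into $x$- and $y$-monotone histograms of total perimeter $O(\per(P))$, a single sweepline lemma (Lemma~\ref{lem:sweep}) is applied to each histogram: sweep perpendicular to the base, and whenever a chord $ab$ of $L$ satisfies $\|L_{ab}\|=(1+\tau)\|ab\|$, cut along $ab$. With $\tau=1$ this turns each $x$-monotone histogram into tame histograms; with $\tau=\sqrt{\eps}$ it turns each $y$-monotone histogram into thin histograms. The charging is immediate (each cut $ab$ is charged to $\tau\|ab\|$ worth of edges of $L$ orthogonal to it, which are never charged again), giving the stated perimeter bounds without any nested-spike bookkeeping.
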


\begin{restatable}{lemma}{histlemma}\label{lem:hist}
Let $F$ be a tame or thin histogram, $S\subset \partial F$ a finite point set, $\eps\in (0,1]$, and $D=[\frac{\pi-\sqrt{\eps}}{2},\frac{\pi+\sqrt{\eps}}{2}]$ an interval of nearly vertical directions. Then there exists a geometric graph $G$ of weight $O(\per(F)+\eps^{-1/2}\, \hper(F))$ such that for all $a,b\in S$, if $ab$ is a chord of $F$ and $\dir(ab)\in D$, then $G$ contains an $ab$-path of weight at most $(1+O(\eps))\|ab\|$.
\end{restatable}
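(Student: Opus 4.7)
The plan is to handle the tame and thin histogram cases in parallel: in both, $G$ will consist of (i) the full boundary $\partial F$, contributing $\per(F)$ to the weight, and (ii) a generalized shallow-light-tree (SLT) from Section~\ref{sec:SLT}, providing $(1+\eps)$-approximate routes across $F$ and contributing $O(\eps^{-1/2}\hper(F))$. Together these match the required weight $O(\per(F)+\eps^{-1/2}\hper(F))$, so the rest of the argument focuses on verifying the stretch bound for every nearly-vertical chord $ab\subset F$ with $a,b\in S$.

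For the tame case, $F$ is bounded below by a horizontal base $B$, on the sides by two short vertical segments, and above by an $x$-monotone $1$-tame path $L$; consequently $\hper(F)=2\,\wdth(F)$. The plan is to apply the staircase/monotone-rectilinear SLT generalization from Section~\ref{sec:SLT} between $B$ and $L$, so that for every $p\in L$ and every $q\in B$ with horizontal displacement at most $\sqrt{\eps}\,\|pq\|$, a $pq$-path of length at most $(1+\eps)\|pq\|$ lies in the Steiner structure; its weight is $O(\eps^{-1/2}\,\wdth(F))=O(\eps^{-1/2}\,\hper(F))$. For a nearly-vertical chord $ab\subset F$ with $a\in L$ and $b\in B$, the SLT gives the path directly; endpoints on the two vertical sides are reduced to this case by a detour of length $O(\eps)\,\|ab\|$ along $\partial F$.

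For the thin case, $F$ is bounded by a vertical base $V$, two short horizontal sides, and a $y$-monotone $\sqrt{\eps}$-tame path $L$, and $\hper(F)$ captures the total horizontal extent of the steps of $L$ together with the two horizontal sides. Nearly-vertical chords run along the long axis of $F$. I would apply the staircase SLT generalization using the horizontal steps of $L$ as targets, with weight $O(\eps^{-1/2}\,\hper(F))$. The $\sqrt{\eps}$-tameness of $L$ bounds the horizontal deviation needed to bring $a$ (or $b$) onto a convenient row of the SLT by $\sqrt{\eps}\,\|ab\|$, which is absorbed into the $(1+O(\eps))$ stretch.

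The main obstacle is the subcase of a nearly-vertical chord both of whose endpoints lie on the monotone boundary path $L$. Routing along $\partial F$ alone yields only $(1+O(\sqrt{\eps}))$-stretch: tameness is tight enough for horizontal or vertical chords, but not for chords at angle $\Theta(\sqrt{\eps})$ from vertical, since the subpath $L_{ab}$ can oscillate up to horizontal width $\sqrt{\eps}\,\|ab\|$. The resolution is to use the SLT at the scale matching $\|ab\|$: rather than descending all the way to the opposite boundary, route through SLT edges whose horizontal granularity is $\Theta(\sqrt{\eps}\,\|ab\|)$, so that the detour cost is controlled by the SLT's $(1+\eps)$ guarantee instead of the sinuosity of $L$. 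Verifying this scale-matched routing—and checking that aggregating weights across all scales still fits into $O(\eps^{-1/2}\,\hper(F))$—will be the most delicate part of the analysis.
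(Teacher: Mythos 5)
Your high-level plan (boundary plus SLT-based Steiner structure, with the weight split $\per(F)+\eps^{-1/2}\hper(F)$) points in the right direction, but there is a genuine gap: every invocation of ``the generalized SLT from Section~\ref{sec:SLT}'' silently assumes the scale condition that makes those trees work, namely that the source (or opposite boundary) is at distance $\Omega(\eps^{-1/2})$ times the width of the boundary portion it serves. The lemmas of Section~\ref{sec:SLT} are single-source trees, or two trees joined across a rectangle of aspect ratio $\eps^{-1/2}$ (Lemma~\ref{lem:combine2}); their $O(\sqrt{\eps})$ additive error is only a $1+O(\eps)$ factor because the connecting distance is at least $\eps^{-1/2}$ times the served width. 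In a tame histogram the height is at most half the width, and a nearly vertical chord $ab$ can be arbitrarily short relative to $\wdth(F)$, so a single SLT between the base and $L$ of weight $O(\eps^{-1/2}\wdth(F))$ serving \emph{all} base-to-$L$ pairs with stretch $1+O(\eps)$ does not follow from anything in Section~\ref{sec:SLT}: one needs a multi-scale family of such gadgets (in the paper, squares at heights $h(1-3\sqrt{\eps})^i$ with aspect-ratio-$\eps^{-1/2}$ rectangles beneath them, Lemmas~\ref{lem:combine5} and~\ref{lem:dir2}), and the nontrivial step is showing that the scales aggregate to weight $O(\eps^{-1/2}\per(F))$, done in the paper via a packing/independent-set argument over the squares. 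You flag the need for scale-matched routing only for chords with both endpoints on $L$, but it is equally needed for base-to-$L$ chords, and in neither case do you supply the construction or the weight aggregation --- which is the actual content of the lemma.

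Two further specific problems. First, for $L$-to-$L$ chords the paper does not simply ``use an SLT at scale $\|ab\|$'': it runs a recursive shadow decomposition (Lemmas~\ref{lem:staircase} and~\ref{lem:tame-staircase}), and convergence of the total weight rests on the geometric decay of the horizontal perimeter across recursion levels (Lemma~\ref{lem:width} and inequality~\eqref{eq:width1}); nothing in your sketch substitutes for this. Second, your stretch accounting in the thin case is wrong as stated: a horizontal detour of length $\sqrt{\eps}\,\|ab\|$ appended to a path costs a $1+O(\sqrt{\eps})$ factor, not $1+O(\eps)$; it becomes tolerable only if the detour is folded into a monotone path whose edges stay within angle $O(\sqrt{\eps})$ of vertical (Lemma~\ref{lem:angle2}), or if it is charged against a connecting distance of order $\eps^{-1/2}$ as in the SLT lemmas. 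The paper's treatment of thin histograms additionally needs the unfolding isometry onto a staircase and a subdivision of $F$ by a path alternating slope-$\eps^{-1/2}$ and horizontal segments, which is what keeps the contribution of the long vertical base at $O(\eps^{-1/2}\hper(F))$; none of these steps is present or replaced in your proposal.
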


We prove Lemma~\ref{lem:tiling} in Section~\ref{sec:hist} and Lemma~\ref{lem:hist} in Section~\ref{sec:thin}.
In the remainder of this section, we show that these lemmas imply Lemma~\ref{lem:dir}, which in turn implies Theorem~\ref{thm:UB}.

\subparagraph{Stretch Factor of $1+\eps$ Versus $1+O(\eps)$.}
In the geometric spanners we construct, an $st$-path may comprise $O(1)$ subpaths, each of which is angle-bounded or contained in an \textsf{SLT}.
For the ease of presentation, we typically establish a stretch factor of $1+O(\eps)$ in our proofs. It is understood that $1+\eps$ can be achieved by a suitable scaling by a constant factor.

\begin{lemma}\label{lem:dir}
Let $S\subset \mathbb{R}^2$ be a finite point set, $\eps\in (0,1]$, and $D\subset [0,\pi)$ an interval of length $\sqrt{\eps}$. Then there exists a \emph{directional $(1+\eps)$-spanner} for $S$ and $D$ of weight $O(\eps^{-1/2}\, \|\MST(S)\|)$.
\end{lemma}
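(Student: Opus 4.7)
I would deduce Lemma~\ref{lem:dir} from Lemmas~\ref{lem:tiling}, \ref{lem:hist}, and \ref{lem:retiling} essentially for free, by choosing the ``tiling polygon'' to be the axis-aligned bounding box of $S$. First, after applying a rotation of the plane, I can assume without loss of generality that $D=[\frac{\pi-\sqrt{\eps}}{2},\frac{\pi+\sqrt{\eps}}{2}]$, the nearly vertical interval used throughout Section~\ref{sec:redux}; a rotation preserves $\|\MST(S)\|$, Euclidean edge weights, and all stretch factors, while aligning $D$ with the horizontal/vertical axis convention on which tame and thin histograms are defined.

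Next I would take $R$ to be the axis-aligned bounding box of the rotated point set $S$. The crucial numerical input is
\[
\per(R)\;\leq\;2\,\wdth(R)+2\,\hght(R)\;\leq\;4\,\diam(S)\;\leq\;4\,\|\MST(S)\|,
\]
where the last step uses that any diameter pair $u,v\in S$ is connected by a path in the MST whose weight is both at least $\|uv\|=\diam(S)$ and at most $\|\MST(S)\|$. Applying Lemma~\ref{lem:tiling} to $R$ yields a tiling $\mathcal{F}$ of $R$ into tame and thin histograms with
\[
\sum_{F\in \mathcal{F}} \per(F)=O(\eps^{-1/2}\,\|\MST(S)\|) \qquad\text{and}\qquad \sum_{F\in \mathcal{F}} \hper(F)=O(\|\MST(S)\|).
\]

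For each face $F\in \mathcal{F}$, I define $S_F$ as in the setup preceding Lemma~\ref{lem:retiling}: points of $S$ lying on $\partial F$ together with the points where $\partial F$ meets a segment $ab$ with $a,b\in S$ and $\dir(ab)\in D$. By Lemma~\ref{lem:hist}, there is a graph $G_F$ of weight $O(\per(F)+\eps^{-1/2}\,\hper(F))$ that $(1+O(\eps))$-spans every chord of $F$ with endpoints in $S_F$ and direction in $D$. Summing the two displayed bounds gives
\[
\sum_{F\in \mathcal{F}} \|G_F\|\;=\;O\!\left(\sum_{F\in \mathcal{F}} \per(F)\right)+O\!\left(\eps^{-1/2}\sum_{F\in \mathcal{F}} \hper(F)\right)\;=\;O(\eps^{-1/2}\,\|\MST(S)\|).
\]
Since $R$, and hence $\bigcup_F F$, contains the convex hull of $S$, Lemma~\ref{lem:retiling} certifies that $G=\bigcup_F G_F$ is a directional $(1+O(\eps))$-spanner for $S$ and $D$; rescaling $\eps\mapsto \eps/c$ for a sufficiently large absolute constant $c$ promotes this to the claimed $(1+\eps)$ stretch without affecting the asymptotic weight bound.

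The plan has no real obstacle of its own: the technical work has been quarantined into Lemmas~\ref{lem:tiling} and \ref{lem:hist}. The only content-bearing step in the reduction is the estimate $\per(R)=O(\|\MST(S)\|)$, which is what upgrades a bounding-box weight bound to an MST-relative one and makes the whole pipeline produce a spanner of weight $O(\eps^{-1/2}\|\MST(S)\|)$ rather than one scaled to the geometric diameter of $S$.
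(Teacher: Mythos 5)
There is a genuine gap: you apply Lemma~\ref{lem:tiling} directly to the bounding box $R$, but that tiling is built only from the geometry of $R$ (window partition plus a sweep) and knows nothing about $S$, so points of $S$ will in general lie in the \emph{interiors} of the resulting tame and thin histograms. This violates the standing hypothesis of the reduction: the setup preceding Lemma~\ref{lem:retiling} explicitly requires that no point of $S$ lies in the interior of any face, and its proof needs the endpoints $a,b$ of a segment $ab$ to belong to $S_F$ for the faces containing them. If $a$ is interior to a face $F$, then $a\notin S_F$ (by definition $S_F\subset\partial F$), Lemma~\ref{lem:hist} --- which is stated only for point sets on $\partial F$ --- provides no path from $a$ to the first crossing point of $ab$ with $\partial F$, and the concatenation argument collapses. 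So your pipeline serves only pairs whose points happen to land on face boundaries; your correct estimate $\per(R)\leq 4\,\diam(S)\leq 4\,\|\MST(S)\|$ does not address this.

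The paper closes exactly this gap by taking the initial graph $G_0$ to be the boundary of a minimum bounding square \emph{together with a rectilinear MST of $S$}, whose weight is still $O(\|\MST(S)\|)$. Every point of $S$ then lies on $G_0$, hence on the boundary of some face of $G_0$, and since each edge of $G_0$ borders at most two faces, the total perimeter of these faces is $O(\|\MST(S)\|)$. Lemma~\ref{lem:tiling} is applied to each such face (not to the bounding box), and the subdivision only adds edges inside faces, so points of $S$ stay on face boundaries; only then do Lemmas~\ref{lem:hist} and~\ref{lem:retiling} apply as you use them. Note this is also the content of the paper's ``how to tile'' discussion: getting the points of $S$ onto the tiling boundary at total cost $O(\|\MST(S)\|)$ (rather than, say, $O(\|\MST(S)\|\log n)$ via a rectangulation) is precisely the role of the rectilinear MST, and it is the step your reduction omits. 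The rest of your argument (rotation, summing $\per$ and $\hper$ bounds, rescaling $\eps$) matches the paper and is fine once this is fixed.
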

\begin{proof}
We may assume, by applying a suitable rotation, that $D=[\frac{\pi-\sqrt{\eps}}{2},\frac{\pi+\sqrt{\eps}}{2}]$, that is, an interval of nearly vertical directions. We construct a directional $(1+\eps)$-spanner for $S$ and $D$ of weight $O(\eps^{-1/2}\, \|\MST(S)\|)$.

Assume w.l.o.g.\ that the unit square $U=[0,1]^2$ is a minimum axis-parallel bounding square of $S$. In particular, $S$ has two points on two opposite sides of $U$, and so $1\leq \diam(S)\leq \|\MST(S)\|$. Our initial graph $G_0$ is composed of the boundary of $U$ and a \emph{rectilinear MST}\footnote{A \emph{rectilinear minimum spanning tree} of a finite set $S$ in $\mathbb{R}^d$ is a Steiner tree for $S$ composed of axis-parallel edges and having minimum weight in $L_1$-norm.} of $S$, where $\|G_0\|=O(\|\MST(S)\|)$.
Since each edge of $G_0$ is on the boundary of at most two faces, the total perimeter of all faces of $G_0$ is also $O(\|\MST(S)\|)$.
Lemma~\ref{lem:tiling} yields subdivisions of the faces of $G_0$ into a collection $\mathcal{F}$ of tame or thin histograms with  $\sum_{F\in \mathcal{F}}\per(F)=O(\eps^{-1/2}\|\MST(S)\|)$
and $\sum_{F\in \mathcal{F}}\hper(F)=O(\|\MST(S)\|)$,

Let $K(S)$ be the complete graph induced by $S$. For each face $F\in \mathcal{F}$, let $S_F$ be the union of $S\cap \partial F$ and the set of all points where an edge of $K(S)$ crosses an edge of $F$ or passes through a vertex of $F$. For each face $F$, Lemma~\ref{lem:hist} yields a geometric graph $G_F$ of weight $O(\per(F)+\eps^{-1/2}\hper(F))$ with respect to the finite point set $S_F\subset \partial F$.

We can now put the pieces back together.
Let $G$ be the union of $G_0$ and the graphs $G_F$ for all $F\in \mathcal{F}$.
By Lemma~\ref{lem:retiling}, $G$ is a directional $(1+\eps)$-spanner for $S$ and $D$. The weight of $G$ is bounded by
$\|G\|=\|G_0\|+\sum_{F\in \mathcal{F}} \|G_F\|
= O(\|\MST(S)\|+\sum_{F\in \mathcal{F}} (\per(F)+\eps^{-1/2}\hper(F)))
= O(\eps^{-1/2}\|\MST(S)\|)$.
\end{proof}

We prove Theorem~\ref{thm:upper} in the following form.

\begin{theorem}\label{thm:UB}
For every finite point sets $S\subset \mathbb{R}^2$ and $\eps\in (0,1]$, there exists a Euclidean Steiner $(1+\eps)$-spanner of weight $O(\eps^{-1}\,\|\MST(S)\|)$.
\end{theorem}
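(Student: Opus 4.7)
The plan is to reduce Theorem~\ref{thm:UB} to Lemma~\ref{lem:dir} by a one-line direction-partition argument. I would partition the interval $[0,\pi)$ of all possible edge directions into $k = \lceil \pi/\sqrt{\eps}\,\rceil = O(\eps^{-1/2})$ subintervals $D_1,\ldots ,D_k$, each of length at most $\sqrt{\eps}$. For each $i$, I would invoke Lemma~\ref{lem:dir} on $(S,D_i)$ to obtain a directional $(1+\eps)$-spanner $G_i$ for $S$ and $D_i$ of weight $\|G_i\|=O(\eps^{-1/2}\,\|\MST(S)\|)$.

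The final spanner is the union $G=\bigcup_{i=1}^k G_i$. Its total weight telescopes to
\[
\|G\|\leq \sum_{i=1}^k \|G_i\|
\;\leq\; k\cdot O\!\left(\eps^{-1/2}\,\|\MST(S)\|\right)
\;=\; O\!\left(\eps^{-1}\,\|\MST(S)\|\right),
\]
which is the claimed bound. To verify that $G$ is a Euclidean Steiner $(1+\eps)$-spanner for $S$, take any pair $a,b\in S$ and note that $\mathrm{dir}(ab)$ belongs to some $D_i$; by definition of a directional $(1+\eps)$-spanner, $G_i\subseteq G$ already contains an $ab$-path of weight at most $(1+\eps)\|ab\|$.

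The main obstacle has in fact been front-loaded into Lemmas~\ref{lem:tiling}, \ref{lem:hist}, and \ref{lem:dir}; all of the geometric difficulty (window partitions, tame and thin histograms, shallow-light trees, and the careful weight bookkeeping that yields the $\eps^{-1/2}$ dependence per direction) is packaged there. The only real issue to watch at this level is the distinction between stretch $1+\eps$ and stretch $1+O(\eps)$: following the paper's convention in the paragraph ``Stretch Factor of $1+\eps$ Versus $1+O(\eps)$'', I would apply each of the constituent lemmas with $\eps$ replaced by $c\eps$ for a sufficiently small absolute constant $c>0$, which rescales the constants but leaves both the direction-interval count $O(\eps^{-1/2})$ and the per-direction weight $O(\eps^{-1/2}\|\MST(S)\|)$ unchanged up to constant factors, yielding the final lightness bound $O(\eps^{-1})$.
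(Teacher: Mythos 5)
Your proposal is correct and is essentially identical to the paper's own proof of Theorem~\ref{thm:UB}: partition $[0,\pi)$ into $\lceil \pi\eps^{-1/2}\rceil$ intervals of length at most $\sqrt{\eps}$, apply Lemma~\ref{lem:dir} to each, and take the union, giving weight $O(\eps^{-1/2})\cdot O(\eps^{-1/2}\|\MST(S)\|)=O(\eps^{-1}\|\MST(S)\|)$. Your remark about rescaling $\eps$ by a constant to convert $1+O(\eps)$ into $1+\eps$ matches the paper's stated convention, so nothing is missing.
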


\begin{proof}[Proof of Theorem~\ref{thm:UB}]
Let $S$ be a finite set in the plane, let $\eps\in (0,1]$, and put $k=\lceil \pi\eps^{-1/2}\rceil$. Partition the space of directions into $k$ intervals of equal length, as  $[0,\pi)=\bigcup_{i=1}^{k} D_i$.
By Lemma~\ref{lem:dir}, there exists a directional $(1+\eps)$-spanner of weight $O(\eps^{-1/2}\|\MST(S)\|)$ for $S$ and $D_i$ for every  $i\in \{1,\ldots , k\}$.
Let $G=\bigcup_{i=1}^{k} G_i$. For every point pair $s,t\in S$,
we have $\mathrm{dir}(st)\in D_i$ for some $i\in \{1,\ldots,k\}$, and $G_i\subset G$ contains an $st$-path of weight at most $(1+\eps)\|st\|$. Consequently, $G$ is a Euclidean Steiner $(1+\eps)$-spanner for $S$.
The weight of $G$ is
$\|G\|\leq \sum_{i=1}^{k} \|G_i\|
\leq {\lceil\pi\eps^{-1/2}\rceil}\cdot O(\eps^{-1/2}\|\MST(S)\|)
\leq O(\eps^{-1}\|\MST(S)\|)$, as required.
\end{proof}

\section{Construction of a Tiling}
\label{sec:hist}

The so-called \emph{window partition} of rectilinear simple polygon $P$ is a recursive subdivision of $P$ into histograms~\cite{Edelsbrunner1984167,Levcopoulos,Link00,Suri90}. It can be computed in $O(n\log n)$ time if $P$ has $n$ vertices. It was originally designed for data structures that support orthogonal visibility and minimum-link path queries in $P$. Importantly, every axis-parallel chord of $P$ intersects (\emph{stabs}) at most three histograms of the decomposition. The stabbing property implies that the total perimeter of the histograms is $O(\per(P))$. For completeness, we present the standard window partition and the weight analysis here.

\begin{figure}[htbp]
 \centering
 \includegraphics[width=0.8\textwidth]{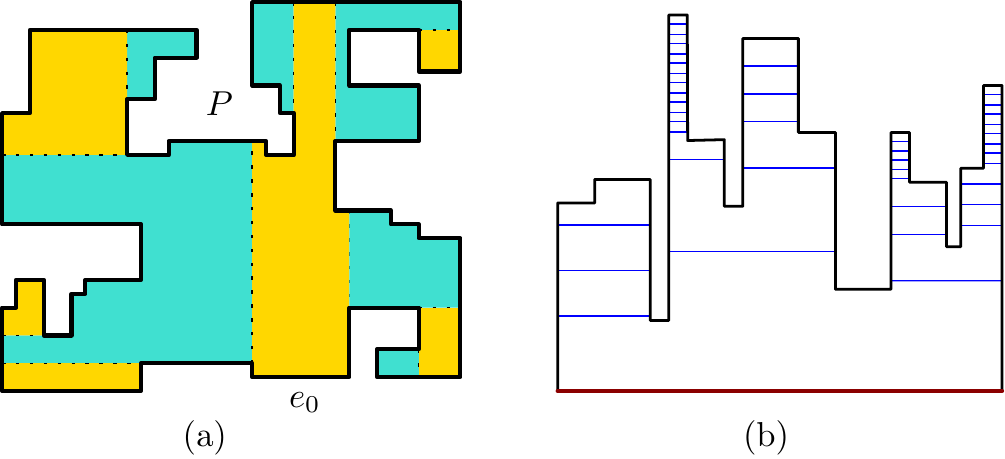}
 \caption{
 (a) The standard window partition of a rectilinear polygon $P$ into histograms, starting from a horizontal edge $e_0$
 (b) A decomposition of an $x$-monotone histogram into tame histograms.}
    \label{fig:histogram2}
\end{figure}

\begin{lemma}\label{lem:window}
Every rectilinear (weakly) simple polygon $P$ can be subdivided into a collection $\mathcal{H}$ of histograms such that
$\sum_{H\in \mathcal{H}}\per(H)\leq O(\per(P))$.
\end{lemma}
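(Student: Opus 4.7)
The plan is to apply the classical window partition algorithm~\cite{Edelsbrunner1984167,Levcopoulos,Link00,Suri90}, which recursively decomposes $P$ into maximal histograms, and then derive the perimeter bound from the well-known stabbing property of this decomposition.

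First I would describe the construction. Pick any horizontal edge $e_0 \subset \partial P$ as the initial base, and shoot a vertical ray upward from every point of $e_0$ into the interior of $P$, stopping at the first intersection with $\partial P$; the region swept by these rays is the unique maximal $x$-monotone histogram $H_0 \subseteq P$ with base $e_0$, and its top is an $x$-monotone staircase whose horizontal edges lie on $\partial P$. The closure of $P \setminus H_0$ splits into rectilinear subpolygons $P_1, \ldots, P_k$, each sharing with $H_0$ a single axis-parallel chord $w_i$ of $P$, called a \emph{window}: either a horizontal window lying on the staircase top of $H_0$, or a vertical window lying along a side of $H_0$ where that side passes through the interior of $P$. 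Recurse on each $P_i$ using $w_i$ as the new base (rotating axes as needed so that $w_i$ is horizontal) until every remaining piece is itself a histogram.

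Next I would account for the perimeter. Every edge of every output histogram $H \in \mathcal{H}$ is either a subsegment of $\partial P$, appearing on the boundary of exactly one histogram, or a subsegment of some window, appearing on the boundaries of exactly two histograms (the pair separated by that window). Summing,
\begin{equation*}
\sum_{H \in \mathcal{H}} \per(H) \;=\; \per(P) \;+\; 2W,
\end{equation*}
where $W$ is the total length of all windows created throughout the recursion. Thus the lemma reduces to showing $W = O(\per(P))$.

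For the bound on $W$, I would invoke the stabbing property of the window partition~\cite{Suri90}: every axis-parallel chord of $P$ meets the interiors of at most three histograms of $\mathcal{H}$. Each window $w$ is the base of a unique child histogram $H_w$ in the recursion tree, and the top staircase of $H_w$ has horizontal projection of length exactly $|w|$ and lies partly on $\partial P$ and partly on windows produced at deeper recursion levels. Tracing these chains of tops through the recursion tree, the stabbing bound caps the depth of any such cascade by a constant, so that each unit of $\partial P$ is charged only $O(1)$ times; summing yields $W = O(\per(P))$, and combining with the identity above gives $\sum_{H \in \mathcal{H}} \per(H) = O(\per(P))$. The main obstacle is this cascading charge: windows may stack above windows across many recursion levels, and only the bounded stabbing number keeps the aggregate multiplicity constant.
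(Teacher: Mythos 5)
Your construction and the accounting identity $\sum_{H\in\mathcal{H}}\per(H)=\per(P)+2W$ match the paper's setup, so everything hinges on showing $W=O(\per(P))$, and this is exactly where your argument has a genuine gap. You never specify how the charge $|w|$ is distributed over the top staircase of $H_w$, and the claim that ``the stabbing bound caps the depth of any such cascade by a constant'' does not follow from the stabbing property: that property concerns a \emph{single} axis-parallel chord meeting at most three histograms, whereas your cascade follows a chain of windows of alternating orientation through the recursion tree, and the recursion depth of a window partition can be $\Theta(n)$ (spiral- or staircase-shaped polygons). Nothing in the stabbing number limits how many times a charge that is smeared over an entire top staircase gets forwarded to deeper windows. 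Moreover, the premise that sets up the cascade is itself incorrect: the edges of the top staircase that are \emph{parallel} to the base $w$ are first-hit points of the visibility rays shot from $w$, and by the invariant $\partial C\setminus w\subset\partial P$ they lie on $\partial P$; only the edges perpendicular to $w$ can be windows of deeper levels, and those receive zero measure under the orthogonal projection of $w$. (For the same reason your description of the first level is slightly off: with a horizontal base all windows are vertical, since the horizontal edges of the staircase top lie on $\partial P$.)

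Once this is observed, no cascade and no stabbing property are needed, and this is the paper's argument: charge $\|w\|$ to its orthogonal projection onto the far boundary of the histogram $H_w$ based at $w$. This projection is a union of segments of $\partial P$ of total length exactly $\|w\|$, and these segments border only the face $H_w$, which is never subdivided further, so each piece of $\partial P$ is charged by at most one window; hence $W\le\per(P)$ and the lemma follows from your identity. If you do want a proof via the stabbing property, the correct route is integral-geometric rather than a cascade-depth claim: for each vertical (resp.\ horizontal) line, every chord of $P$ on that line crosses $O(1)$ horizontal (resp.\ vertical) histogram edges because it stabs at most three histograms, and integrating over all lines bounds the total histogram edge length by $O(\per(P))$; but that argument also relies on the stabbing property, which you cite without proof, so the direct projection charging is both simpler and self-contained.
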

\begin{proof}
We describe a recursion on instances $(R,e)$, where $R$ is a rectilinear (weakly) simple polygon, and $e$ is an edge of $R$.
We also maintain the invariant that $\partial R\setminus e\subset \partial P$. Initially, let $R=P$ and $e$ an arbitrary horizontal edge of $P$; clearly $\partial P\setminus e\subset \partial P$. In one iteration, consider an instance $(R,e)$; see Fig.~\ref{fig:histogram2}(a) for example.

Assume w.l.o.g.\ that $(R,e)$ is an instance where $e$ is horizontal edge of $R$. We define a histogram $H\subset R$ as the set of all points that can be connected to a point in $e$ by a vertical line segment in $R$. Let $\mathcal{C}$ be the collection of connected components $C$ of $R\setminus H$. Each component $C\in \mathcal{C}$ is a weakly simple rectilinear polygon that has a unique vertical edge (\emph{window}) $w(C)$ along the boundary of $H$. Recurse on the instances $(C,w(C))$ for all $C\in \mathcal{C}$ if $\mathcal{C}\neq \emptyset$.

\smallskip\noindent\emph{Weight analysis.}
Consider the input rectilinear polygon $P$, and the subdivision created by the above algorithm. Each iteration of the recursion handles an instance $(R,e)$, and creates an $x$- or a $y$-monotone histogram $H_e$, which is not partitioned further. The cost of creating $H_e$ equals to the weight of the windows on boundary between $H_e$ and $R\setminus H_e$. Each component of $\partial H_e\cap \partial(R\setminus H_e)$ is the edge $w(C)$ of an instance $(C,w(C))$ in a recursive call. 

At the next level of the recursion, for each instance $(C,w(C))$, the algorithm constructs a histogram $H_{w(C)}\subset C$. By our invariant, we have $\partial C\setminus w(C)\subset \partial R$; and we charge the weight of $w(C)$ to the boundary of $R$ as follows. When $w(C)$ is a horizontal (vertical) side of $C$, then we charge the orthogonal projection of $w(C)$ to horizontal (vertical) edges of $\partial C\setminus w(C)\subset \partial R$. The projection consists of one or more horizontal (vertical) line segments of total weight $\|w(C)\|$. We charge $w(C)$ to the common boundary of $R$ and $H_{w(C)}$, which is not partitioned further, hence every line segment $s$ on the boundary of the input polygon $P$ receives a charge of at most $\|s\|$, and the overall weight of all windows is bounded by $\per(P)$. As each window contributes to the perimeter of precisely two faces in $\mathcal{C}$, then
$\sum_{C\in \mathcal{C}}\per(C)=O(\per(P))$, as claimed.
\end{proof}

\subparagraph{Subdivision of Histograms into Tame and Thin Histograms}
%
Let $H$ be a histogram produced by the window partition algorithm (Lemma~\ref{lem:window}). We subdivide $H$ into tame or thin  histograms by a sweepline algorithm. Dumitrescu and T\'oth~\cite{DumitrescuT09} used similar methods to partition a histogram into histograms of constant geometric dilation.

\begin{lemma}\label{lem:sweep}
For every $\tau\in (0,1]$, an $x$-monotone histogram $H$ can be subdivided into a collection $\mathcal{T}$ of $\tau$-tame histograms such that
$\sum_{T\in \mathcal{T}}\per(T)=O(\tau^{-1}\,\per(H))$ and
$\sum_{T\in \mathcal{T}}\vper(T)=O(\vper(H))$.
\end{lemma}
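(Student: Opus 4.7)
The plan is to prove Lemma~\ref{lem:sweep} by a left-to-right sweep-line algorithm that scans the upper boundary $L$ of $H$ and introduces cuts each time the current piece would violate the $\tau$-tame condition. Since a $\tau$-tame histogram may be either $x$- or $y$-monotone, the algorithm is allowed to output both kinds; in particular, tall narrow sub-regions of $H$ are extracted as $\tau$-tame $y$-monotone thin pieces rather than forcibly subdivided along $x$.

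The sweep maintains the current piece between the last cut position $x_0$ and the sweepline $x$, together with a stack of local extrema of $L$ that tracks the worst chord ratio $\max_{a,b}\|L_{ab}\|/\|ab\|$ over horizontal chords $a,b\in L$ in the piece. Each time a new chord opens (when a vertical edge of $L$ reaches a previously seen height inside $[x_0,x]$), the ratio is updated; when it first exceeds $1+\tau$, a witness chord $ab$ is identified and the algorithm branches on the aspect ratio of the bump of $L_{ab}$ above $ab$: \emph{(i)} if its horizontal span dominates its vertical excursion, close the current piece with a single vertical cut at an internal local extremum of $L$, yielding a $\tau$-tame $x$-monotone piece; \emph{(ii)} if its vertical excursion dominates, excise the bump with two vertical cuts as a $\tau$-tame $y$-monotone thin piece; \emph{(iii)} otherwise, carve the bump with horizontal cuts into $\tau$-tame $x$-monotone strips of height $O(\tau\cdot w^*)$, where $w^*$ is the bump's width.

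The weight analysis proceeds by amortized charging against the vertical edges of $L$. Each vertical cut of length $c$ is placed at a local extremum of $L$ inside the witness subpath $L_{ab}$, whose vertical edges sum to at least $c$. Because the sweep moves only rightward, each vertical edge of $L$ appears in at most one such witness subpath, so the total vertical cut length is $O(\vper(H))$, which gives $\sum_{T\in \mathcal{T}}\vper(T)\leq \vper(H)+2\cdot O(\vper(H))=O(\vper(H))$. Horizontal cuts arise only in case~\emph{(iii)}: a bump of height $h^*$ and width $w^*$ is carved into $O(h^*/(\tau w^*))$ strips, contributing $O(h^*/\tau)$ to the total horizontal cut length. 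Since distinct bumps have disjoint boundaries and each satisfies $2h^*\leq \vper(L_{ab})$, summing yields total horizontal cut length $O(\tau^{-1}\vper(H))$, and hence $\sum_{T\in \mathcal{T}}\per(T)=\per(H)+O(\tau^{-1}\vper(H))=O(\tau^{-1}\per(H))$, as required.

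The main obstacle is the amortized charging: choosing each cut location at an \emph{internal local extremum} of $L$ (rather than at the sweepline position itself) is essential so that the discarded portion of $L$ has enough vertical variation to pay for the cut. A secondary subtlety is calibrating the three-way case split so that the extractions in case~\emph{(ii)} are certifiably $\tau$-tame $y$-monotone (requiring the excised bump to satisfy $w^*\leq \tau h^*/2$) and the horizontal-strip subdivisions in case~\emph{(iii)} do not accumulate excess perimeter beyond the $\tau^{-1}$ factor permitted by the lemma.
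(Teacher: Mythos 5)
Your plan fails at two essential points, and both failures trace back to the decision to use \emph{vertical} cuts in an $x$-monotone histogram. First, the charging argument for the vertical cuts is not valid: to ``close the current piece'' a vertical cut must run from $L$ all the way down to the base of $H$, so its length is the full height of $L$ at that abscissa, which has nothing to do with the vertical edges of the witness subpath $L_{ab}$ (those only sum to roughly $\tau\|ab\|$). Concretely, take $H$ of height $Y$ and width $W$ whose upper path is a plateau at height $Y$ carrying $k$ narrow notches of unit depth: each notch contains horizontal chords of ratio far exceeding $1+\tau$, so your sweep is forced to cut $k$ times, each cut of length about $Y$, for total cut length $\Theta(kY)$, while $\per(H)=O(Y+W+k)$ and $\vper(H)=O(Y+k)$. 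Thus both claimed bounds, $\sum_T\per(T)=O(\tau^{-1}\per(H))$ and $\sum_T\vper(T)=O(\vper(H))$, are violated, and no amortization can repair this as long as separating cuts must reach the base; note also that every vertical cut contributes twice to $\sum_T\vper(T)$, so the second bound is incompatible with vertical cuts from the outset. Second, the tameness of your output pieces is not established and is in general false: (a) after a vertical cut, the cut segment becomes part of the bounding path of the closed piece, and a horizontal chord with one endpoint on the cut and the other on a nearby vertical edge of $L$ forces the connecting subpath to climb the cut, giving ratio $1+2\delta'/\delta$ for a chord of length $\delta$ next to a descent of length $\delta'$, which is unbounded; (b) in your case (ii) the excised bump need not be $y$-monotone at all, since an $x$-monotone path confined to a narrow vertical strip can oscillate up and down arbitrarily often, and nothing in the construction controls vertical-chord ratios; (c) in case (iii) a strip of height $O(\tau w^{*})$ is not automatically $\tau$-tame, because tameness is a per-chord, scale-free condition: a short chord spanning many clipped teeth whose heights are comparable to the strip height has ratio $1+\Theta(\delta/\epsilon')$ for tooth width $\epsilon'$, again unbounded. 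Finally, even where your pieces are correct, emitting $y$-monotone $\tau$-tame pieces from an $x$-monotone input breaks the way this lemma is consumed later (tame means $x$-monotone and $1$-tame, thin means $y$-monotone and $\eps^{1/2}$-tame), so the orientations cannot be mixed freely.

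The paper's proof avoids all of this by sweeping \emph{top-down} with a horizontal line and making only \emph{horizontal} cuts: the first moment some chord $ab$ on the sweepline satisfies $\|L_{ab}\|=(1+\tau)\|ab\|$, the bump bounded by $ab$ and $L_{ab}$ is cut off and $L_{ab}$ is replaced by $ab$ in the remaining region. The excised bump is $\tau$-tame because every horizontal chord strictly inside it was examined earlier in the sweep without triggering a cut; the cut costs exactly $\|ab\|$ and is charged at rate $\tau^{-1}$ to the vertical edges of $L_{ab}$, which total exactly $\tau\|ab\|$ at that moment (the horizontal edges of $L_{ab}$ total $\|ab\|$ by $x$-monotonicity) and are removed from the evolving boundary, so they are never charged again; and since every inserted edge is horizontal, $\sum_T\vper(T)\le\vper(H)$ holds trivially. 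If you want to salvage your write-up, replace the left-to-right sweep and the three-way case analysis by this top-down rule; the ``first violation'' stopping condition is precisely what certifies tameness of the excised piece, and the horizontal cut is precisely what makes the $\tau^{-1}$-rate charge against $\vper$ work.
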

\begin{proof}
Let $H$ be an $x$-monotone histogram bounded by a horizontal line segment $pq$ from below, and an $x$-monotone $pq$-path $L$ from above.
We describe a sweepline algorithm that recursively subdivides $H$ with horizontal lines; see Fig.~\ref{fig:histogram2}(b). Initially, set $\mathcal{T}=\emptyset$. Sweep $H$ top-down with a horizontal line $\ell$, and incrementally update $L$, $H$, and $\mathcal{T}$ as follows. Whenever the sweepline $\ell$ contains a chord $ab$ of $H$ such that the subpath $L_{ab}$ of $L$ between $a$ and $b$ has weight $(1+\tau)\|ab\|$, then we add the simple polygon bounded by $ab$ and $L_{ab}$ into $\mathcal{T}$, and replace $L_{ab}$ with the line segment $ab$ in both $L$ and $H$. When the sweepline $\ell$ reaches the base of $H$, we add $H$ to $\mathcal{T}$, and return $\mathcal{T}$.

First note that each polygon added into $\mathcal{T}$ is a $\tau$-tame histogram. By construction, $\sum_{T\in \mathcal{T}}\per(T)$ is proportional to the sum of $\per(H)$ and the total weight of all horizontal chords $ab$ inserted by the algorithm. At the time when the algorithm creates a tame histogram bounded by $ab$ and $L_{ab}$, we can charge the weight $\|ab\|$ of the chord $ab$ to the vertical edges of the path $L_{ab}$. Since the algorithm inserts only horizontal edges, all vertical edges along this path lie on the boundary of the input polygon. Consequently, the total weight of vertical edges of $L_{ab}$ is $\tau\,\|ab\|$; and are not charged in subsequent steps of the recursion.
Each vertical segment $s$ on the boundary of $H$ receives a charge of at most $\tau^{-1}\|s\|$. Overall, the total weight of the edges inserted by the algorithm is $O(\tau^{-1}\,\vper(H))$, hence
$\sum_{T\in \mathcal{T}}\per(T)\leq \per(H)+O(\tau^{-1}\,\vper(H)) \leq O(\tau^{-1}\,\per(H))$, as required.
Since all subdivision edges are horizontal, then
$\sum_{T\in \mathcal{T}}\vper(T)=O(\vper(H))$.
\end{proof}

\subparagraph{Proof of Lemma~\ref{lem:tiling}.}
The combination of Lemmas~\ref{lem:window} and~\ref{lem:sweep} implies Lemma~\ref{lem:tiling}.

\tilinglemma*

\begin{proof}
Let $P$ be a rectilinear (weakly) simple polygon. By Lemma~\ref{lem:window}, we can partition $P$ into a collection
$\mathcal{H}$ of histograms such that
$\sum_{H\in \mathcal{H}}\per(H)=O(\per(P))$.

Using Lemma~\ref{lem:sweep} with $\tau=1$, we can partition each $x$-monotone histogram $H\in \mathcal{H}$ into a collection $\mathcal{T}(H)$ of tame histograms of total perimeter $O(\per(H))$.
This implies that their horizontal perimeter is also bounded by $O(\per(H))$.
Using  Lemma~\ref{lem:sweep} with $\tau=\eps^{1/2}$, we can also partition each $y$-monotone histogram $H\in \mathcal{H}$ into a collection $\mathcal{T}(H)$ of thin histograms of total perimeter $O(\eps^{-1/2}\per(H))$ and total horizontal perimeter $O(\per(H))$.

Overall, we obtain a subdivision of $P$ into a collection $\mathcal{F}=\bigcup_{H\in \mathcal{H}} \mathcal{T}(H)$ of histograms, each of which is either tame or thin, of total perimeter $O(\eps^{-1/2}\per(P))$ and total horizontal perimeter $O(\per(P))$, as required.
\end{proof}

\section{Generalized Shallow Light Trees}
\label{sec:SLT}

Shallow-light trees (\textsf{SLT}) were introduced by Awerbuch et al.~\cite{AwerbuchBP90}
and Khuller et al.~\cite{KhullerRY93}. Given a source $s$ and a point set $S$ in a metric space, an \emph{$(\alpha,\beta)$}-\textsf{SLT} is a Steiner tree rooted at $s$ that contains a path of weight at most $\alpha \,\|st\|$ between the \emph{source} $s$ and any point $t\in S$, and has weight at most $\beta\,\|\MST(S)\|$. We build on the following basic variant of \textsf{SLT} between a source $s$ and a set $S$ of collinear points in the plane; see Fig.~\ref{fig:SLT}.

\begin{figure}[htbp]
 \centering
 \includegraphics[width=0.2\textwidth]{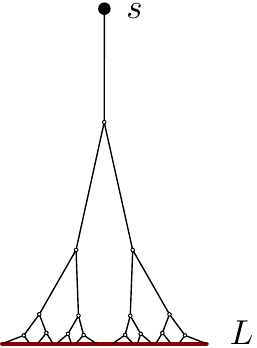}
 \caption{A shallow-light tree for a source $s$ and a set of collinear points on a line segment $L$. 
    }
    \label{fig:slt}
\end{figure}

\begin{lemma}[Solomon~{\cite[Section~2.1]{Solomon15}}]\label{lem:shallow}
Let $\eps\in (0,1]$, let $s=(0,\eps^{-1/2})$ be a point on the $y$-axis, and let $S$ be a set of points in the line segment $L=[-\frac12,\frac12]\times \{0\}$ in the $x$-axis. Then there exists a geometric graph of weight $O(\eps^{-1/2})$ that contains, for every point $t\in L$, an $st$-path $P_{st}$ with $\|P_{st}\|\leq (1+\eps)\,\|st\|$.
\end{lemma}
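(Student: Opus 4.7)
Set $h = \eps^{-1/2}$, so that $s = (0, h)$ and every $t \in L$ satisfies $h \leq \|st\| \leq h + 1/(2h)$. Since $L$ subtends angle only $\Theta(\sqrt{\eps})$ at $s$, a $(1+\eps)$-stretch $st$-path must be nearly aligned with $\overrightarrow{st}$ for most of its length. The plan is to build a recursive binary Steiner tree rooted at $s$ and aimed at $L$, include the segment $L$ itself in the graph (weight~$1$), and route each $t \in L$ by descending the tree to the closest leaf and then sliding a short distance along $L$.

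Concretely, for $k = 0, 1, \ldots, K$ with $K = \lceil \log_2 \eps^{-1/2}\rceil$, I would dyadically partition $L$ into $2^k$ equal sub-intervals $I_{k,1},\ldots,I_{k,2^k}$, and for each sub-interval place a Steiner point $p_{k,j}$ on the ray from $s$ to the midpoint of $I_{k,j}$, at some height $h_k$. The edges of the tree are the parent-child pairs $\{p_{k,j}, p_{k-1, \lceil j/2 \rceil}\}$, together with a short vertical drop from each level-$K$ leaf to $L$.

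For the stretch analysis, given $t \in L$, follow the $K$ tree edges from $s$ down to the leaf $p_{K, j(t)}$ whose sub-interval contains $t$, then slide along $L$ by at most $2^{-K-1} = O(\sqrt{\eps})$. Both endpoints of the level-$k$ edge lie on rays from $s$ that target points within the same level-$(k{-}1)$ sub-interval (width $2^{-(k-1)}$), so the edge makes angle $O(2^{-k}/h) = O(\sqrt{\eps}\cdot 2^{-k})$ with $\overrightarrow{st}$. Applying $\|e\| \le \|\proj_{st}(e)\|(1 + O(\mathrm{angle}^2))$ edge by edge and summing the projections yields a tree-path length of $\|st\|(1 + O(\eps \sum_{k\ge 1} 4^{-k})) = (1 + O(\eps))\|st\|$. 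The final slide adds $O(\sqrt{\eps}) = O(\eps h) = O(\eps \|st\|)$, and rescaling $\eps$ by a constant absorbs the hidden factors to give exactly $(1+\eps)\|st\|$.

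The main obstacle is the weight. With the naive dyadic schedule $h_k = h/2^k$, the $2^k$ edges at level $k$ each have length $O(h/2^k)$, contributing $O(h)$ per level and $O(h \log \eps^{-1})$ overall — short of the target by a logarithmic factor. Solomon~\cite[\S 2.1]{Solomon15} removes this log by using a non-uniform height schedule, equivalently, by truncating the dyadic tree after $O(1)$ levels and handling the deepest portion with an angle-bounded fan of total weight $O(\eps^{-1/2})$. Combined with the weight~$1$ contribution of $L$, this gives total weight $O(h) + 1 = O(\eps^{-1/2})$, as required.
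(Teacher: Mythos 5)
Your setup (dyadic decomposition of $L$, descend a tree of depth $O(\log\eps^{-1/2})$, slide at most $O(\sqrt{\eps})$ along $L$, angle-squared accounting for the stretch) matches the skeleton of the paper's argument, and your stretch analysis is sound as far as it goes. But the proposal has a genuine gap exactly where the lemma has its content: you correctly compute that the naive schedule gives weight $O(\eps^{-1/2}\log\eps^{-1})$, and then the log factor is removed only by appeal to Solomon, with a paraphrase (``non-uniform height schedule, equivalently truncating after $O(1)$ levels and handling the deepest portion with an angle-bounded fan'') that does not work as stated. If you truncate after $O(1)$ levels, the surviving leaves sit at height $\Theta(\eps^{-1/2})$ above sub-intervals of width $\Theta(1)$; covering such a sub-interval by near-vertical segments at the granularity $O(\sqrt{\eps})$ needed to keep the final slide within the additive budget $O(\eps)\|st\|=O(\sqrt{\eps})$ costs $\Theta(\eps^{-1/2})$ segments of length $\Theta(\eps^{-1/2})$ each, i.e.\ weight $\Theta(\eps^{-1})$, while a fan of only $O(1)$ segments forces slides of length $\Theta(1)=\Theta(\sqrt{\eps})\,\|st\|$, violating the stretch. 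Likewise, within your Steiner-points-on-rays framework, any height schedule with $\sum_k 2^k(h_{k-1}-h_k)=O(\eps^{-1/2})$ leaves the deepest Steiner points at height $\Theta(\eps^{-1/2})$, and you again cannot afford the final connections. So neither of your two suggested mechanisms yields the bound, and the crucial balancing idea is missing from the proposal.

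What the paper actually does (and what you would need to supply) is a graded \emph{angle} schedule rather than a height schedule: with $\eps=2^{-k}$, each of the $2^{k+1}$ sample points $t_q$ (spacing $<\eps$, so the slide issue disappears) gets a single segment $\ell_q$ whose direction deviates from vertical by $2^{(j-k)/2}$, where $j$ is the first level of the dyadic tree at which $t_q$ is an interval endpoint, and whose horizontal projection spans that interval. Deeper segments are \emph{more} tilted and hence much shorter: at level $j$ there are $2^j$ segments of length about $2^{-j}/\sin(2^{(j-k)/2})\le 2\cdot 2^{(k-3j)/2}$, so the per-level weight $2^{(k-j)/2}$ sums geometrically to $O(\eps^{-1/2})$, while the per-level excess over the $y$-projection, about $2^{-j}\cdot 2^{(j-k)/2}=2^{-(j+k)/2}$, sums geometrically to $O(\sqrt{\eps})\le O(\eps)\|st\|$; the crossing pattern of the two segments at the endpoints of each dyadic interval guarantees a $y$-monotone $t_q s$-path inside the union. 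This trade-off (length $\approx$ horizontal extent$/\alpha$ versus excess $\approx$ horizontal extent$\cdot\alpha$, with $\alpha$ growing like $2^{(j-k)/2}$) is the whole point of the lemma, and it needs to be made explicit rather than cited.
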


We note that the weight analysis of the $st$-path $P_{st}$ in an \textsf{SLT} does not use  angle-boundedness. In particular, an \textsf{SLT} may contain short edges of arbitrary directions close to $t$, but the long edges are nearly vertical.
%

In Section~\ref{ssec:stairs}, we generalize Lemma~\ref{lem:shallow}, and construct shallow-light trees between a source $s$ and points on a staircase path. In Section~\ref{ssec:combination}, we show how to combine two shallow-light trees to obtain a spanner between point pairs on two staircase paths.

\subsection{Shallow-Light Trees for Staircase Chains}
\label{ssec:stairs}

We present a new, slightly modified proof for Solomon's result on \textsf{SLT}s between a single source $s$ and a horizontal line segment, and then adapt the modified proof to obtain a \textsf{SLT} between $s$ and an $x$- and $y$-monotone polygonal chain. In the proof below, we use the Taylor estimates $\cos x\geq 1-x^2/2$ and $\sin x\geq x/2$ for $x\leq \pi/3$.

\begin{figure}[htbp]
 \centering
 \includegraphics[width=0.75\textwidth]{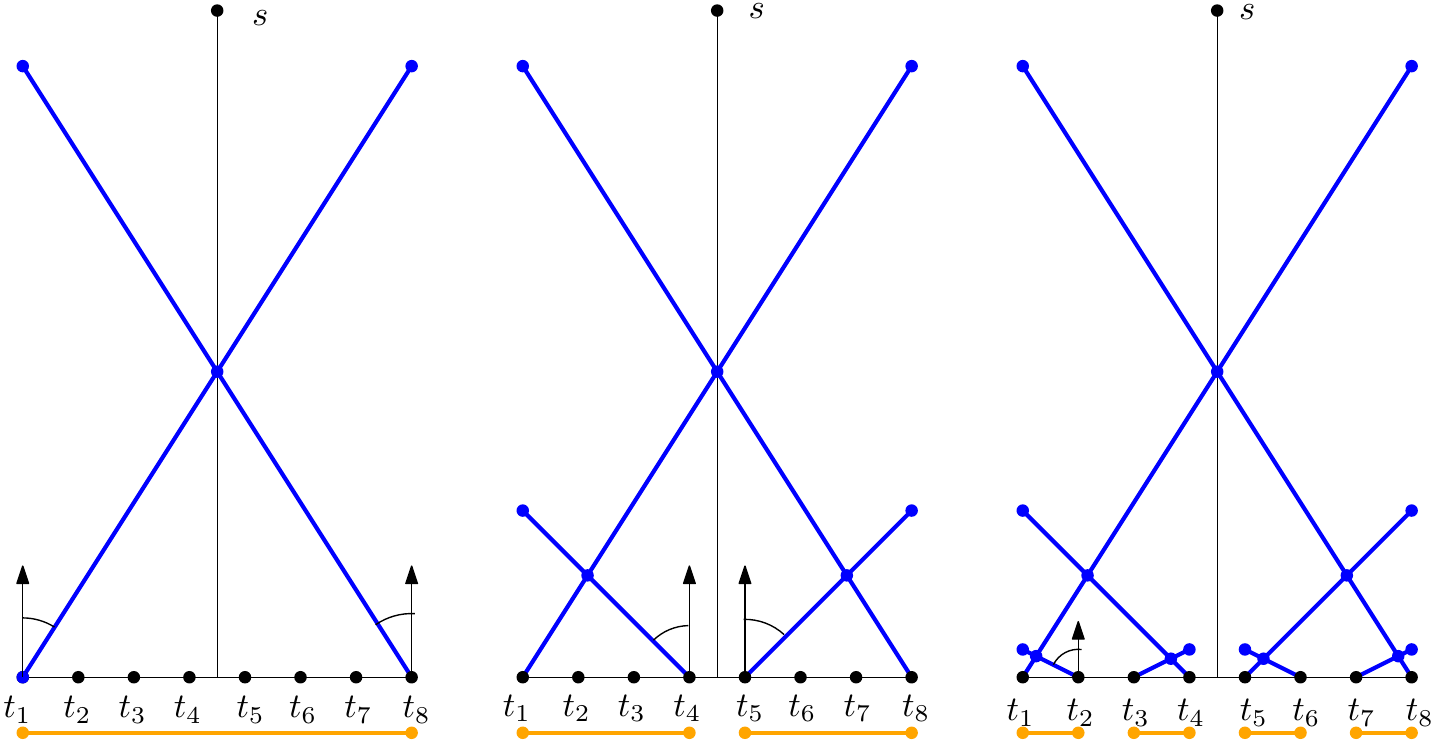}
 \caption{The segments added to graph $G$ at level $j=0,1,2$ for $m=2^3=8$ points.
 The intervals $[t_a,t_b]$ at level $j$ are indicated below the line $L$.}
\label{fig:SLT}
\end{figure}

\begin{proof}[Alternative proof for Lemma~\ref{lem:shallow}]
Assume w.l.o.g.\ that $\eps=2^{-k}$ for $k\in \mathbb{N}$. Let $T=\{t_i: i=1,\ldots , 2^{k+1}\}$ be $2^{k+1}$ points on the line segment $L=[-\frac12,\frac12]\times \{0\}$ with uniform $1/(2^{k+1}-1)< \eps$ spacing between consecutive points.
Consider the standard binary partition of $\{1,\ldots , 2^{k+1}\}$ into intervals, associated with a binary tree: At level 0, the root corresponds to the interval $[1,2^{k+1}]$ of all $2^{k+1}$ integers. At level $j$, we have intervals $[i\cdot 2^{k-j}+1,(i+1)\cdot 2^{k-j}]$ for $i=0,\ldots  ,2^j-1$. Note that if a point $q$ is the left (resp., right) endpoint of an interval at a level $j$, then $q$ is the left (resp., right) endpoint of all descendant intervals that contains $q$.

For every $q\in \{1,\ldots , 2^{k+1}\}$, we define a line segment $\ell_q$ with one endpoint at $t_q$: Let $j\geq 0$ be the smallest level such that $q$ is an endpoint of some interval $I_q$ at level $j$. Let $T_q$ be the line segment along the $x$-axis spanned by the points corresponding to $I_q$. If $q$ is the left (resp., right) endpoint of $I_q$, then let $\ell_q$ be the line segment of direction $\frac{\pi}{2}-2^{(j-k)/2}$ (resp., $\frac{\pi}{2}+2^{(j-k)/2})$
such that its orthogonal projection to the $x$-axis is $T_q$; see Fig.~\ref{fig:SLT}.
Note that for $j=0$, we use directions $\frac{\pi}{2}\pm 2^{-k/2}=\frac{\pi}{2}\pm \sqrt{\eps}$.
Let $G$ be the union of segments $\ell_q$ for $q=1,\ldots , 2^{k+1}$,
the horizontal segment $L$, and the vertical segment from $s$ to the origin.

\smallskip\noindent\emph{Lightness analysis.}
We show that $\|G\|=O(\eps^{-1/2})$. We have $\|L\|=1$, and the length of the vertical segment between $s$ and the origin is $\eps^{-1/2}$. At level $j$ of the binary tree, we construct $2^j$ segments $\ell$, each of length $\|\ell\|\leq 2^{-j}/\sin(2^{(j-k)/2})\leq 2\cdot 2^{(k-3j)/2}$. Summation over all levels yields
$\sum_{j=0}^{k} 2^{j}\cdot 2\cdot 2^{(k-3j)/2}= 2^{k/2}\cdot 2\cdot \sum_{j=0}^k 2^{-j/2}= O(2^{k/2}) = O(\eps^{-1/2})$.

\smallskip\noindent\emph{Source-stretch analysis.}
We show that $G$ contains an $st_q$-path of length $(1+O(\eps))\|st_q\|$ for all $q=1,\ldots, 2^{k+1}$. First note that $\|st_q\|\geq |y(s)-y(t_q)|= \eps^{-1/2}$.
For each interval $[t_a,t_b]$ in the binary tree, $\ell_a$ and $\ell_b$ have positive and negative slopes, respectively, and so they cross above the interval $[t_a,t_b]$. Consequently, for every point $t_q$, the union of the $k+1$ segments corresponding to the intervals that contain $t_q$ must contain a $y$-monotonically increasing path $P_q$ from $t_q$ to $s$. The $y$-projection of this path has length $\eps^{-1/2}$. Consider one edge $e$ of $P_q$ along a segment $\ell$ at level $j$, which has direction $\frac{\pi}{2}\pm \alpha=\frac{\pi}{2}\pm 2^{(j-k)/2}$. The difference between the length of $e$ and the $y$-projection of $e$ is
\begin{align*}
\|e\|(1-\cos \alpha)
 &\leq \|\ell\|(1-\cos\alpha)
 \leq 2^{-j} \frac{1-\cos\alpha}{\sin \alpha}
 \leq 2^{-j} \frac{\alpha^2/2}{\alpha/2}\\
  &= 2^{-j} \alpha
  = 2^{-j}\cdot 2^{(j-k)/2} =2^{-(j+k)/2}.
\end{align*}
Since $P_q$ contains at most one edge in each level, summation over all edges of $P_q$ yields
\[
\sum_{j=0}^{k} 2^{-(j+k)/2}
= 2^{-k/2} \sum_{j=0}^{k} 2^{-j/2}
= O(\eps^{1/2})\leq \|st_q\|\cdot O(\eps).
\]

Finally, for an arbitrary point $t\in L$, we have $\|st\|\geq |y(s)-y(t)|=\eps^{-1/2}$, and $G$ contains an $st$-path that consists of an $st_q$-path from $s$ to the point $t_q$ closest to $t$, followed by the horizontal segment $t_qt$ of weight $\|t_qt\|< 1/2^k\leq \eps$. The total weight of this path is $(1+O(\eps))\|st\|$.
After suitable scaling of the constant coefficients, $G$ contains a path of weight at most $(1+\eps)\|st\|$ for any $t\in L$, as required.
\end{proof}

Note that we have shown that the graph $G$ contains an $st$-path $P_{st}$ with $\|P_{st}\|\leq (1+\eps)\eps^{-1/2}$ for every point $t\in L$; and combined this upper bound with the trivial lower bound $\eps^{-1/2}=|y(s)-y(t)|\leq \|st\|$. We are now ready to generalize Lemma~\ref{lem:shallow} to staricases.

\begin{lemma}\label{lem:stairs}
Let $\eps\in (0,1]$, let $s=(0,\eps^{-1/2})$ be a point on the $y$-axis, and let $L$ be an $x$- and $y$-monotone increasing staircase path between the vertical lines $x=\pm \,\frac12$, such that the right endpoint of $L$ is $(\frac12,0)$ on the $x$-axis. Then there exists a geometric graph $G$ comprised of $L$ and additional edges of weight $O(\eps^{-1/2})$ such that $G$ contains, for every $t\in L$, an $st$-path $P_{st}$ with
$\|P_{st}\|  \leq (1+O(\eps))\,|y(s)-y(t)| \leq (1+O(\eps))\,\|st\|$.
\end{lemma}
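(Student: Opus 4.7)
The plan is to mimic the alternative proof of Lemma~\ref{lem:shallow}, but with the uniform samples placed directly on the staircase $L$ instead of on a horizontal base segment. Let $m=2^{k+1}$ with $\eps=2^{-k}$, and pick $t_1,\ldots,t_m\in L$ whose $x$-coordinates are uniformly spaced in $[-\frac12,\frac12]$ (well defined by the $x$-monotonicity of $L$, with vertical jumps resolved by a fixed convention). Following the original proof, I assign each index $q$ to the smallest level $j_q$ at which it is an endpoint of some interval of the standard binary partition of $\{1,\ldots,m\}$, and define $\ell_q$ to start at $t_q$, have direction $\pi/2\mp 2^{(j_q-k)/2}$ according to whether $q$ is a left or right endpoint, and $x$-projection of length $2^{-j_q}$. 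The graph $G$ will consist of $L$, the segments $\ell_q$ for all $q$, and the short vertical trunk from $s$ down to the origin (which connects to the apex of the level-$0$ X).

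The weight bound is essentially unchanged from the original. Since $\|\ell_q\|=2^{-j_q}/\sin(2^{(j_q-k)/2})$ depends only on the $x$-projection and the angle, the same level-by-level geometric summation yields $\|G\setminus L\|=O(\eps^{-1/2})$, \emph{independently} of the $y$-coordinates of the samples.

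For the stretch, I will show that for every $t\in L$ the path $P_{st}$ has two parts: (i) a sub-path of $L$ from $t$ to the nearest sample $t_q$ in $x$-coordinate, and (ii) an ascent of the binary tree from $t_q$ to $s$ that follows the crossings of paired segments $\ell_a,\ell_b$ at successive levels. Part (ii) has total $y$-travel $y(s)-y(t_q)$ and per-level excess $O(2^{-(j+k)/2})$, summing to $O(\eps^{1/2})$ exactly as in the original argument, because the angles and $x$-projections are identical. Part (i) contributes only $O(\eps)$ of horizontal excess, since $|x(t)-x(t_q)|\le 1/m$, while any vertical travel along $L$ is absorbed into $|y(s)-y(t)|$ with no excess. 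Using $|y(s)-y(t)|\ge \eps^{-1/2}$, the total excess $O(\eps^{1/2})$ is bounded by $O(\eps)\,|y(s)-y(t)|$, establishing the claim.

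The step I expect to be the main obstacle is verifying that the paired segments $\ell_a,\ell_b$ at every binary-tree level actually cross inside their common $x$-range. A direct computation from the parametrizations shows that the crossing lies in both segments iff $|y_a-y_b|\le 2^{(k-3j)/2}$, a constraint that tightens to $|y_a-y_b|\le \eps$ at the finest level and may fail wherever $L$ contains a vertical edge longer than $\eps$ within a fine-level interval. I plan to resolve this by exploiting that vertical sub-paths of $L$ contribute only to $y$-travel and carry no stretch penalty: whenever the ascent from $t_q$ would require a crossing that is missing at some level $j$, the path first detours upward along the offending vertical edge of $L$ to reach a sample whose associated binary-tree level is coarser (and whose crossing constraint is correspondingly looser), and only then enters the binary-tree ascent. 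Since each such detour uses pure vertical travel on $L$, it adds no horizontal excess, so the $O(\eps^{-1/2})$ weight bound and the $(1+O(\eps))|y(s)-y(t)|$ stretch bound are preserved.
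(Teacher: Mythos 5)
You have correctly isolated the crux: with straight segments anchored on $L$, the paired segments $\ell_a,\ell_b$ of a level-$j$ interval cross only when the rise of $L$ between the two anchors is at most about $2^{(k-3j)/2}$, and since the lemma puts no bound on the height of $L$, this can fail at every level, including level $0$. But your proposed repair does not close the gap. Taken literally it is not well defined: the samples $t_1,\dots,t_m$ have distinct $x$-coordinates, so \emph{pure vertical} travel along a vertical edge of $L$ can never reach ``a sample whose associated binary-tree level is coarser''; getting to such a sample forces you to walk along $L$, which contains horizontal edges. And if you do walk along $L$ to the higher anchor of the offending interval, the horizontal excess can be as large as the interval width $2^{-j}$, whereas the failure of the crossing only guarantees a rise of order $\eps^{-1/2}2^{-3j/2}$; at coarse levels this is fatal. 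Concretely, let $L$ run at height $-\eps^{-3/4}$ across almost all of $[-\frac12,\frac12]$ and climb to $0$ along a single tall vertical edge just left of $x=\frac12$: for $t$ near the left end, $|y(s)-y(t)|=\Theta(\eps^{-3/4})$, so the permitted excess is $O(\eps\cdot\eps^{-3/4})=O(\eps^{1/4})$, yet the level-$0$ segments miss each other (the rise $\eps^{-3/4}$ exceeds their height $\approx\eps^{-1/2}$), there is no coarser level to retreat to, and any detour along $L$ to the right anchor costs $\Theta(1)$ in horizontal length. Note also that your claim that vertical travel ``carries no stretch penalty'' requires the final path to be $y$-monotone, which your patched routing is never shown to be.

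The paper resolves this by changing the construction rather than patching the routing: the segment issued from a \emph{left} endpoint is replaced by an $x$- and $y$-monotone path $\gamma_q$ that proceeds in the fixed direction $\frac{\pi}{2}-2^{(j-k)/2}$ and, whenever it meets a vertical edge of $L$, climbs that edge (an edge already in $G$, hence free of extra weight) and then resumes the same direction; right-endpoint segments remain straight. The non-vertical parts of $\gamma_q$ have the same direction and the same total $x$-projection as your $\ell_q$, so the $O(\eps^{-1/2})$ weight bound and the per-level excess $2^{-(j+k)/2}$ carry over verbatim, while the climbing guarantees that $\gamma_a$ rises at least as high as $L$ over its whole interval, hence $\gamma_a$ and $\gamma_b$ always cross and a $y$-monotone ascent from $t_q$ to $s$ exists no matter how tall the vertical edges of $L$ are. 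To fix your write-up, fold the vertical edges of $L$ into the up-right segments themselves in this manner, instead of rerouting around missing crossings.
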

We can adjust the construction above as follows; refer to Fig.~\ref{fig:SLT2}.
\begin{proof}
Assume w.l.o.g.\ that $\eps=2^{-k}$ for some $k\in \mathbb{N}$.
Let $T=\{t_i: i=1,\ldots , 2^{k+1}\}$ be $2^{k+1}$ points in $L$ on equally spaced vertical lines, with spacing $1/(2^{k+1}-1)< \eps$.
Consider the standard binary partition of  $\{1,\ldots , 2^{k+1}\}$ into intervals as in the previous proof.

\begin{figure}[htbp]
 \centering
 \includegraphics[width=0.8\textwidth]{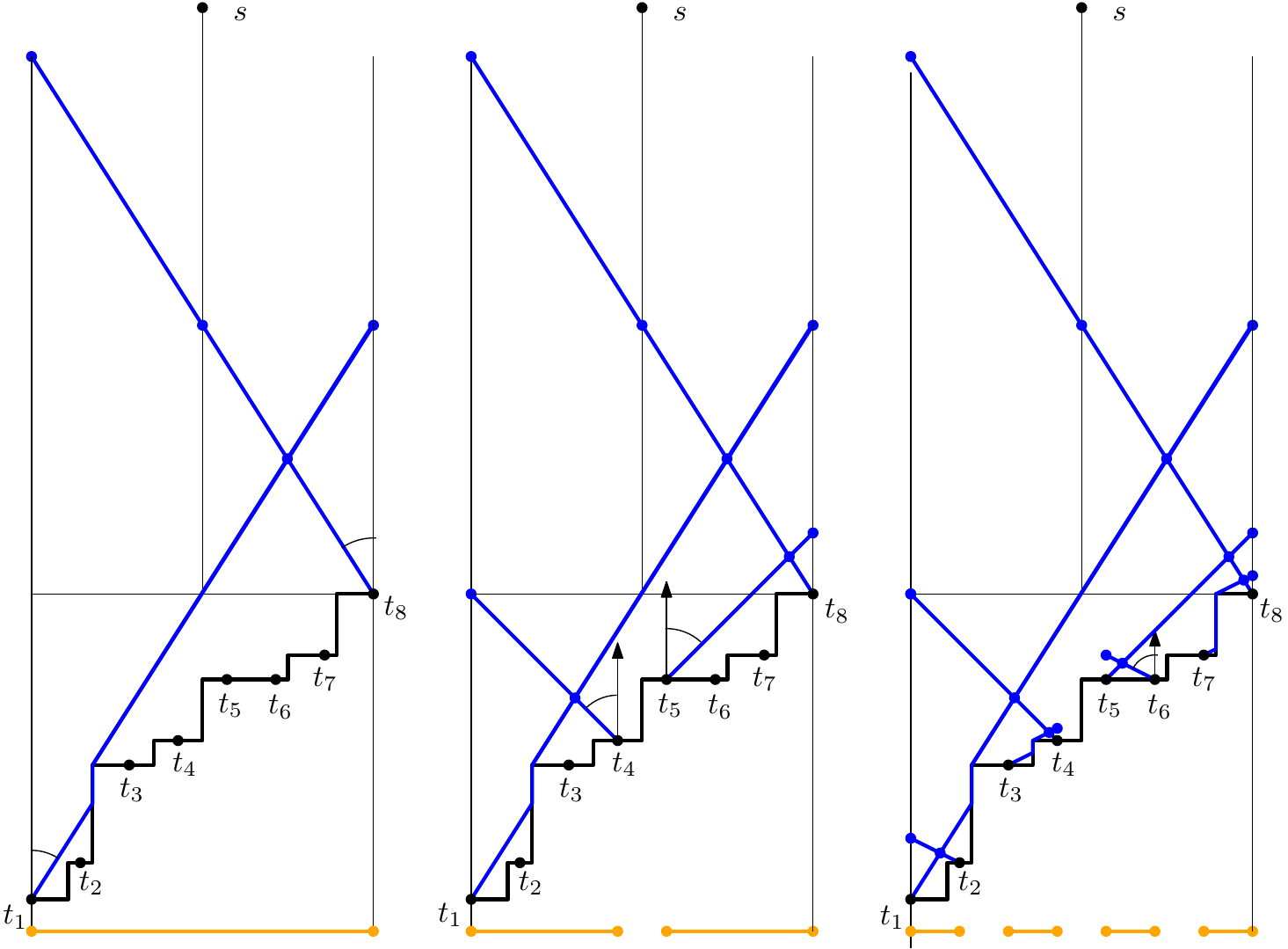}
 \caption{The paths $\gamma_q$ added to graph $G$ at level $j=0,1,2$ for $m=2^3=8$ points.
 The intervals $[t_a,t_b]$ at level $j$ are indicated below the staircase path $L$.}
    \label{fig:SLT2}
\end{figure}

For every $q\in \{1,\ldots , 2^{k+1}\}$, we define a polygonal path $\gamma_q$ with one endpoint at $t_q$; see Fig.~\ref{fig:SLT2}. Let $j\geq 0$ be the smallest level such that $t_q$ is an endpoint of some interval
$I_q$ at level $j$. If $t_q$ is the right endpoint of $I_q$, then let $\gamma_q$
be the line segment of direction $\frac{\pi}{2}+2^{(j-k)/2}$ such that its
$x$-projection is $T_q$. If $t_q$ is the left endpoint of $I_q$, then $\gamma_q$ will be an $x$- and $y$-monotone path whose $x$-projection is $T_q$, and its edges will be vertical segments along $L$ and segments of direction $\alpha_q=\frac{\pi}{2}-2^{(j-k)/2}$. Specifically, $\gamma_q$ starts from $t_q$ with a line of direction $\alpha_q$. Whenever $\gamma_q$ encounters a vertical edge of $L$, it follows it upward until its upper endpoint, and then continues in direction $\alpha_q$.

Let $G$ be the union of all paths $\gamma_q$ for $q=1,\ldots , 2^{k+1}$, as well as the path $L$, and the vertical segment from $s$ to the origin. This completes the construction of $G$.

\smallskip\noindent\emph{Lightness analysis.}
We show that $\|G\|=\|L\|+O(\eps^{-1/2})$. The distance between $s$ and $L$ is $\eps^{-1/2}$. For every $q\in \{1,\ldots, 2^{k+1}\}$, the path $\gamma_q$ is composed of vertical segments along $L$, and nonvertical segments whose total weight is the same as $\|\ell_q\|$ in the proof of Lemma~\ref{lem:shallow}, where we have seen that
$\sum_{q=1}^{2^{k+1}} \|\ell_q\| = O(\eps^{-1/2})$. Consequently, $\|G\|=\|L\|+O(\eps^{-1/2})$.

\smallskip\noindent\emph{Source-stretch analysis.}
We show that $G$ contains an $st_q$-path of weight $(1+O(\eps))\|st_q\|$ for all $q=1,\ldots, 2^{k+1}$. Denoting $y(t_q)$ the $y$-coordinate of point $t_q$, we have $\|st_q\|\geq |y(s)-y(t_q)|=\eps^{-1/2}+|y(t_q)|$.
For each interval $[t_a,t_b]$ in the binary tree, the paths $\gamma_a$ and $\gamma_b$ cross above the portion of $L$ between $t_a$ and $t_b$. Consequently, for every point $t_q$, the union of the $k+1$ paths $\gamma$ corresponding to the intervals that contain $t_q$ must contain a $y$-monotonically increasing path $P_q$ from $t_q$ to $s$.
The $y$-projection of this path has weight  $|y(s)-y(t_q)|\eps^{-1/2}+|y(t_q)|$.
Some of the edges of this path may be vertical.
Consider the union of all nonvertical edges $e$ of $P_q$ along a path $\gamma$ at level $j$,
which all have direction $\frac{\pi}{2}\pm 2^{(j-k)/2}$. The difference
between the length of $e$ and the $y$-projection of $e$ is bounded by the same analysis as in the proof of Lemma~\ref{lem:shallow}. Summation over all levels yields $O(\eps^{1/2})\leq \|st_q\|\cdot O(\eps)$.

Finally, for an arbitrary point $t\in L$, we have $\|st\|\geq |y(s)-y(t)|=\eps^{-1/2}+|y(t)|$, and $G$ contains an $st$-path $P_{st}$
comprising an $st_q$-path from $s$ to the closest point $t_q$ to the right of $t$, followed by an $x$- and $y$-monotone path along $L$ in which the total length of the horizontal edges is bounded by $1/2^k\leq \eps$ (and the length of vertical segments might be arbitrary). The vertical segments between $t_q$ and $t$ do not contribute to the error term $\|st\|-|y(s)-y(t)|$. The analysis in the proof of Lemma~\ref{lem:shallow}
yields $\|P_{st}\|-|y(s)-y(t)|
\leq O(\sqrt{\eps})
\leq O(\eps)\, |y(s)-y(t)|$.
Hence 
$\|P_{st}\|  
\leq (1+O(\eps))\,|y(s)-y(t)| 
\leq (1+O(\eps))\,\|st\|$, as required.
\end{proof}

\subsection{Shallow-Light Trees for $y$-Monotone Chains}
\label{ssec:monotone}

We further generalize Lemma~\ref{lem:stairs}, and construct an SLT between a source $s$ and a $y$-monotone rectilinear path $L$ at distance $\eps^{-1/2}$ from $s$.

\begin{lemma}\label{lem:monotone}
Let $\eps\in (0,1]$, let $s=(0,\eps^{-1/2})$ be a point on the $y$-axis, and let $L$ be a $y$-monotone rectilinear path such that the top endpoint of $L$ is on the $x$-axis, the total weight of the horizontal edges in $L$ is at most 1, and $L$ lies in the vertical strip between the lines $x=\pm \,\frac12$. Then there exists a geometric graph $G$ comprised of $L$ and additional edges of weight $O(\eps^{-1/2})$ such that $G$ contains, for every $t\in L$, an $st$-path $P_{st}$ with
$\|P_{st}\|  \leq (1+O(\eps))\,|y(s)-y(t)| \leq (1+O(\eps))\,\|st\|$.
\end{lemma}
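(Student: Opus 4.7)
The plan is to adapt the construction of Lemma~\ref{lem:stairs} to a $y$-monotone rectilinear $L$ by parameterizing sample points via horizontal arc length instead of $x$-coordinate. Write $\eps=2^{-k}$ and let $U\leq 1$ denote the total horizontal weight of $L$. Place $2^{k+1}$ sample points $t_q$ on $L$ with $u(t_q)=qU/2^{k+1}$, where $u(\cdot)$ is horizontal arc length along $L$ from its top endpoint. Consecutive samples are within $\eps$ in $u$, and by $y$-monotonicity the $L$-sub-path from $t_q$ to $t_{q+1}$ has weight $|y(t_{q+1})-y(t_q)|+U/2^{k+1}$.

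Using the same binary partition of $\{1,\ldots,2^{k+1}\}$ as in Lemma~\ref{lem:stairs}, for each $q$ with smallest endpoint level $j$, I would define a $y$-monotone increasing path $\gamma_q$ starting at $t_q$. Let $I_q$ be the interval at level $j$ containing $q$ as an endpoint, with $u$-midpoint $m_{I_q}\in L$. The diagonal segments of $\gamma_q$ have direction $\frac{\pi}{2}\pm 2^{(j-k)/2}$, with the sign chosen so that the $x$-projection of $\gamma_q$'s diagonals advances from $x(t_q)$ toward $x(m_{I_q})$. Exactly as in Lemma~\ref{lem:stairs}, whenever the diagonal encounters a vertical edge of $L$, $\gamma_q$ follows it upward before resuming its diagonal direction; horizontal edges of $L$ are crossed without incident. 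Take $G$ to be the union of $L$, the segment from $s$ to the top endpoint of $L$, and all $\gamma_q$.

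The weight analysis mirrors Lemma~\ref{lem:stairs}: at level $j$ the diagonal portion of each newly-introduced $\gamma_q$ has $x$-extent at most $U/2^{j+1}\leq 2^{-j}$, so length $O(2^{(k-3j)/2})$; summed over the $2^j$ paths per level this is $\sum_{j=0}^{k}2^j\cdot O(2^{(k-3j)/2})=O(2^{k/2})=O(\eps^{-1/2})$, and the vertical ``hugging'' portions lie along $L$ and cost nothing extra. For the stretch, the $st_q$-path in $G$ is assembled by concatenating $\gamma$-segments across the $O(k)$ intervals of the binary tree containing $q$, with diagonal excess $\leq 2^{-(j+k)/2}$ at level $j$ totaling $O(\sqrt{\eps})\leq O(\eps)\,|y(s)-y(t_q)|$; appending the $L$-sub-path from the nearest $t_q$ to a general $t\in L$ adds at most $|y(t_q)-y(t)|+\eps$. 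Combining these contributions yields $\|P_{st}\|\leq (1+O(\eps))\,|y(s)-y(t)|$, as required.

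The main obstacle is verifying the key geometric property that the paths $\gamma_a,\gamma_b$ arising from the two endpoints of every interval $I$ in the binary tree actually meet at a point above the sub-path $L_I$. In the staircase setting of Lemma~\ref{lem:stairs} this was automatic from $x$-monotonicity, since $u$-order along $L$ coincided with $x$-order. Here, one must instead exploit that the $x$-extent of $L_I$ is bounded by its horizontal arc length $U/2^{j+1}$, and that $\gamma_a,\gamma_b$ traverse this $x$-extent from opposite $u$-ends of $I$ toward the common target $m_I$ with oppositely-signed diagonal slopes; a case analysis on how the diagonals may follow vertical edges of $L$ before reaching the common $x$-coordinate of $m_I$ then shows that their meeting point lies above $L_I$, within vertical distance $O(2^{(k-3j)/2})$ of the top of $L_I$.
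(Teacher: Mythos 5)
The gap is exactly the ``main obstacle'' you flag at the end, and it is not a technicality: the meeting property genuinely fails for the construction as you describe it. Because you place the sample points by \emph{horizontal} arc length, the two endpoints $t_a,t_b$ of an interval $I$ at level $j$ can be separated by an arbitrarily large \emph{vertical} distance (the lemma puts no bound on the vertical extent of $L$). The path $\gamma_b$ launched from the lower endpoint consists of a diagonal of $x$-extent at most $U/2^{j+1}$, i.e.\ of length $O(2^{(k-3j)/2})$, plus vertical hugging segments that occur only if the diagonal happens to run into a vertical edge of $L$. Concretely, take $L$ to go right by $\delta=U/2^{k+1}$, then straight down by a huge amount $H\gg\eps^{-1/2}$, then right by $\delta$ again, etc.; then for the level-$k$ interval $\{1,2\}$ the lower sample $t_2$ sits at depth about $-H$, its diagonal (advancing from $x(t_2)$ toward $x(m_I)$) exhausts its horizontal budget before reaching the long vertical edge of $L$, so $\gamma_2$ tops out within $O(2^{(k-3j)/2})$ of $y(t_2)$, while $\gamma_1$ lies entirely at heights $\geq y(t_1)=y(t_2)+H$. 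The two paths never meet, the assembled $y$-monotone $st_2$-path does not exist in your graph, and your claimed conclusion of the case analysis (meeting point within vertical distance $O(2^{(k-3j)/2})$ of the top of $L_I$) is false in this example. Note also that ``oppositely-signed diagonal slopes'' is unavailable in general, since for a zigzagging $L$ the $x$-coordinate of $m_I$ need not lie between $x(t_a)$ and $x(t_b)$. Any repair must force the lower path to climb along $L$ itself (which is free) between samples, and one then has to re-do the stretch bookkeeping for such detours across all levels; none of this is in your write-up.

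The paper avoids this difficulty entirely by a reduction rather than a re-run of the staircase construction: it \emph{unfolds} $L$ into a staircase path $L'$ with the same edge lengths and the same $y$-coordinates (reflecting the strips over the horizontal edges so that all horizontal edges point rightward), picks $s'$ with $\varrho(s')=s$ for the piecewise-isometric folding map $\varrho$, applies Lemma~\ref{lem:stairs} to $(s',L')$, and takes $G=\varrho(G')$. Since $\varrho$ preserves $y$-coordinates and path lengths and does not increase distances, the bound $\|P_{s't'}\|\leq(1+O(\eps))|y(s')-y(t')|$ transfers verbatim, and the meeting/crossing property is needed only in the $x$-monotone setting where it is automatic. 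I would recommend either adopting that reduction or, if you want a direct construction, explicitly building the climbing-along-$L$ segments into the definition of $\gamma_q$ and proving the crossing lemma for the modified paths.
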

We reduce the case of $y$-monotone paths to staircase paths.
\begin{proof}
Let $L=(u_0,\ldots, u_m)$ be a $y$-monotone increasing rectilinear path, where the total weight of horizontal edges is at most 1; refer to Fig.~\ref{fig:SLT3}. Let $L'=(v_0,\ldots ,v_m)$ be a corresponding staircase path, which is both $x$- and $y$-monotone increasing, such that $v_m=u_m$ (i.e., their top endpoints are the same)
and for all $i=1,\ldots ,m$, the edges $u_{i-1}u_i$ and $v_{i-1}v_i$ are parallel and have the same length. (However, if $u_{i-1}u_i$ is a horizontal and $x$-monotone decreasing, then $v_{i-1}v_i$ is $x$-monotone increasing.)
Note that $y(u_i)=y(v_i)$ for all $i=0,\ldots ,m$.

\begin{figure}[htbp]
 \centering
 \includegraphics[width=0.9\textwidth]{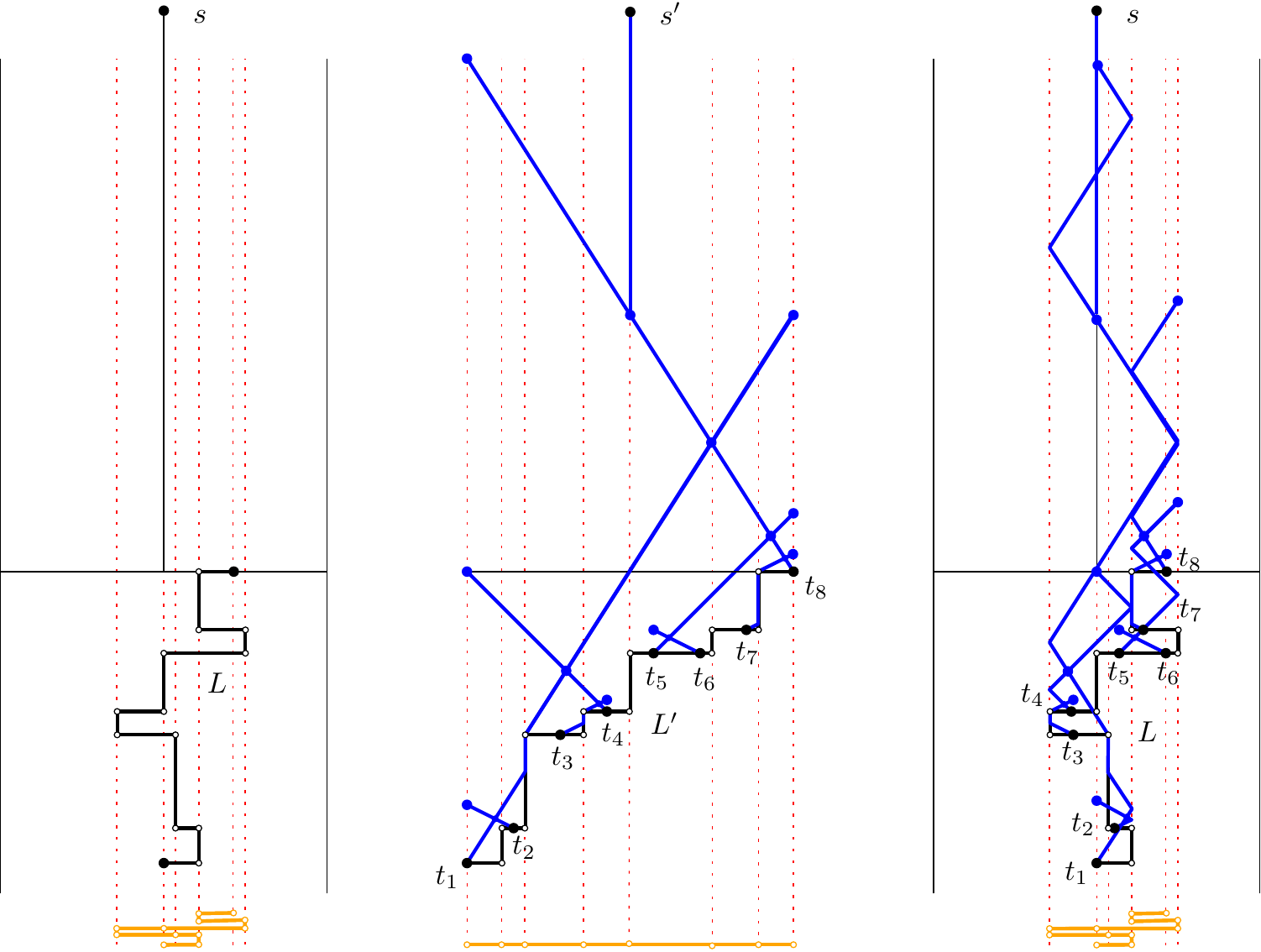}
 \caption{(a) The $y$-monotone rectilinear path $L$ and a source $s$.
 (b) The staircase path $L'$, and a single-source spanner $G'$ for $s'$ and $L'$.
 (c) A spanner $G=\varrho(G')$ for $s=\varrho(s')$ and $L=\varrho(L')$.}    \label{fig:SLT3}
\end{figure}

Let $R$ and $R'$ be the minimal vertical strips that contain $L$ and $L'$, respectively. We define a map $\varrho:R'\rightarrow R$ such that $\varrho(L')=L$. For $i=1,\ldots ,m$, let $R_i$ be the vertical strip bounded by the vertical lines passing through $u_{i-1}$ and $u_i$. Similarly, let $R_i'$ be the vertical strip bounded by the vertical lines passing through $v_{i-1}$ and $v_i$. Then $R=\bigcup_{i=1}^m R_i$ and $R'=\bigcup_{i=1}^m R_i'$. For every $i$, there is a unique isometry $\varrho_i:R_i'\rightarrow R_i$, composed of a translation by $\overrightarrow{v_iu_i}$ and a possible reflection in a vertical line, such that $\varrho(u_{i-1})=v_{i-1}$ and $\varrho(u_{i})=v_{i}$. The isometries $\varrho_1,\ldots , \varrho_m$ jointly define a  map $\varrho:R'\rightarrow R$. Since $\varrho:R'\rightarrow R$ is surjective, there exists a point $s'\in R'$ with $\varrho(s')=s$.

Note that $\varrho$ is a contraction, that is, $\|\varrho(a)\varrho(b)\|\leq \|ab\|$ for all $a,b\in R'$, as it maintains the $y$-coordinates of $a$ and $b$, but it may decease the difference between the $x$-coordinates. Furthermore $\varrho$ is piecewise linear and continuous: It maps every line segment $ab\subset R'$ to a polygonal chain $\varrho(ab)$, and an $ab$-path $P'$ to an $\varrho(a)\varrho(b)$-path $P$. Since $\varrho$ is piecewise isometric, then $\|P\|=\|P'\|$.

By Lemma~\ref{lem:stairs}, there exists a geometric graph $G'$ comprised of $L'$ and additional edges of weight $O(\eps^{-1/2})$ such that $G'$ contains, for every $t'\in L'$, an $s't'$-path $P_{s't'}$ with $\|P_{s't'}\|\leq (1+O(\eps))\,|y(s')-y(t')|\leq (1+O(\eps))\,\|s't'\|$.

By construction, the graph $G'$ lies in $R'$. Let $G=\varrho(G')$, which is a geometric graph with possible new Steiner points at the vertical lines passing through the vertices of $L$. Since $\varrho$ is piecewise isometric, then $G$ is comprised of $L=\varrho(L')$ and additional edges of weight $O(\eps^{-1/2})$. For every $t\in L$, there exists a point $t'\in L'$ with $\varrho(t')=t$. As $G'$ contain an $s't'$-path $P_{s't'}$ with $\|P_{s't'}\|\leq (1+O(\eps))\,|y(s')-y(t')|$, then $G$ contains the $st$-path $P_{st}=\varrho(P_{s't'})$ with $\|P_{st}\|\leq \|P_{s't'}\|\leq  (1+O(\eps))\,|y(s')-y(t')| =(1+O(\eps))\,|y(s)-y(t)|
\leq (1+O(\eps))\,\|st\|$.
\end{proof}

\subsection{Combination of Shallow-Light Trees}
\label{ssec:combination}

We end this section with an easy corollary of Lemma~\ref{lem:monotone}, and show that
the combination of two \textsf{SLT}s yields a light $(1+\eps)$-spanner between points on two staircases.

\begin{lemma}\label{lem:combine2}
Let $\eps\in (0,1]$, let $R$ be an axis-parallel rectangle of width $1$ and height $2\eps^{-1/2}$; and let $L_1$ and $L_2$ be $y$-monotone paths lying in the vertical strip spanned by $R$ such that they each contain horizontal edges of total weight at most 1, the bottom vertex of $L_1$ is on the top side of $R$, and the top vertex of $L_2$ is on the bottom side of $R$;
see Fig.~\ref{fig:combinations}. Then there exists a geometric graph comprised of $L_1\cup L_2$ and additional edges of weight $O(\eps^{-1/2})$ that contains an $ab$-path $P_{ab}$ with $\|P_{ab}\|\leq (1+O(\eps))\,\|ab\|$ for any $a\in L_1$ and any $b\in L_2$.
\end{lemma}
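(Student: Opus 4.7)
The plan is a straightforward double application of Lemma~\ref{lem:monotone}, using a single Steiner source placed at the middle of $R$. The key geometric reason this works is that any $a\in L_1$ lies (weakly) above $R$ and any $b\in L_2$ lies (weakly) below $R$, so a source $s$ chosen at mid-height of $R$ lies between $a$ and $b$ vertically. Thus the triangle inequality at $s$ is additive in the $y$-coordinate, which is exactly the quantity that drives the stretch bound of Lemma~\ref{lem:monotone}.

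Concretely, I would normalize coordinates so that $R=[-\tfrac12,\tfrac12]\times[0,2\eps^{-1/2}]$ and place $s=(0,\eps^{-1/2})$ at the center of $R$. By hypothesis, $L_2$ is a $y$-monotone rectilinear path inside the vertical strip $|x|\leq\tfrac12$, its top endpoint lies on the bottom side $y=0$ of $R$, and the total weight of its horizontal edges is at most $1$. These are exactly the preconditions of Lemma~\ref{lem:monotone}, so I would invoke it on $(s,L_2)$ to obtain a geometric graph $G_2$, comprised of $L_2$ and additional edges of weight $O(\eps^{-1/2})$, such that for every $b\in L_2$ there is an $sb$-path $P_{sb}$ in $G_2$ with $\|P_{sb}\|\leq (1+O(\eps))\,|y(s)-y(b)|$. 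By applying a vertical reflection across the horizontal line $y=\eps^{-1/2}$, the same lemma produces a graph $G_1$, comprised of $L_1$ and additional edges of weight $O(\eps^{-1/2})$, such that for every $a\in L_1$ there is an $as$-path $P_{as}$ in $G_1$ with $\|P_{as}\|\leq(1+O(\eps))\,|y(a)-y(s)|$.

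Let $G=G_1\cup G_2$, identifying the common vertex $s$; then $G$ is comprised of $L_1\cup L_2$ and additional edges of weight $\|G\|-\|L_1\|-\|L_2\|\leq O(\eps^{-1/2})$, as required. For any $a\in L_1$ and $b\in L_2$, concatenate $P_{as}\subseteq G_1$ with $P_{sb}\subseteq G_2$ at $s$ to obtain an $ab$-path $P_{ab}$ in $G$. Because $y(b)\leq 0\leq \eps^{-1/2}=y(s)\leq 2\eps^{-1/2}\leq y(a)$, the source lies between $a$ and $b$ vertically, hence
\[
|y(a)-y(s)|+|y(s)-y(b)|=y(a)-y(b)=|y(a)-y(b)|.
\]
Combining the two stretch bounds and using $|y(a)-y(b)|\leq \|ab\|$ gives
\[
\|P_{ab}\|\leq \|P_{as}\|+\|P_{sb}\|\leq (1+O(\eps))\,|y(a)-y(b)|\leq (1+O(\eps))\,\|ab\|,
\]
as required, after a suitable constant rescaling of $\eps$ to absorb the implicit constant.

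There is no real obstacle here; the one point to be careful about is that the source $s$ must genuinely lie between $L_1$ and $L_2$ vertically so that the additivity identity $|y(a)-y(s)|+|y(s)-y(b)|=|y(a)-y(b)|$ holds, and that each half of $R$ has the correct height $\eps^{-1/2}$ matching the distance parameter in Lemma~\ref{lem:monotone}. Both conditions are guaranteed by choosing $s$ at the center of $R$ given $R$ has height $2\eps^{-1/2}$. The width-$1$ condition and the total-horizontal-weight-at-most-$1$ condition on $L_1,L_2$ are inherited directly as preconditions for the two invocations of Lemma~\ref{lem:monotone}.
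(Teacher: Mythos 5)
Your proof is correct and takes essentially the same approach as the paper: a single source $s$ at the center of $R$ and two shallow-light trees from Lemma~\ref{lem:monotone} to $L_1$ and $L_2$, concatenated at $s$. The only (cosmetic) difference is that you invoke the sharper $(1+O(\eps))\,|y(s)-y(t)|$ form of the SLT guarantee, making the stretch bound follow from additivity of $y$-distances at $s$, whereas the paper uses the $(1+O(\eps))(\|as\|+\|bs\|)$ form and then compares $\|as\|+\|bs\|$ to $\|ab\|$ via $\diam(R)$ versus $\hght(R)$.
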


\begin{figure}[htbp]
 \centering
 \includegraphics[width=0.65\textwidth]{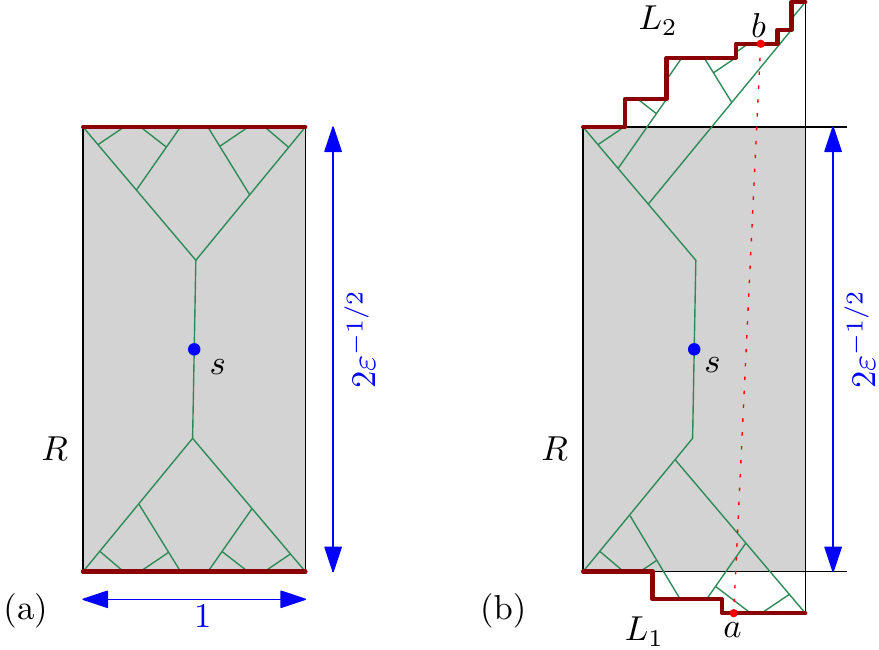}
 \caption{(a) A combination of two \textsf{SLT}s between the two horizontal sides of $R$.
 (b) A combination of two \textsf{SLT}s between two staircases above and below $R$, respectively.}
    \label{fig:combinations}
\end{figure}

\begin{proof}
Let $s$ be the center of the rectangle $R$. Let $G$ be the geometric graph formed by the \textsf{SLT}s from the source $s$ to $L_1$ and $L_2$, resp., using Lemma~\ref{lem:monotone}. By construction, $\|G\|=\|L_1\|+\|L_2\|+O(\eps^{-1/2})$.
It remains to show that $G$ has the desired spanning ratio.
Let $a\in L_1$ and $b\in L_2$. Let $h_a$ be the distance of $a$ from bottom side of $R$, and $h_b$ the distance of $b$ from the top side of $R$. By Lemma~\ref{lem:shallow}, the two \textsf{SLT}s jointly contain an $ab$-path $P_{ab}$ of length
$\|P_{ab}\|\leq (1+O(\eps))\,(\|as\|+\|bs\|)$.

On the one hand, $s$ is the center of $R$, and so $\|as\|+\|bs\|\leq \diam(R)+h_a+h_b\leq (1+\frac{\eps}{8})2{\eps}^{-1/2}+h_a+h_b$.
On the other hand, $\|ab\|\geq \mathrm{height}(R)+h_a+h_b=2{\eps}^{-1/2}+h_a+h_b$.
Overall, $\|P_{ab}\|\leq (1+O(\eps))(1+\frac{\eps}{8})\,\|ab\| \leq (1+O(\eps))\|ab\|$.
\end{proof}

\section{Construction of Directional Spanners for Staircases}
\label{sec:staircases}

In this section, we handle the special case of a finite set $S$ on a staircase path $L$. Our recursive construction uses special regions that we define now.
%
%
Let $L$ be an $x$- and $y$-monotone increasin staircase path, and let $P=P(L)$ be the staircase polygon bounded by $L$ above and left and by the boundary of the bounding box of $L$ from below and right. For $\lambda>0$, we define the \emph{$\lambda$-shadow of the vertical edges of $L$}, denoted by $\lambda$-$\shad_v(L)$, the set of points $p\in P$ such that there exists $a\in L$ on some vertical edge of $L$ such that $\mathrm{slope}(ap)\geq \lambda$; see Fig.~\ref{fig:shadow}(a).
Similarly we can define the $\lambda$-shadow of horizontal edges of $L$, $\lambda$-$\shad_h(L)$, be the set of points $p\in P$ such that there exists $b\in L$ on some horizontal edge of $L$ such that $\mathrm{slope}(bp)\geq \lambda$.
The region $\lambda$-$\shad_v(L)$ is not necessarily connected,
each connected components is bounded by a subpath of $L$ and a single line segment of slope $\lambda$.

\begin{figure}[htbp]
 \centering
 \includegraphics[width=0.9\textwidth]{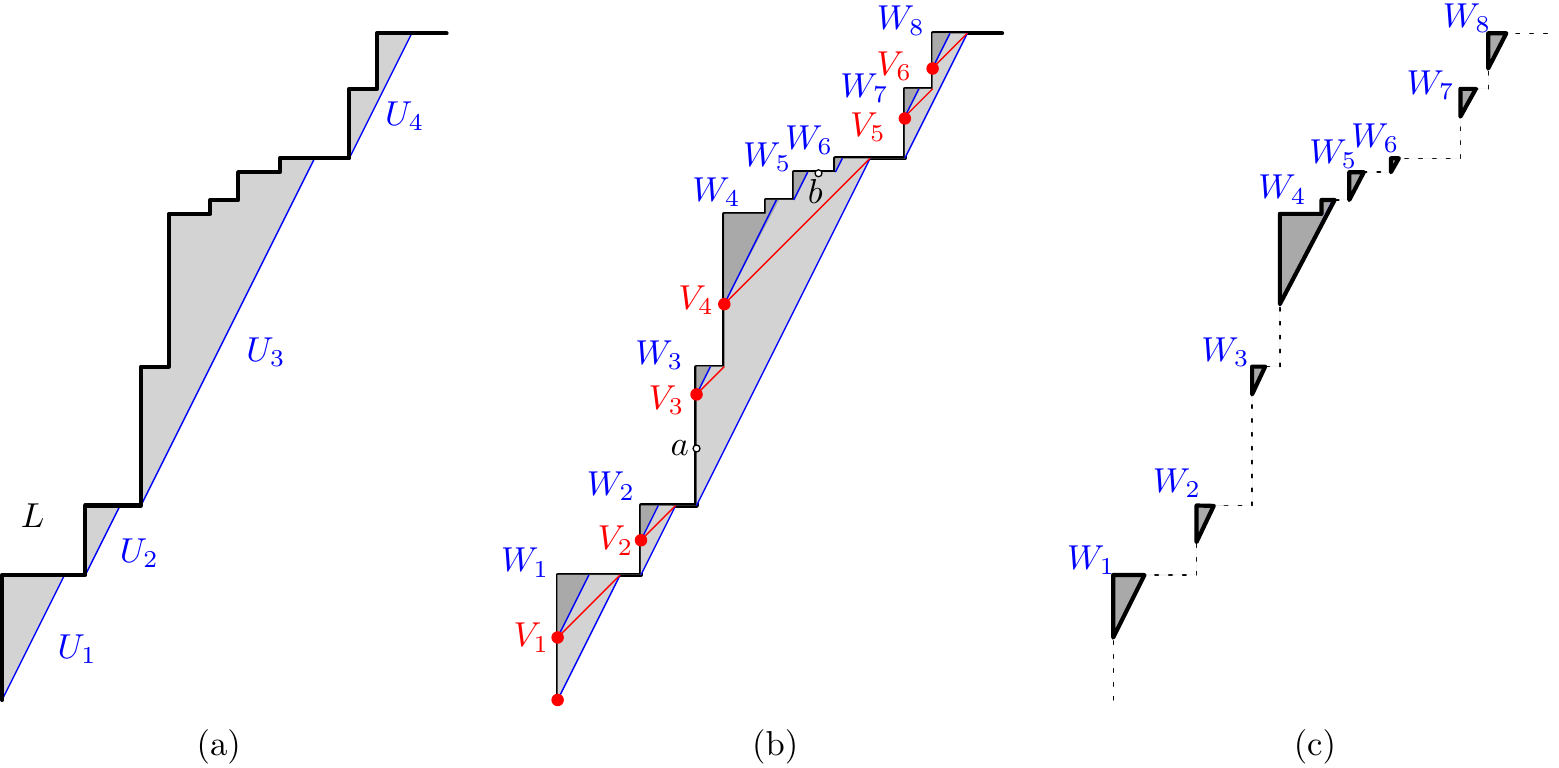}
 \caption{(a) A staircase path $L$; the shadow of vertical edges of $L$ is shaded light gray.
 (b) The shadow of the horizontal edges of the polygons $U_,\ldots, U_4$.
 (c) Recursive subproblems generated in the proof of Lemma~\ref{lem:staircase}.}
    \label{fig:shadow}
\end{figure}

\begin{lemma}\label{lem:staircase}
Let $L$ be a staircase path and $S\subset L$ a finite set.
Then there exists a geometric graph $G$ comprised of $L$ and additional edges of weight $O(\eps^{-1/2}\mathrm{width}(L))$ such that $G$ contains a path $P_{ab}$ of weight $\|P_{ab}\|\leq (1+O(\eps))\|ab\|$ for any $a,b\in S$ where $\slope(ab)\geq \eps^{-1/2}$ and the line segment $ab$ lies below $L$.
\end{lemma}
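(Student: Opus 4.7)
I will prove the lemma by induction on the number of vertical edges of $L$, using the $\lambda$-shadow of the vertical edges with $\lambda=\eps^{-1/2}$ to isolate the regions where Lemma~\ref{lem:monotone} can be applied directly, and recursing on complementary staircase sub-polygons.

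Let $W=\lambda\text{-}\shad_v(L)$ and let $W_1,\ldots,W_k$ be its connected components. Each $W_j$ is bounded above by a subpath $L_j\subseteq L$ that contains at least one vertical edge and below by a single slope-$\lambda$ segment $s_j$. Because the $W_j$ have pairwise disjoint horizontal projections, $\sum_j\|s_j\|\le\sqrt{1+\lambda^2}\cdot\mathrm{width}(L)=O(\eps^{-1/2}\,\mathrm{width}(L))$. I first put $L$ together with all the segments $s_j$ into $G$; this already accounts for the full budget of $O(\eps^{-1/2}\,\mathrm{width}(L))$.

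For each $W_j$, I would then apply an affinely rescaled copy of Lemma~\ref{lem:monotone}: the region $W_j$ has aspect ratio exactly $\lambda$, so, placing a source at an appropriate interior point of $W_j$ (at distance $\Theta(\mathrm{width}(W_j)\cdot\eps^{-1/2})$ from $s_j$) and using $L_j$ as the $y$-monotone target chain, the lemma produces a graph of extra weight $O(\eps^{-1/2}\,\mathrm{width}(W_j))$ that $(1{+}O(\eps))$-connects the source to every point of $L_j\cap S$. Combined in the style of Lemma~\ref{lem:combine2}, the two SLT half-paths through the source yield a $(1{+}O(\eps))$-spanner for every chord of $W_j$ between two points of $L_j\cap S$. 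Summed over all components $W_j$, the total extra weight remains $O(\eps^{-1/2}\,\mathrm{width}(L))$.

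The complement $P(L)\setminus W$ splits into components $U_1,\ldots,U_m$, each of which is again a (generalized) staircase polygon $P(L_i')$ in which the sub-chain $L_i'$ of $L$ has strictly fewer vertical edges than $L$ (every vertical edge absorbed into $W$ has disappeared, and the slope-$\lambda$ portions $s_j$ that now appear in $\partial U_i$ can be flattened into degenerate rectilinear steps). I recurse on each pair $(L_i',S\cap L_i')$. The horizontal projections of the $L_i'$ are disjoint subintervals of $[0,\mathrm{width}(L)]$, so each recursive layer contributes a further $\sum_i O(\eps^{-1/2}\,\mathrm{width}(L_i'))=O(\eps^{-1/2}\,\mathrm{width}(L))$. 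The induction bottoms out when $L_i'$ has no vertical edge left, at which point no chord below it can simultaneously have $\slope\ge\lambda$ and lie below $L_i'$, and the statement holds vacuously.

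\textbf{Verification and main obstacle.} For $a,b\in S$ with $\slope(ab)\ge\lambda$ and $ab\subset P(L)$, three cases arise: (i) $ab\subset W_j$ for some $j$, where the two SLT half-paths meeting at the common source give the desired $(1{+}O(\eps))\|ab\|$ route; (ii) $ab\subset U_i$ for some $i$, handled by induction; or (iii) $ab$ crosses a shadow boundary $s_j$. The main obstacle is case (iii): the slope condition $\slope(ab)\ge\slope(s_j)=\lambda$ guarantees that $ab$ meets $s_j$ transversally at a unique point $p$, so the $ab$-path can be stitched from a partial SLT-path inside $W_j$ (from $a$, or $b$, to $p$) and a path produced by the inductive construction inside the neighbouring $U_i$ (from $p$ to the other endpoint). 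Making the concatenated stretch $(1{+}O(\eps))$ rather than $(1{+}O(\sqrt{\eps}))$ requires the angle-bounded-path estimate of Lemma~\ref{lem:angle2} applied to the two sub-paths together with the near-tangential crossing of $ab$ and $s_j$; this precise angular bookkeeping, together with verifying that the flattened $L_i'$ retains the staircase structure the induction needs, is the main technical work.
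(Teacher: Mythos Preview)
Your decomposition via the $\eps^{-1/2}$-shadow of the vertical edges is the right first move, but the argument has two structural problems.

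First, your recursion on the complement $P(L)\setminus W$ does no work.  Every vertical edge of $L$ lies inside $W$ by definition, so the sub-chains $L_i'$ bounding the complementary components carry only horizontal edges of $L$ together with the slope-$\lambda$ segments $s_j$; there is nothing to recurse on, and the ``flattening'' of $s_j$ into rectilinear steps is not defined.  In fact, for every pair $a,b\in S$ satisfying the hypotheses you have $a$ on a vertical edge and $\slope(ab)\ge\lambda$, so $b$ lies in the shadow of that same vertical edge; consequently the whole chord $ab$ sits in one component $W_j$, and your cases (ii) and (iii) are vacuous.

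Second, and this is the real gap, a single source inside $W_j$ together with Lemma~\ref{lem:monotone} does \emph{not} give a $(1+O(\eps))$-spanner for all chords of $W_j$.  The concatenated SLT path from $a$ to $s$ to $b$ has weight roughly $|y(a)-y(s)|+|y(b)-y(s)|$, which approximates $|y(a)-y(b)|$ only when $y(s)$ lies between $y(a)$ and $y(b)$.  For two points $a,b\in L_j$ that are close to each other and both on the same vertical side of the source (for instance, both near the top of $W_j$), the detour through $s$ is a constant-factor overhead, not $1+O(\eps)$.  Lemma~\ref{lem:combine2} avoids this by requiring $a\in L_1$ and $b\in L_2$ to sit on \emph{opposite} sides of the rectangle $R$; that separation is exactly what your single-source picture lacks.

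The paper closes this gap by a finer recursion \emph{inside} each $U\in\mathcal{U}$: it takes the $(\tfrac12\eps^{-1/2})$-shadow of the horizontal edges to produce components $V$, places a source at the bottom vertex $s_V$ of each $V$, builds a geometric sequence of SLTs from $s_V$ to $L\cap V$, and then recurses on the $(\eps^{-1/2})$-vertical-shadow of $L\cap V$.  The alternation of slopes $\eps^{-1/2}$ and $\tfrac12\eps^{-1/2}$ yields a factor-$\tfrac12$ contraction of total width per level, so the weights sum geometrically to $O(\eps^{-1/2}\,\mathrm{width}(L))$; and the stretch works because for any pair $(a,b)$ one finds the smallest $U$ containing both, whereupon $a$ lies below the source $s_V$ of the relevant $V$ and the path $a\to s_V\to b$ is genuinely one-sided.
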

\begin{proof}
If $a,b\in L$ and $ab$ lies below $L$, then either both $a$ and $b$ are in the same edge of $L$ (hence $L$ contains a straight-line path $ab$), or one point in $\{a, b\}$ is on a vertical edge of $L$ and the other is on a horizontal edge of $L$. We may assume w.l.o.g.\ that $a$ is on a vertical edge and $b$ is on a horizontal edge of $L$.

Let $A=(\eps^{-1/2})$-$\shad_v(L)$ be the $(\eps^{-1/2})$-shadow of the vertical edges of $L$; see Fig.~\ref{fig:shadow}(a). Let $\mathcal{U}$ be the set of connected components of $A$. Note that for every pair $a,b\in L$, if $\mathrm{slope}(ab)\geq \eps^{-1/2}$ and $ab$ lies below the path $L$, then $ab$ lies in some polygon in $\mathcal{U}$.
For each polygon $U\in \mathcal{U}$, we construct a geometric graph $G(U)$ of weight $O(\eps^{-1/2}\wdth(U))$ such that $G(U)\cup L$ is a directional spanner for the point pairs in $S\cap U$. Then $L$ together with $\bigcup_{U\in \mathcal{U}} G(U)$ is a directional spanner for all possible $ab$ pairs. Since the polygons in $\mathcal{U}$
are adjacent to disjoint portions of $L$, we have $\sum_{U\in \mathcal{U}} \mathrm{width}(U)\leq \mathrm{width}(L)$, and so $\sum_{U\in \mathcal{U}}\|G(U)\|=O(\eps^{-1/2}\,\mathrm{width}(L))$, as required.

\smallskip\noindent\textbf{Recursive Construction.}
For every $U\in \mathcal{U}$, we construct $G(U)$ recursively as follows.
If $|S\cap U|\geq 2$, then let $B(U)=(\frac12\,\eps^{-1/2})$-$\shad_h(L\cap U)$ be the $(\frac12 \eps^{-1/2})$-shadow of the horizontal edges of $L\cap U$; see Fig.~\ref{fig:shadow}(b). Otherwise (if $|S\cap U|\leq 1$), let $B(U)=\emptyset$.
Denote by $\mathcal{V}$ the set of connected components of $B(U)$ for all $U\in \mathcal{U}$.

For every $V\in \mathcal{V}$, let $C(V)=(\eps^{-1/2})$-$\shad_v(L\cap V)$ be the $(\eps^{-1/2})$-shadow of the vertical edges of $L\cap V$; see Fig.~\ref{fig:shadow}(b).
Denote by $\mathcal{W}$ the set of all connected components of $C(V)$ for all $V\in \mathcal{V}$.

Since $\hght(W)/\wdth(W)=\eps^{-1/2}$ for all $W\in \mathcal{W}$ and
$\hght(V)/\wdth(V)=\frac12\eps^{-1/2}$ for all $V\in \mathcal{V}$, we have
\begin{align}
\sum_{W\in \mathcal{W}}\mathrm{width}(W)
&= \sqrt{\eps}\cdot \sum_{W\in \mathcal{W}}\mathrm{height}(W)
\leq \sqrt{\eps}\cdot \sum_{V\in \mathcal{V}}\mathrm{height}(V)\nonumber\\
&= \frac12\, \sum_{V\in \mathcal{V}}\mathrm{width}(V)
\leq \frac12\, \sum_{U\in \mathcal{U}}\mathrm{width}(U). \label{eq:width0}
\end{align}

For every polygon $V\in \mathcal{V}$, let $s_V$ be the bottom vertex of $V$.
We construct a sequence of shallow-light trees from source $s_V$ as follows.
For every nonnegative integer $i\geq 0$, let $h_i$ be a horizontal line at distance $\mathrm{height}(V)/2^i$ above $s_V$. If there is any point in $S$ between $h_i$ and $h_{i+1}$,
then we construct an \textsf{SLT} from $s_V$ to the portion of $L$ between $h_i$ and $h_{i+1}$.
By Lemma~\ref{lem:stairs}, the total weight of these trees is $O(\eps^{-1/2}\mathrm{width}(V))$.
Over all $V\in \mathcal{V}$, the weight of these \textsf{SLT}s is $\sum_{V\in \mathcal{V}} O(\eps^{-1/2}\mathrm{width}(V))
=O(\eps^{-1/2}\mathrm{width}(U))$.
For all $V\in \mathcal{V}$, we also add the boundary $\partial V$ to our spanner,
at a cost of $\sum_{V\in \mathcal{V}} \per(V) =\sum_{V\in \mathcal{V}} O(\eps^{-1/2}\mathrm{width}(V))=O(\eps^{-1/2}\mathrm{width}(U))$.
This completes the description of one iteration.
Recurse on all $W\in \mathcal{W}$ that contain any point in $S$.

\smallskip\noindent\emph{Lightness analysis.}
Each iteration of the algorithm, for every polygon $U\in \mathcal{U}$, constructs \textsf{SLT}s of total weight $O(\eps^{-1/2}\wdth(U))$, and produces subproblems whose combined width is at most $\frac12\wdth(U)$ by Equation~\eqref{eq:width0}.
Consequently, summation over all levels of the recursion yields
$\|G(U)\|=O(\eps^{-1/2}\wdth(U) \cdot \sum_{i\geq 0}2^{-i})=O(\eps^{-1/2}\wdth(U))$, as required.

\smallskip\noindent\emph{Stretch analysis.}
Now consider a point pair $a,b\in S$ such that $\slope(ab)\geq \eps^{-1/2}$,
$a$ is in a vertical edge of $L$, and $b$ is in a horizontal edge of $L$. Assume that $U$ is the smallest shadow polygon in the recursion above that contains both $a$ and $b$. Then $b\in V$ for some $V\in \mathcal{V}$, and $a$ is at or below vertex $s_V$ of $V$. Now we can find an $ab$-path $P_{ab}$ as follows:
First construct a $y$-monotonically increasing path from $a$ to $s_V$ along vertical edges of $L$ and along edges of slope $\frac{1}{2}\eps^{-1/2}$ of polygons in $\mathcal{V}$. 
Then an \textsf{SLT} contains a path from $s_V$ to $b$.
All edges of $P_{ab}$ from $a$ to $s_V$ are vertical or have slope $\frac12\eps^{-1/2}$, and so their directions differ from vertical by at most $\mathrm{arctan}(2\eps^{1/2})\leq 3\eps^{1/2}$, using the Taylor expansion of $\tan(x)$ near $0$.
By Lemma~\ref{lem:angle2} the stretch factor of the $as_V$-path and the path $(a,s_V,b)$ are each at most $1+O(\eps)$.
By Lemma~\ref{lem:staircase}, the \textsf{SLT} contains a $s_Vb$-path of stretch factor $1+O(\eps)$. Overall, $\|P_{ab}\|\leq (1+O(\eps))\|ab\|$.
\end{proof}

In Section~\ref{sec:tame}, we show that Lemma~\ref{lem:staircase} continues to hold if we replace the horizontal edges with $x$-monotone tame paths. Specifically, Lemma~\ref{lem:tame-staircase} below generalizes this result.

\section{Directional Spanners for Tame Histograms}
\label{sec:tame}

In this section we prove Lemma~\ref{lem:hist} for tame histograms.
Given a tame histogram $T$ and a finite set of points $S\subset \partial T$, we construct a directional spanner for $S$ with respect to the interval $D=[\frac{\pi-\sqrt{\eps}}{2},\frac{\pi+\sqrt{\eps}}{2}]$ of nearly-vertical directions. In the discussion below, we typically use $\slope(ab)$, rather than $\dir(ab)$. Note that whenever $a,b\in S$ and $\dir(ab)\in D$, then $|\slope(ab)|\geq \eps^{-1/2}$, due to the Taylor estimate $\tan(x)\geq x+x^3/3$ for $x=\dir(ab)-\frac{\pi}{2}$ near 0.

The next lemma (Lemma~\ref{lem:fat}) establishes a key property of tame histograms: the weight of a subpath between two points in the $pq$-path can be bounded in terms of the $L_1$-distance between the two endpoints and an error term dominated by the distance between their $x$-coordinates.

\begin{lemma}\label{lem:staircase-path}
Let $H$ be an $x$-monotone histogram bounded by a horizontal segment $pq$ and a $pq$-path $L$. Let $a,b\in L$ such that $b$ is the bottom-most point in $L_{ab}$. Then there exists a staircase $ab$-path $P_{ab}$ comprised of segments in $L_{ab}$ and horizontal chords of $L_{ab}$.
\end{lemma}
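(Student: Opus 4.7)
The plan is to construct $P_{ab}$ explicitly as the ``running minimum curve'' of $L_{ab}$. Without loss of generality assume $x(a)\le x(b)$, so that $L_{ab}$ is a rectilinear $x$-monotone path from $a$ to $b$. Parametrize $L_{ab}$ by $\gamma:[0,1]\to L_{ab}$ with $\gamma(0)=a$ and $\gamma(1)=b$, set $m(t)=\min_{s\in[0,t]}y(\gamma(s))$, and define $\sigma(t)=(x(\gamma(t)),m(t))$. Because $x\circ\gamma$ is non-decreasing and $m$ is non-increasing, $\sigma$ traces an $x$- and $y$-monotone curve. Its endpoints are $\sigma(0)=a$ and $\sigma(1)=(x(b),y(b))=b$; the latter uses that $b$ minimizes $y$ on $L_{ab}$. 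Collapsing stationary intervals then yields a polygonal staircase $ab$-path, which I take as the candidate $P_{ab}$.

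Next I would verify edge by edge that $P_{ab}$ uses only sub-segments of $L_{ab}$ and horizontal chords of $L_{ab}$. Since $L_{ab}$ is rectilinear, $\sigma$ is piecewise axis-parallel. A vertical edge $\{x_0\}\times[m_2,m_1]$ of $\sigma$ occurs exactly while $\gamma$ traverses a vertical down-edge of $L_{ab}$ at $x=x_0$ whose descent establishes a new running minimum, so this segment is contained in that edge of $L_{ab}$. A horizontal edge of $\sigma$ at level $c$ from $(x_1,c)$ to $(x_2,c)$ occurs where $m\equiv c$ on an $x$-interval $[x_1,x_2]$. The left endpoint is the point at which $\gamma$ first attains the level $c$, and the right endpoint is the point at which $\gamma$ is about to drop below $c$; in both cases continuity of $\gamma$ forces these points to lie on $L_{ab}$. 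If $\gamma$ stays at level $c$ in between, the edge of $\sigma$ coincides with a horizontal edge of $L_{ab}$; otherwise $\gamma$ rises above $c$ and returns, and the edge of $\sigma$ is a proper horizontal chord of $L_{ab}$.

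The step I expect to require the most care is confirming that every such chord actually lies inside the histogram $H$. By definition of $m$, we have $y(\gamma(s))\ge c$ for all $s\in[t_1,t_2]$ where $t_i$ is the preimage of $(x_i,c)$ under $\sigma$, so $L_{ab}$ stays weakly above the chord throughout the vertical strip $[x_1,x_2]\times\mathbb{R}$; together with the $x$-monotonicity of $L$, which prevents any other portion of $L$ from entering that strip, this places the chord weakly below $L$. Since $c\ge y(b)\ge y(pq)$, the chord also lies above the base. Thus each inserted chord is contained in $H$, and is a legitimate horizontal chord of $L_{ab}$. Combining the edge-by-edge analysis, the trace of $\sigma$ is a staircase $ab$-path comprised of sub-segments of $L_{ab}$ and horizontal chords of $L_{ab}$, as required.
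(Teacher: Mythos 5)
Your proposal is correct and is in substance the same construction as the paper's: the running-minimum curve $\sigma$ is exactly the staircase produced by the paper's greedy walk, which descends vertical edges of $L_{ab}$, moves right along horizontal edges, and bridges each ascent of $L_{ab}$ with a horizontal chord, with the hypothesis that $b$ is bottom-most guaranteeing that the chords exist and the walk ends at $b$. Your explicit verification that each inserted chord has both endpoints on $L_{ab}$ and lies inside $H$ is a nice added detail that the paper only asserts in passing.
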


\begin{figure}[htbp]
 \centering
 \includegraphics[width=0.45\textwidth]{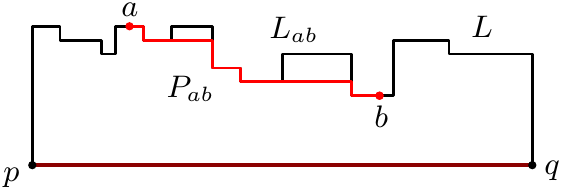}
 \caption{A tame histogram and the $ab$-path $P{ab}$ constructed in the proof of Lemma~\ref{lem:fat}.}
    \label{fig:descending}
\end{figure}

\begin{proof}
Assume w.l.o.g.\ that $x(a)\leq x(b)$; refer to Fig.~\ref{fig:descending}. We construct an $ab$-path $P$ incrementally as follows: Initially, we set $P=(a)$ to be the one-vertex path, and then incrementally append new edges until it reaches $b$. Let $c$ denote the current endpoint of $P$. 

While $c\neq b$ do as follows. If $c$ is in a vertical edge $e$ of $L_{ab}$, but not the bottom endpoint of $e$, then we extend $P$ to the bottom endpoint of $e$. Else if $c$ is in a horizontal edge $e$ of $L_{ab}$, but not the right endpoint of $e$, then we extend $P$ to the right endpoint of $e$. Else $c$ is the bottom point of a vertical edge $e_v$ and the right endpoint of a horizontal edge $e_h$ of $L_{ab}$, and then we extend $P$ with a horizontal chord $cd$. Such a chord exists since $b$ is the bottommost point in $L_{ab}$. The algorithm terminates with $c=b$, since in each iteration either $y(c)$ decreases, or $y(c)$ does not change but $x(c)$ increases.
\end{proof}

\begin{lemma}\label{lem:fat}
Let $H$ be a tame histogram bounded by a horizontal segment $pq$ and a $pq$-path $L$.
Let $a,b\in L$ such that $b$ is the bottom-most point in $L_{ab}$.
Then $\|L_{ab}\|\leq 2|x(a)-x(b)|+|y(a)-y(b)|$.
\end{lemma}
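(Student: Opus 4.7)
The plan is to combine Lemma~\ref{lem:staircase-path} with the $1$-tameness hypothesis. Assume without loss of generality $x(a)\leq x(b)$ and apply Lemma~\ref{lem:staircase-path} to obtain a staircase $ab$-path $P_{ab}$ composed of segments of $L_{ab}$ together with a set $C$ of horizontal chords of $L_{ab}$. Inspecting the construction in that lemma, $P_{ab}$ is $x$-monotone increasing and $y$-monotone decreasing (we only ever move right or down), so
\[\|P_{ab}\| = (x(b)-x(a)) + (y(a)-y(b)) = |x(a)-x(b)| + |y(a)-y(b)|,\]
and the total horizontal weight of $P_{ab}$, which is an upper bound for $\|C\|$, equals $|x(a)-x(b)|$.

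Next I would decompose $L_{ab}$ relative to $P_{ab}$. The segments of $P_{ab}$ lying on $L_{ab}$ account for all of $L_{ab}$ except the portions that $P_{ab}$ shortcuts: for every chord $cd\in C$, $P_{ab}$ replaces the subpath $L_{cd}$ of $L_{ab}$ with the straight horizontal segment $cd$. Since these replaced subpaths are internally disjoint (the chords themselves are disjoint along $P_{ab}$), we obtain the identity
\[\|L_{ab}\| \;=\; \bigl(\|P_{ab}\| - \|C\|\bigr) + \sum_{cd\in C} \|L_{cd}\|.\]
Because $H$ is a tame histogram, i.e., $1$-tame, each horizontal chord $cd$ satisfies $\|L_{cd}\|\leq 2\|cd\|$; summing gives $\sum_{cd\in C}\|L_{cd}\|\leq 2\|C\|$.

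Combining the two bounds,
\[\|L_{ab}\| \;\leq\; \|P_{ab}\| + \|C\| \;\leq\; \bigl(|x(a)-x(b)|+|y(a)-y(b)|\bigr) + |x(a)-x(b)| \;=\; 2|x(a)-x(b)| + |y(a)-y(b)|,\]
as claimed. The only nonroutine step is the decomposition identity, which requires verifying that the chords produced by the algorithm of Lemma~\ref{lem:staircase-path} are pairwise non-overlapping subintervals of the traversal of $L_{ab}$ — this follows because the algorithm inserts a chord $cd$ only at a concave corner $c$ of $L_{ab}$ and jumps forward along $L_{ab}$ to $d$, never revisiting the replaced subpath.
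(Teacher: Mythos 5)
Your proposal is correct and follows essentially the same route as the paper's proof: both apply Lemma~\ref{lem:staircase-path}, use that the staircase path has weight $|x(a)-x(b)|+|y(a)-y(b)|$, and then invoke $1$-tameness on the pairwise-disjoint horizontal chords to bound the extra weight of restoring each $L_{cd}$ by a further $|x(a)-x(b)|$. Your explicit decomposition identity $\|L_{ab}\|=(\|P_{ab}\|-\|C\|)+\sum_{cd\in C}\|L_{cd}\|$ is just a slightly more detailed bookkeeping of the paper's statement that replacing each chord $cd$ by $L_{cd}$ recovers $L_{ab}$ exactly.
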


\begin{proof}
By Lemma~\ref{lem:staircase-path}, there exists a staircase $ab$-path $P_{ab}$ that comprises portions of $L_{ab}$ and horizontal chords of $L_{ab}$. Since $P_{ab}$ is a staircase $ab$-path, the total weight of
its vertical edges is $|y(a)-y(b)|$ and the total weight of its horizontal edges is $|x(a)-x(b)|$.
If we replace every horizontal chord $cd$ along $P_{ab}$ with the corresponding subpath $L_{cd}$ of $L$, the resulting path is precisely $L_{ab}$. As $H$ is a tame histogram, each chord $cd$ is replaced by a path of length at most $2\|cd\|=2\, |x(c)-x(d)|$.
As these chords are disjoint horizontal line segments along $P_{ab}$,
the weight increase is bounded by
$\|L_{ab}\|-\|P_{ab}\|\leq |x(a)-x(b)|$.
Consequently,
\[
\|L_{ab}\|
=(\|L_{ab}\|-\|P_{ab}\|)+\|P_{ab}\|
\leq 2|x(a)-x(b)|+|y(a)-y(b)|,
\]
as claimed.
\end{proof}

In Lemmas~\ref{lem:combine5} and~\ref{lem:combine6} below,
we use \textsf{SLT}s to construct directional $(1+\eps)$-spanners in a tame histogram
(i) between the base and a portion of the path $L$ within a square; and
(ii) between a source $s$ and a portion of the path $L$ from $p$ to $q$.

\begin{figure}[htbp]
 \centering
 \includegraphics[width=0.85\textwidth]{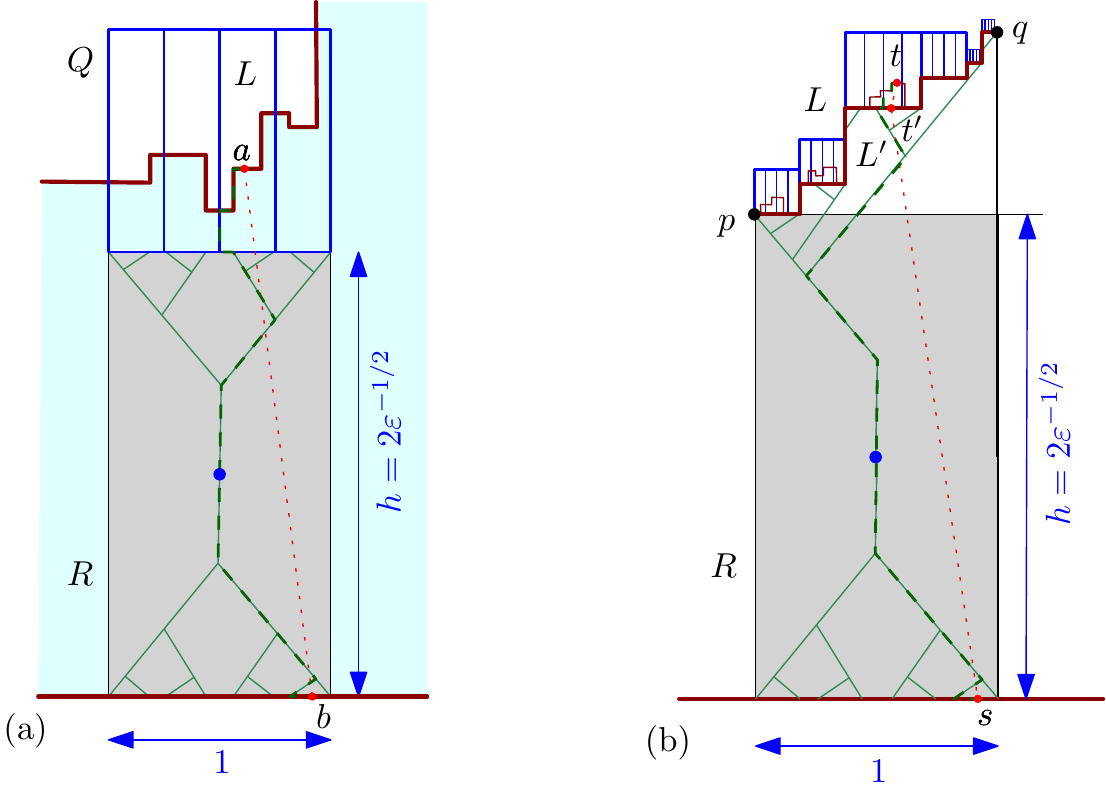}
 \caption{(a) The boundary of a tame histogram in a square $Q$ above rectangle $R$.
 (b) An adaptation of an \textsf{SLT} to tame histograms.}
    \label{fig:combinations2}
\end{figure}

\begin{lemma}\label{lem:combine5}
Let $R$ be an axis-parallel rectangle of width $1$ and height $2\eps^{-1/2}$.
Let $Q$ be a unit square adjacent to the top side of $R$, and let $L$ be a tame path in $Q$; see Fig.~\ref{fig:combinations2}(a).
Then there exists a graph $G$ comprised of $L$ and additional edges of weight $O(\eps^{-1/2})$ that contains
an $ab$-path $P_{ab}$ with $\|P_{ab}\|\leq (1+\eps)\,\|ab\|$ for any $a\in L$ and any point $b$ in the bottom side of $R$.
\end{lemma}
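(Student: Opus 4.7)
The construction follows the two-source-tree combination strategy of Lemma~\ref{lem:combine2}. I place a source $s$ at the center of $R$, so that $s$ lies at vertical distance $\eps^{-1/2}$ from both the top and bottom sides of $R$. The graph $G$ will be $L$ itself together with two shallow-light trees rooted at $s$: one reaching the bottom side of $R$, and one reaching the tame path $L$. For the first tree, Lemma~\ref{lem:shallow} applies directly (after a horizontal reflection), since $s$ sits at distance $\eps^{-1/2}$ from the unit-length horizontal bottom side, contributing $O(\eps^{-1/2})$ to the weight and giving each $sb$-path stretch $1+O(\eps)$.

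For the SLT from $s$ to $L$ I adapt the dyadic construction of Lemma~\ref{lem:shallow}. I place $O(\eps^{-1})$ terminals on $L$ at arc-length spacing at most $\eps$; this is possible since the tame property gives $\|L\|\leq 2\,\wdth(L)\leq 2$, and it ensures that every $a\in L$ lies within Euclidean distance $O(\eps)$ of some terminal $t_q$. For each terminal I insert a near-vertical segment $\ell_q$ of dyadic slope $\frac{\pi}{2}\pm 2^{(j-k)/2}$ (with $j$ the smallest level at which $q$ is an endpoint of a dyadic interval of $[-\tfrac12,\tfrac12]$), directed downward from $t_q$ into $R$. The lengths and counts at level $j$ are identical to those in Lemma~\ref{lem:shallow}, so the total weight of these segments is $O(\eps^{-1/2})$, and including $L$ itself costs only $O(1)$. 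Since $\|st_q\|\geq\eps^{-1/2}$ for every terminal, the cumulative excess of $O(\sqrt{\eps})$ along each $s t_q$-path translates to stretch $1+O(\eps)$, exactly as in the analysis of Lemma~\ref{lem:shallow}.

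To construct the $ab$-path for $a\in L$ and $b$ on the bottom side of $R$, I concatenate a subpath of $L$ from $a$ to its nearest terminal $t_q$ (of length $O(\eps)$), the SLT $s\,t_q$-path, and the SLT $sb$-path. A triangle-inequality calculation analogous to that of Lemma~\ref{lem:combine2}, using $\|ab\|\geq 2\eps^{-1/2}$ together with the fact that $s$ lies within horizontal distance $1$ of the segment $ab$, gives $\|st_q\|+\|sb\|\leq \|ab\|+O(\sqrt{\eps})$; combined with $\|at_q\|\leq O(\eps)$ and the two SLTs' $(1+\eps)$-stretches, the concatenated path has weight at most $(1+O(\eps))\|ab\|$. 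Rescaling $\eps$ by a constant upgrades this to the stated $(1+\eps)$ bound, as permitted by the paper's convention.

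The main obstacle is rigorously realizing the adapted SLT from $s$ to $L$: placing terminals at arc-length spacing on a tame path is natural, but Lemma~\ref{lem:shallow}'s dyadic segment construction presumes uniform $x$-spacing on a horizontal line, so care is needed for terminals that share an $x$-coordinate along a vertical edge of $L$, and for verifying that the perturbed near-vertical segments still intersect to form a coherent $s$-rooted tree with the required $st_q$-paths. A clean alternative is to first partition $L$ into maximal $x$- and $y$-monotone staircase pieces and apply an analog of Lemma~\ref{lem:stairs} on each, using the tame bound $\|L\|=O(1)$ and the monotonicity of each piece to aggregate the per-piece SLTs into a single tree of total weight $O(\eps^{-1/2})$.
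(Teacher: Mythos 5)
The reduction to two source-rooted trees at the center $s$ of $R$ is fine in outline, and your triangle-inequality bookkeeping ($\|st_q\|+\|sb\|\leq\|ab\|+O(\sqrt{\eps})$, absorbed by $\|ab\|\geq 2\eps^{-1/2}$) would go through \emph{if} you had an \textsf{SLT} of weight $O(\eps^{-1/2})$ and stretch $1+O(\eps)$ from $s$ to the tame path $L$. But that object is exactly the hard part of the lemma, and your proposal does not establish it. The dyadic construction of Lemma~\ref{lem:shallow} hinges on all terminals lying on one horizontal line with uniform $x$-spacing: the two segments attached to the endpoints of a dyadic interval start at equal height and have the full interval as their $x$-projection, which is what forces them to cross and yields a $y$-monotone $st_q$-path whose per-level excess telescopes. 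A tame path is $x$-monotone but not $y$-monotone; terminals placed at arc-length spacing can share an $x$-coordinate (vertical edges), an arc-interval may have arbitrarily small horizontal extent, and consecutive terminals may differ in height by more than the vertical extent of the associated near-vertical segments, so the crossing structure and the $y$-monotonicity of the $st_q$-paths both break down. You flag this yourself, but flagging it does not close it, and your fallback does not work either: a tame path can have an unbounded number of maximal monotone pieces (e.g.\ a fine sawtooth of unit slope), and an \textsf{SLT} per piece, each rooted at the far-away source $s$, costs $\Omega(\eps^{-1/2})$ per piece just to span the vertical gap, so "aggregating" them into one tree of total weight $O(\eps^{-1/2})$ is precisely the missing argument, not a corollary of $\|L\|=O(1)$.

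The paper avoids building an \textsf{SLT} to $L$ at all. It subdivides $Q$ into vertical slabs of width $\sqrt{\eps}$, and inside the slab containing $a$ it routes \emph{along $L$ itself} down to a bottom-most point of $L$ in that slab; tameness enters through Lemma~\ref{lem:fat}, which bounds this detour by $2|x\text{-shift}|+|y\text{-drop}|\leq 2\sqrt{\eps}+(y(a)-y(v_1))$. From there a single vertical segment reaches the top side of $R$, and one shared pair of \textsf{SLT}s between the two horizontal sides of $R$ (Lemma~\ref{lem:combine2}) finishes the path; the additive $2\sqrt{\eps}$ is then absorbed since $\|ab\|\geq 2\eps^{-1/2}$. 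If you want to repair your approach, you essentially need to rediscover this mechanism (or the unfolding-to-a-staircase device of the neighboring lemma): some way to reduce the oscillating $L$ to $O(\eps^{-1/2})$ vertical "access points" at horizontal resolution $\sqrt{\eps}$, with the cost of reaching them charged to $L$ via tameness, rather than an \textsf{SLT} with terminals directly on $L$.
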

\begin{proof}
Let $s$ be the center of the rectangle $R$. We construct a geometric graph $G$ as follows. Let $G$ contain the bottom side of $R$, the path $L$, and the two \textsf{SLT}s from $s$ to the bottom and top sides of $R$, respectively. Let $G$ also contain a subdivision of $Q$ into rectangles of aspect ratio $\eps^{-1/2}$; see Fig.\ref{fig:combinations2}(a). Specifically, we subdivide $Q$ into rectangles
$r$ of $\wdth(r)=\sqrt{\eps}$ and $\hght(r)=1$. Finally, in each rectangle $r\subset Q$ of this subdivision, if $r$ intersects $L$, then let $G$ contain a vertical line segment from a bottom-most point in $L\cap r$ to the bottom side of $Q$.

\smallskip\noindent\emph{Lightness analysis.}
The weight of two \textsf{SLT}s is $O(\eps^{-1/2})$ by Lemma~\ref{lem:shallow}. Since $\wdth(Q)=1$, the weight of the subdivision of $Q$ is $O(\eps^{-1/2})$, and the weight of the vertical edge in each rectangle $r\subset Q$ is at most $\hght(r)=\hght(Q)=1$. The overall weight of $G$ is $\|L\|+O(\eps^{-1/2})$.

\smallskip\noindent\emph{Stretch-factor analysis.}
Let $a\in L$ and let $b$ be a point in the bottom side of $R$. We may assume that $a\in r$, for a rectangle $r\subset Q$ in the subdivision of $Q$. We construct an $ab$-path $P_{ab}$ as follows: Start from $a$, follow $L$ to a bottom-most point in $L\cap r$, and then use a vertical line segment to reach the bottom side of $r$. Then use the two \textsf{SLT}s to reach $b$. For easy reference, we label some of intermediate vertices along $P_{ab}$: let $v_1$ be the bottom-most point in $L\cap r$, and let $v_2$ be the bottom endpoint of the vertical segment in $r$, where $P_{ab}$ reaches the top side of $R$. Note that the $y$-coordinates of these points monotonically decrease, that is, $y(a)\geq y(v_1)\geq y(v_2)\geq y(b)$. Clearly, we have $\|ab\|\geq y(a)-y(b)\geq y(v_2)-y(b)= 2\eps^{-1/2}$.

We now estimate the weight of each portion of $P_{ab}$ between $a$, $v_1$, $v_2$, and $b$. By Lemma~\ref{lem:fat},
we have
\begin{align*}
\|P_{av_1}\|
&\leq 2\, |x(a)-x(v_1)| + |y(a)-y(v_1)|\\
&\leq 2\,\wdth(r)+|y(a)-y(v_1)|\\
&\leq 2\,\sqrt{\eps}+(y(a)-y(v_1)).
\end{align*}
As $v_1v_2$ is a vertical line segment, then $\|P_{v_1 v_2}\| = y(v_1)-y(v_2)$.
Lemma~\ref{lem:combine2} yields $\|P_{v_2 b}\|\leq (1+O(\eps)) (y(v_2)-y(b))$.
Putting the pieces together, we obtain
\begin{align*}
\|P_{ab}\|
&=\|P_{a v_1}\| + \|P_{v_1 v_2}\| +  \|P_{v_3 b}\|\\
&\leq 2\,\sqrt{\eps}+(1+O(\eps))
    \Big((y(a)-y(v_1)) + (y(v_1)-y(v_2))+ (y(v_2)-y(b))\Big)\\
&\leq (1+O(\eps))(y(a)-y(b)) + 2\,\sqrt{\eps}\\
&\leq (1+O(\eps))\|ab\|,
\end{align*}
as required.
\end{proof}

\begin{lemma}\label{lem:combine6}
Let $R$ be an axis-parallel rectangle of width 1 and height $2\eps^{-1/2}$; and let $p$ be the upper-left corner of $R$, and let $q$ be a point on vertical line passing through the right side of $R$ with $y(p)\leq y(q)$.
Let $L$ be a tame $pq$-path that lies above the line segment $pq$; see Fig.~\ref{fig:combinations2}(b).
Then there exists a geometric graph $G$ comprised of $L$ and additional edges of weight $O(\eps^{-1/2})$ that contains an $st$-path $P_{st}$ with $\|P_{st}\|\leq (1+O(\eps))\,\|st\|$ for any $s$ in the bottom side of $R$ and any $t\in L$.
\end{lemma}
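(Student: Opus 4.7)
The plan is to mirror the proof of Lemma~\ref{lem:combine5}, replacing the unit square $Q$ above $R$ by the (potentially unbounded) region above $R$ that contains $L$. I include $L$ and the bottom side of $R$ in $G$, set $s^{*}=(\tfrac{1}{2},\eps^{-1/2})$ as the center of $R$, and build two shallow-light trees from $s^{*}$ by Lemma~\ref{lem:shallow}: one to the bottom side of $R$, and one to the top side of $R$; these contribute weight $O(\eps^{-1/2})$. The remaining task is to supply an auxiliary structure above $R$, of weight $O(\eps^{-1/2})$, so that together with the two internal SLTs and $L$ itself, the spanner delivers stretch $1+O(\eps)$ between any $s$ on the bottom of $R$ and any $t\in L$.

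The new obstacle, absent from Lemma~\ref{lem:combine5}, is that $L$ may rise arbitrarily high above $R$: naively extending Lemma~\ref{lem:combine5}'s thin-vertical-strip subdivision up to the height of $q$ would require total weight proportional to $(y(q)-y(p))/\sqrt{\eps}$, far exceeding the budget. My approach is to adapt the binary-tree shallow-light tree of Lemma~\ref{lem:stairs} to $L$ directly. The key enabler is tameness: Lemma~\ref{lem:fat} applied with $p$ as the bottom-most vertex of $L$ yields $\|L\|\le 2+(y(q)-y(p))$, which in particular forces the total downward vertical extent of $L$ to be at most $\tfrac{1}{2}$; so $L$ is nearly $y$-monotone and behaves like a staircase for SLT purposes. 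I will sample representative points $v_{j}\in L$ at $x$-coordinates $j\sqrt{\eps}$ for $j=0,1,\ldots,\lceil\eps^{-1/2}\rceil$, and build the binary-tree SLT from $s^{*}$ to these representatives, ``flipped'' so that the source lies below the target path rather than above. The binary-tree weight depends only on the $x$-extent of the target (which is $1$ here) and the angle-weight tradeoff summed across dyadic levels ($O(\eps^{-1/2})$), \emph{not} on the vertical range of the representatives, so the adapted SLT has weight $O(\eps^{-1/2})$ independently of $y(q)$.

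For the stretch analysis, given $s$ on the bottom of $R$ and $t\in L$, I route $s\to s^{*}\to u\to v\to t$: the subpath $s\to s^{*}\to u$ uses the two internal SLTs to reach an intermediate point $u$ on the top of $R$ chosen by the adapted SLT above $R$; the subpath $u\to v$ follows that SLT to the representative $v_{j}\in L$ with $|x(v_{j})-x(t)|\le\sqrt{\eps}$; and the final subpath $v\to t$ follows $L$ itself. The stretches of the three pieces are bounded by Lemma~\ref{lem:shallow}, the adapted Lemma~\ref{lem:stairs}, and Lemma~\ref{lem:fat} respectively; in particular Lemma~\ref{lem:fat} yields $\|L_{vt}\|\le 2\sqrt{\eps}+|y(v)-y(t)|$. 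Since $\|st\|\ge y(t)\ge 2\eps^{-1/2}$, the $O(\sqrt{\eps})$ additive error absorbs into the multiplicative slack $\eps\|st\|$, exactly as in Lemma~\ref{lem:combine5}. The main technical obstacle will be carrying out the adaptation of Lemma~\ref{lem:stairs} in the flipped orientation and verifying that the downward excursions of $L$ (which prevent $L$ from being a true staircase) do not break the binary-tree weight bound; bounding these excursions by Lemma~\ref{lem:fat} lets me treat them as short perturbations that preserve the $O(\eps^{-1/2})$ weight.
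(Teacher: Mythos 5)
Your overall architecture (two SLTs inside $R$ plus an $O(\eps^{-1/2})$-weight structure above, using tameness and the fact that travel along $L$ is free) is in the right spirit, and your observation via Lemma~\ref{lem:fat} that the total descending weight of $L$ is at most $\tfrac12$ is correct. But the step you defer as ``the main technical obstacle'' is precisely the crux, and the fix you sketch does not work. To get weight $O(\eps^{-1/2})$ independent of $y(q)-y(p)$, the vertical transport \emph{must} ride $L$ itself (as in Lemma~\ref{lem:stairs}); straight SLT segments to representatives at unbounded height would have weight growing with $y(q)$. Once you ride $L$, every descending edge forces a down-and-up detour, and ``total descent $\le \tfrac12$'' is useless as a perturbation bound: for a target $t$ just above the top of $R$ you have $\|st\|\approx 2\eps^{-1/2}$, so the admissible excess is only $O(\eps)\|st\|=O(\sqrt{\eps})$, while a single descending edge of length $\tfrac14$ already costs $\Theta(1)$. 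The same quantitative problem breaks your last leg: your bound $\|L_{vt}\|\le 2\sqrt{\eps}+|y(v)-y(t)|$ misapplies Lemma~\ref{lem:fat}, since neither $v$ nor $t$ need be the bottom-most point of $L_{vt}$; if $t$ sits at the foot of a long descending vertical edge and the nearby representative $v_j$ lies on the high side, routing $s\to v_j\to t$ along $L$ incurs additive excess up to $\approx 1\gg\eps\|st\|$. (One can show by tameness that \emph{some} flanking representative avoids the barrier, but you neither choose it nor prove it.)

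The paper resolves exactly this difficulty differently: it first straightens $L$ into an $x$- and $y$-monotone staircase $L'$ using Lemma~\ref{lem:staircase-path} (replacing the downward excursions by horizontal chords), applies Lemma~\ref{lem:stairs} verbatim (after a rotation by $\pi$) from the center of $R$ to $L'$, and then handles the components of $L\setminus L'$, which by tameness hang inside squares $Q_e$ of side $\|e\|$ above the horizontal chords $e$ of $L'$, with the Lemma~\ref{lem:combine5} gadget: subdivide each $Q_e$ into strips of aspect ratio $\eps^{-1/2}$ and drop a vertical segment to $e$. This keeps the weight at $O(\eps^{-1/2})$ because $\sum_e\|e\|\le 1$, and keeps the stretch because a point $t\in L\setminus L'$ is routed \emph{downward} to a point $t'\in e\subset L'$ (monotone toward $s$, excess $O(\sqrt{\eps})$), never up over a barrier. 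To repair your proof you would need an equivalent of this straightening-plus-local-gadget step; asserting that the excursions are short perturbations is not enough.
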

\begin{proof}
By Lemma~\ref{lem:staircase-path}, there exists a staircase $qp$-path $P_{qp}$ comprised of segments of $L$ and horizontal chords of $L$. 
Traversing $P_{qp}$ from $p$ to $q$, we obtain a $x$- and $y$-monotone increasing $pq$-path that we denote by $L'$. For each horizontal edge $e$ of $P_{pq}$, let $Q_e$ be an axis-parallel square of side length $\|e\|$ above $e$. Since $L$ is a tame path, each connected component of $L\setminus L'$ lies in a square $Q_e$ for some horizontal edge $e$ of $L'$; see Fig.~\ref{fig:combinations2}(b).

We construct a geometric graph $G$ as follows. Let $G$ contain two \textsf{SLT}s from the center of $R$ to the bottom side of $R$ and to $L'$, resp., described in Lemma~\ref{lem:stairs}.
Let $G$ also contain a subdivision of each square $Q_e$ into rectangles of aspect ratio $2\eps^{-1/2}$. Finally, in each rectangle $r\subset Q$ of this subdivision, if $r$ intersects $L$, then $G$ contains a vertical line segment from a bottom-most point in $L\cap r$ to the bottom side of $r$. The weight of the \textsf{SLT} is $O(\eps^{-1/2})$ by Lemma~\ref{lem:stairs}. Since the sum of the widths of all squares $Q_e$ is at most one 1, the total weight of the grids in $Q_e$ is also $O(\eps^{-1/2})$, and the vertical edges in the rectangles in $r\subset Q$ are bounded above by the weight of the grid. The overall weight of $G$ is $\|L\|+O(\eps^{-1/2})$.

Let $S$ be a point in the bottom side of $R$, and $t\in L$. If $t\in L'$, then the two \textsf{SLT}s jointly contain a path $P_{st}$ with $\|P_{st}\|\leq (1+O(\eps))\|st\|$ by Lemma~\ref{lem:combine2}. Otherwise, $t\in L\setminus L'$. Since $L$ is a tame path, $t$ lies in a square $Q_e$ for some horizontal edge $e$ of $L'$. We can construct a path $P_{st}$ as a path from $t$ to a point $t'\in e$ similarly to the proof of Lemma~\ref{lem:combine5}, followed by a path from $t'$ to $s$ in the \textsf{SLT}s.
\end{proof}

We use Lemma~\ref{lem:combine5} to construct a $(1+\eps)$-spanner between the base $pq$ and $pq$-path in a tame histogram.

\begin{lemma}\label{lem:dir2}
Let $H$ be a tame histogram bounded by a horizontal line $pq$ and $pq$-path $L$,
and let $S\subset \partial H$ be a finite point set.
Then there exists a geometric graph $G$ of weight $\|G\|=O(\eps^{-1/2}\,\per(P))$
such that $G$ contains a $ab$-path $P_{ab}$ with $\|P_{ab}\|\leq (1+\eps)\|ab\|$
for all $a\in S\cap L$ and $b\in pq$ such that $ab\subset H$ and $|\slope(ab)|\geq \eps^{-1/2}$.
\end{lemma}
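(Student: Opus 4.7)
The plan is to subdivide the tame histogram $H$ into narrow vertical bands above subsegments of $pq$ and apply scaled instances of Lemma~\ref{lem:combine5} (and Lemma~\ref{lem:combine6} for climbing portions) on each band, using Lemma~\ref{lem:staircase-path} to decompose the top path $L$ locally into a monotone staircase skeleton plus ``wiggles,'' and recursing on the wiggles.

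First I derive structural bounds from tameness. The chord $pq$ is a horizontal chord of $L$ (both endpoints lie on $L$ at height $0$), so the tame condition yields $\|L\|\leq 2\wdth(H)$; combined with $\|L\|\geq \wdth(H)+2\hght(H)$, this forces $\hght(H)\leq \frac{1}{2}\wdth(H)$ and $\per(H)=O(\wdth(H))$. Consequently, any chord $ab$ to be served (with $a\in S\cap L$, $b\in pq$, and $|\slope(ab)|\geq \eps^{-1/2}$) has horizontal span $|x(a)-x(b)|\leq \sqrt{\eps}\,\hght(H)$, so its two endpoints sit in a narrow band.

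I set $w:=\sqrt{\eps}\,\hght(H)$ and partition $pq$ into subsegments of width at most $w$; each chord is confined to a band of three consecutive subsegments (combined width $\leq 3w$). For each band, I apply a scaled Lemma~\ref{lem:combine5}: the rectangle $R$ of dimensions $3w \times 6\hght(H)$ (where $6\hght(H)=2\cdot 3w\cdot\eps^{-1/2}$) covers the vertical extent of the histogram in the band, and the unit square $Q$ of side $3w$ atop $R$ absorbs the near-flat portion of $L$ above the band. Within the same band I apply Lemma~\ref{lem:staircase-path} to decompose the subpath of $L$ above the band into an $x$- and $y$-monotone staircase skeleton plus wiggles above horizontal chords of the skeleton; climbing portions of the skeleton that traverse the band diagonally are handled by a scaled Lemma~\ref{lem:combine6} whose slanted configuration $(p',q')$ matches the entry/exit heights across the band. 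Each wiggle is itself a tame sub-histogram (tameness is inherited through the outer chord condition), with base equal to the corresponding skeleton chord, and is handled recursively.

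The main obstacle is controlling the total weight across the recursion, because a tame path can, in principle, climb steeply within a narrow band (up to height $\hght(H)$ while the unit square $Q$ has side only $3w$), making the decomposition into skeleton and wiggles necessary. At the base level, each band contributes $O(w\eps^{-1/2})=O(\hght(H))$, which sums to $O(\wdth(H)\cdot\eps^{-1/2})=O(\eps^{-1/2}\per(H))$ across $O(\wdth(H)/w)$ bands. For the recursive wiggles, tameness gives $\sum_{\text{wiggles}}\per(\text{wiggle})=O(\per(H))$ at each recursion level (since $\sum \text{top}_w \leq 2\sum \text{base}_w\leq 2\wdth(H)$), so the recursive cost stays $O(\eps^{-1/2}\per(H))$ by induction on the combinatorial complexity of $L$ (each wiggle is a strictly simpler tame sub-histogram). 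For the stretch, a chord $ab$ with $a$ on a wiggle decomposes as $a\to a'$ (within the wiggle, by recursion) followed by $a'\to b$ (along the skeleton via a combine lemma), giving composed stretch $(1+O(\eps))^2=1+O(\eps)$, which rescales to $1+\eps$.
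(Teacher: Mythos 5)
Your single-scale band decomposition does not deliver the stretch guarantee, and this is the central difficulty the lemma has to overcome. In Lemma~\ref{lem:combine5} the point $a$ must lie on a tame path inside the square $Q$ sitting \emph{on top of} the rectangle $R$, and $b$ on the bottom side of $R$; the stretch analysis hinges on $\|ab\|\geq \hght(R)$, i.e.\ on the vertical separation being at least $\eps^{-1/2}$ times the width scale, so that the additive detour $O(\sqrt{\eps}\cdot\wdth(Q))$ is absorbed. In your construction the rectangle over a band has width $3w=3\sqrt{\eps}\,\hght(H)$ and height $6\,\hght(H)$, so it already exceeds the histogram: the relevant portion of $L$ lies \emph{inside} $R$, not in the square atop it, and a pair $a\in L$, $b\in pq$ with $a$ at small height $y(a)\ll\hght(H)$ has $\|ab\|\approx y(a)$, far below the rectangle height. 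No gadget built at the fixed band scale can serve such a pair with stretch $1+O(\eps)$, and routing $a$ through a wiggle's base chord and then "along the skeleton via a combine lemma" inherits the same problem one level down. This is exactly why the paper's proof is multi-scale: it tiles the horizontal strips between the lines $\ell_i\colon y=\hght(H)(1-3\sqrt{\eps})^i$ (two shifted tilings per strip) with squares whose side is proportional to their height above $pq$, so every $a\in S\cap L$ sits in the right half of some square $Q$ whose pending rectangle $R(Q)$ down to the base has aspect ratio exactly $\tfrac13\eps^{-1/2}$, and the slope condition then forces $b$ onto the bottom side of $R(Q)$, making Lemma~\ref{lem:combine5} applicable at the correct scale.

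The weight analysis of your wiggle recursion is also not established. Tameness does give that the wiggles at one level are tame and that their perimeters sum to $O(\per(H))$, but that is \emph{not} a geometric decay: each level of the recursion can again cost $\Theta(\eps^{-1/2}\per(H))$, the depth is bounded only by the combinatorial complexity of $L$, and "induction on the combinatorial complexity" proves termination, not a uniform constant in the bound $O(\eps^{-1/2}\per(H))$ (the recurrence $\mathrm{cost}(H)\leq C\eps^{-1/2}\per(H)+\sum_W \mathrm{cost}(W)$ with $\sum_W\per(W)$ possibly close to $\per(H)$ does not close). Contrast this with the paper's recursions elsewhere (Lemmas~\ref{lem:staircase} and~\ref{lem:tame-staircase}), which rely on an explicit constant-factor decrease of the total width/horizontal perimeter per level; in Lemma~\ref{lem:dir2} itself the paper avoids recursion altogether and instead keeps only the squares containing points of $S$, then bounds $\sum_{Q\in\mathcal{Q}}\wdth(Q)$ by $O(\|L\|)$ via a constant-degree proximity graph on the squares and a packing argument over $O(1)$ independent sets (charging $\wdth(Q)\leq\|L\cap 2Q\|$ with the dilated squares $2Q$ pairwise disjoint within an independent set). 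Without either a decay mechanism or such a charging scheme, your claimed total weight $O(\eps^{-1/2}\per(H))$ is unsupported.
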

\begin{proof}
We construct a collection $\mathcal{Q}$ of squares such that
for every square $Q\in \mathcal{Q}$ is adjacent to a rectangle $R(Q)$ as in
the setting of Lemma~\ref{lem:combine5}; and for every point pair $a,b\in S$,
with $a\in L$ and $b\in pq$, there is a square $Q\in \mathcal{Q}$ such that $a,b\in Q\cup R(Q)$. Let $G(Q)$ be the geometric graph in Lemma~\ref{lem:combine5} for all $Q\in \mathcal{Q}$, and let $G=\bigcup_{Q\in \mathcal{Q}}G(Q)$. Then $G$ has the required stretch factor.
It remains to construct the collection $\mathcal{Q}$ of squares, and
show that $\|G\|=O(\eps^{-1/2}\,\per(P))$.

\smallskip\noindent\emph{Construction of Squares.}
Refer to Fig.~\ref{fig:thames}.
Let $H$ be a tame histogram bounded by a horizontal line $pq$ and $pq$-path $L$.
We may assume w.l.o.g. that $p$ is the origin and $q$ is on the positive $x$-axis,
and $h=\mathrm{height}(H)$.
Since $H$ is tame, $\|L\|\leq 2\|pq\|$, which implies that $h\leq \frac12\,\|pq\|$.
For every nonnegative integer $i\in \mathbb{N}$, let
\[\ell_i:y=h \left(1-3\cdot \sqrt{\eps}\right)^i.\]
We tile the horizontal strip between two consecutive lines, $\ell_i$ and $\ell_{i+1}$,
by squares in two different ways, such that the midpoint of a square in one tiling is on the
boundary of two squares in the other tiling.

\begin{figure}[htbp]
 \centering
 \includegraphics[width=0.8\textwidth]{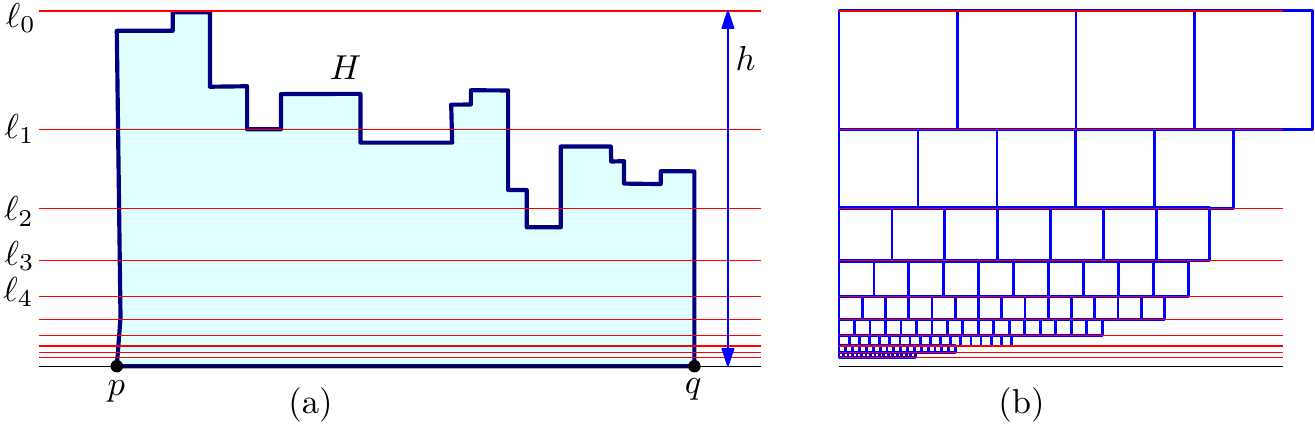}
 \caption{(a) A tame histogram $H$ and horizontal lines $\ell_i$, $i\in \mathbb{N}$, defined in the proof of Lemma~\ref{lem:dir2}.
 (b) Tiling of the horizontal strips between consecutive lines $\ell_i$ and $\ell_{i+1}$.}
    \label{fig:thames}
\end{figure}

Let $\mathcal{Q}$ be the set of squares $Q$ in the tilings defined above such that $Q\cap S\neq \emptyset$. For each square $Q\in \mathcal{Q}$, let $R(Q)$ be the axis-aligned rectangle such that the top side of $R(Q)$ equals the bottom side of $Q$, and the bottom side of $R(Q)$ is in the $x$-axis. Then the aspect ratio of $R(Q)$ is $\frac13\,\eps^{-1/2}$. Indeed, if $Q$ lies between $\ell_i$ and $\ell_{i+1}$, then 
\[
\frac{\hght(R(Q))}{\wdth(R(Q))}
=\frac{\hght(R(Q))}{\hght(Q)}
=\frac{h\left(1-3 \sqrt{\eps}\right)^i}{ h\left(1-3 \sqrt{\eps}\right)^i - h\left(1-3 \sqrt{\eps}\right)^{i+1}} 
=\frac{1}{1- \left(1-3 \sqrt{\eps}\right) }
=\frac{1}{3 \sqrt{\eps}}.
\]
For every $Q\in \mathcal{Q}$, Lemma~\ref{lem:combine5} (invoked with $36\,\eps$ in place of $\eps$) yields a geometric graph $G(Q)$. Let $G=\bigcup_{Q\in \mathcal{Q}}G(Q)$.

\smallskip\noindent\emph{Lightness Analysis.}
By Lemma~\ref{lem:combine5}, the graph $G(Q)$ is comprised of $L\cap Q$ and additional edges of weight $O(\eps^{-1/2}\wdth(Q))$. For the desired bound $\|G\|\leq O(\eps^{-1/2}\per(H))$,
it is enough to prove that $\sum_{Q\in \mathcal{Q}} \wdth(Q) \leq O(\per(H))$.

We define a proximity graph $\widehat{G}$ on the squares in $\mathcal{Q}$.
The vertex set is $V(\widehat{G})=\mathcal{Q}$, and squares $Q_1,Q_2\in \mathcal{Q}$
are adjacent if and only if $\mathrm{dist}(Q_1,Q_2)\leq \wdth(Q_1)+\wdth(Q_2)$.
Since the squares in the horizontal strip between $\ell_i$ and $\ell_{i+1}$ form two tilings,
and the widths of the squares in adjacent horizontal strips differ by a factor close to 1,
the maximum degree in $\widehat{G}$ is $O(1)$. Consequently, $\widehat{G}$ is
$O(1)$-degenerate, and we can partition its vertex set $\mathcal{Q}$ into
$O(1)$ independent sets.

For every $Q\in \mathcal{Q}$, let $2Q$ denote the square obtained by dilating $Q$ from its center by a factor of 2.
Since $L$ contains points in $Q$, but at least one of its endpoints is outside of $2Q$, then $L$ traverses the annulus $2Q\setminus Q$, which implies $\|L\cap 2Q\|\geq \mathrm{width}(Q)$. For an independent set $\mathcal{I}\subset \mathcal{Q}$, the squares $\{2Q: Q\in \mathcal{I}\}$ are pairwise disjoint.
It follows that
\begin{equation}\label{eq:ind}
\sum_{Q\in \mathcal{I}} \mathrm{width}(Q)
\leq \sum_{Q\in \mathcal{I}}\|L\cap 2Q\|
\leq \left \| L\cap \left(\bigcup_{Q\in \mathcal{I}} 2Q\right)\right\|
\leq \|L\|.\nonumber
\end{equation}
Summation over $O(1)$ independent sets yields  $\sum_{Q\in \mathcal{Q}} \mathrm{width}(Q) \leq \|L\|\leq O(\per(H))$,
as required.

\smallskip\noindent\emph{Stretch analysis.}
Now consider points $a\in S\cap L$ and $b\in pq$ such that $ab\subset H$ and $|\slope(ab)|\geq \eps^{-1/2}$.
Assume w.l.o.g.\ that $\slope(ab)>0$. There exists a square $Q\in \mathcal{Q}$ such that $a$ lies in the right half of $Q$.
We have $x(b)-x(a)\leq (y(a)-y(b))/\slope(ab)\leq\eps^{1/2} \hght(Q\cup R(Q)) \leq \frac32\eps^{1/2} \hght(R(Q))\leq \frac12 \, \wdth(R(Q))$. Consequently, $b$ is on the bottom side of $R(Q)$, and so $G(Q)$ contains an $ab$-path of weight $(1+O(\eps))\|ab\|$ by Lemma~\ref{lem:combine5}. 
\end{proof}

In the remainder of this section, we construct a directional $(1+\eps)$-spanner for points on the $x$-monotone path of a tame histogram. This is done by an adaptation of Lemma~\ref{lem:staircase}. Even though the horizontal edges are replaced by tame paths, the weight analysis remains essentially the same.

The crucial observation in the proof of Lemma~\ref{lem:staircase}
(cf.~Equation~\eqref{eq:width0})
was that if $L$ is an $x$- and $y$-monotone increasing staircase $ab$-path,
then $\mathrm{slope}(ab)=\mathrm{height}(L)/\mathrm{width}(L)$.
We show that this equation holds approximately for any tame path $L$,
where the width and height of $L$ are replaced by the total weight
of horizontal and vertical edges of $L$, denoted $\hper(L)$ and $\vper(L)$, respectively.

\begin{lemma}\label{lem:width}
If $L$ is a tame $pq$-path such that $\frac12\eps^{-1/2}\leq \slope(pq)\leq \eps^{-1/2}$ for $\eps\in (0,\frac{1}{16}]$, then
\begin{equation}\label{eq:width2}
\slope(pq) \leq \frac{\vper(L)}{\hper(L)} \leq \frac{3}{2}\, \slope(pq).
\end{equation}
\end{lemma}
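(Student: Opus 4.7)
The plan is to exploit the natural decomposition of a tame $pq$-path into upward vertical edges and boundaries of $x$-monotone tame histograms with horizontal base, and to reduce the claim to $\vper(L) \leq \Delta x + \Delta y$, where $\Delta x = x(q)-x(p)$ and $\Delta y = y(q)-y(p)$. Once this is in hand, dividing by $\hper(L) = \Delta x$ yields $\slope(pq) \leq \vper(L)/\hper(L) \leq 1 + \slope(pq)$, and the hypothesis $\slope(pq) \geq \frac{1}{2}\eps^{-1/2} \geq 2$ (valid for $\eps \leq \frac{1}{16}$) absorbs the additive $1$ into the desired $\frac{3}{2}$ factor.

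For the lower bound, I would note that a tame $pq$-path is $x$-monotone, hence $\hper(L) = \Delta x$, while the signed sum of vertical displacements along the vertical edges of $L$ equals $\Delta y$, giving $\vper(L) \geq \Delta y$. Dividing through yields $\vper(L)/\hper(L) \geq \slope(pq)$.

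For the upper bound, I would write $L$ as an alternating concatenation $T_0, V_1, T_1, V_2, \ldots, V_k, T_k$, where each $V_i$ is an upward vertical edge and each $T_i$ is the boundary path of a tame histogram resting on a horizontal chord. The upward edges contribute exactly $\Delta y$ to $\vper(L)$, and $1$-tameness of each $T_i$ gives $\|T_i\| \leq 2\hper(T_i)$, so $\vper(T_i) \leq \hper(T_i)$. Summing, $\sum_i \vper(T_i) \leq \sum_i \hper(T_i) = \hper(L) = \Delta x$, and hence $\vper(L) \leq \Delta y + \Delta x$, as required. Combining this with the slope hypothesis completes the proof.

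The main obstacle is the decomposition step: I expect the proof to rely on the specific structural definition of a tame $pq$-path (as introduced in the generalization of Lemma~\ref{lem:staircase} that replaces horizontal edges of a staircase by $x$-monotone tame paths). If instead the definition only postulates that every horizontal chord $ab$ of $L$ satisfies $\|L_{ab}\| \leq 2\|ab\|$, without a prescribed decomposition, one would need a slightly more delicate charging argument: each downward edge of length $\delta$ can be paired with the first later point on $L$ at the same height, and $1$-tameness of the corresponding subpath forces horizontal distance $\geq 2\delta$; ensuring these horizontal contributions combine to at most $\Delta x$ (despite possible nesting of excursions) is then the technical crux.
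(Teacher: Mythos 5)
Your proof is correct and takes essentially the same route as the paper: your decomposition into upward vertical edges and excursions above horizontal chords is precisely what the paper extracts from Lemma~\ref{lem:staircase-path} (the staircase skeleton $L'$), and your bound $\vper(L)\leq \wdth(pq)+\hght(pq)$ combined with $\slope(pq)\geq \frac12\eps^{-1/2}\geq 2$ is the paper's concluding computation. The ``more delicate charging'' fallback you sketch is not needed (and would in fact be problematic, since a horizontal segment above a downward excursion is not a chord of the histogram, so tameness would not apply to it), because the chord condition together with Lemma~\ref{lem:staircase-path} already supplies the decomposition you use.
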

\begin{proof}
Assume w.l.o.g. that $x(p)<x(q)$ and $y(p)<y(q)$.
By Lemma~\ref{lem:staircase-path}, there exists a staircase $qp$-path $P_{qp}$ comprised of segments of $L$ and horizontal chords of $L$. 
Traversing $P_{qp}$ from $p$ to $q$, we obtain a $x$- and $y$-monotone increasing $pq$-path that we denote by $L'$;
For each horizontal chord $ab$ of $L$, we have $\hper(L_{ab})=\|ab\|$ since $L$ is $x$-monotone, and
$\vper(L_{ab})\leq \hper(ab)\leq \|ab\|$ since $L$ is tame.
Consequently, we have $\hper(L')=\hper(L)=\wdth(pq)$ and,
as $L'$ is a staircase path, then
 \begin{align*}
\vper(L)  \leq & \vper(L')+\hper(L')
    = \hght(pq)+\wdth(pq)\\
    \leq & \left(1+\frac{1}{\slope(pq)}\right)\hght(pq)
    \leq (1+2\eps^{1/2})\hght(pq).
 \end{align*}
Overall, we obtain
    \[ \slope(pq)=
     \frac{\hght(L')}{\wdth(L')}
     \leq \frac{\vper(L)}{\hper(L)}
     \leq  \frac{(1+2\eps^{1/2})\hght(L')}{\wdth(L')}
      \leq \frac32\,\slope(pq),
    \]
where we used that $0<\eps\leq \frac{1}{16}$ implies $1+2\eps^{1/2}\leq \frac32$.
\end{proof}

\begin{figure}[htbp]
 \centering
 \includegraphics[width=0.9\textwidth]{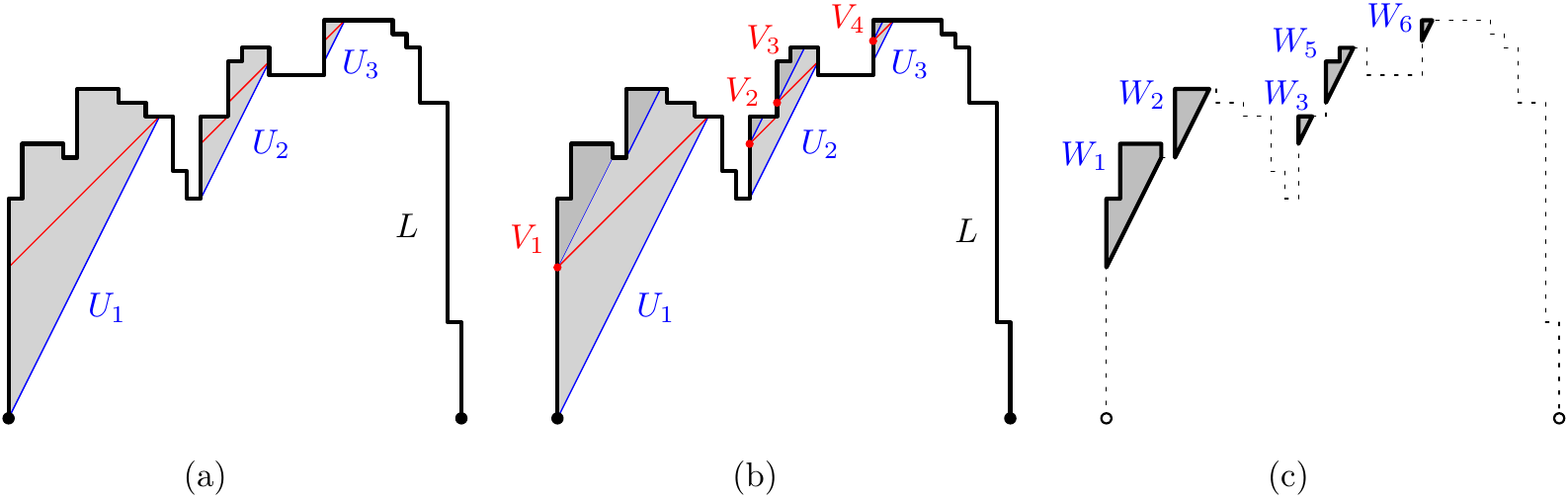}
 \caption{(a) A tame path $L$. The shadow of the ascending vertical edges of $L$ is shaded light gray.
 (b) The shadow of the horizontal edges and descending vertical edges is shaded dark gray.
 (c) Recursive subproblems generated in the proof of Lemma~\ref{lem:tame-staircase}.}
    \label{fig:tame-shadow}
\end{figure}

As noted above, the following lemma is an adaptation of Lemma~\ref{lem:staircase} to tame paths. Due to Lemma~\ref{lem:width}, the  recursive weight analysis carries over to this case. For clarity, we present a complete proof.

\begin{lemma}\label{lem:tame-staircase}
Let $L$ be a tame path, $S\subset L$ a finite point set, and $\eps\in (0,1]$.
Then there exists a geometric graph $G$ comprised of $L$ and additional edges of weight $O(\eps^{-1/2}\hper(L))$
such that $G$ contains a path $P_{ab}$ of weight $\|P_{ab}\|\leq (1+O(\eps))\|ab\|$ for any $a,b\in L$ such that $|\slope(ab)|\geq \eps^{-1/2}$ and the line segment $ab$ lies below $L$.
\end{lemma}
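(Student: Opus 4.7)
The plan is to mimic the recursive construction from the proof of Lemma~\ref{lem:staircase}, replacing the width/height ratio of staircase regions by the $\hper/\vper$ ratio of tame subpaths, and replacing the primitive edges by their tame analogues. The key qualitative change is that a tame path is not monotone, so we must separate the role of vertical edges into two classes according to how they are traversed. Let us orient $L$ from left to right; call a vertical edge of $L$ \emph{ascending} if it is traversed upward, and \emph{descending} otherwise. Define $\lambda$-$\shad_v^+(L)$ as the union of the cones of slope $\geq\lambda$ emanating below each \emph{ascending} vertical edge of $L$, and $\lambda$-$\shad_h^*(L)$ as the cones emanating below the union of horizontal edges and \emph{descending} vertical edges (see Fig.~\ref{fig:tame-shadow}). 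A chord $ab$ below $L$ with $|\slope(ab)|\geq\eps^{-1/2}$ necessarily has its upper endpoint on an ascending vertical edge or on a horizontal edge, so the shadow classification that worked in Lemma~\ref{lem:staircase} still captures all relevant pairs.

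With these definitions in hand, I would run the three-layer recursion of Lemma~\ref{lem:staircase} verbatim: let $\mathcal{U}$ be the connected components of $(\eps^{-1/2})$-$\shad_v^+(L)$; inside each $U\in\mathcal{U}$ let $\mathcal{V}$ be the components of $(\tfrac12\eps^{-1/2})$-$\shad_h^*(L\cap U)$; and inside each $V\in\mathcal{V}$ let $\mathcal{W}$ be the components of $(\eps^{-1/2})$-$\shad_v^+(L\cap V)$. Each region in $\mathcal{U}\cup\mathcal{V}\cup\mathcal{W}$ is bounded above by a tame subpath of $L$ and below by a single line segment whose slope equals the appropriate threshold. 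Inside each $V\in\mathcal{V}$, we place the bottom vertex $s_V$ as the source and invoke Lemma~\ref{lem:combine6} (the \textsf{SLT} variant for tame $pq$-paths) between $s_V$ and the tame upper boundary of $V$, at a cost of $O(\eps^{-1/2}\hper(L\cap V))$; we also add the boundary $\partial V$ to $G$. We then recurse on every $W\in\mathcal{W}$ that still contains a pair of points from $S$.

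The weight analysis is where Lemma~\ref{lem:width} does all the work. The staircase argument relied on the deterministic identity $\hght/\wdth=\slope$; here we use the two-sided bound $\slope(pq)\leq \vper(L)/\hper(L)\leq \tfrac{3}{2}\slope(pq)$ provided by Lemma~\ref{lem:width}, applied to each $U,V,W$ whose upper tame boundary has slope in $[\tfrac12\eps^{-1/2},\eps^{-1/2}]$. This yields the analogue of Equation~\eqref{eq:width0} with $\hper$ replacing $\wdth$ and $\vper$ replacing $\hght$,
\begin{equation*}
\sum_{W\in\mathcal{W}}\hper(W)\;\leq\;\sqrt{\eps}\sum_{W\in\mathcal{W}}\vper(W)\;\leq\;O(\sqrt{\eps})\sum_{V\in\mathcal{V}}\vper(V)\;\leq\;O(1)\cdot\sum_{V\in\mathcal{V}}\hper(V)\;\leq\;O(1)\cdot\sum_{U\in\mathcal{U}}\hper(U),
\end{equation*}
where the constants can be made smaller than $1$ by tuning the slope thresholds (replacing $\tfrac12\eps^{-1/2}$ by $c\eps^{-1/2}$ for small enough $c$, which only changes implicit constants). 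Summing the geometric series over all levels of the recursion gives $\|G\|=\|L\|+O(\eps^{-1/2}\hper(L))$, matching the claim.

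For the stretch-factor analysis, given $a,b\in S$ with $|\slope(ab)|\geq \eps^{-1/2}$ and $ab$ below $L$, let $U\in\mathcal{U}$ be the smallest shadow region of the recursion containing both endpoints. Then $b$ lies in some $V\in\mathcal{V}$ and $a$ is at or below $s_V$. The path $P_{ab}$ consists of a $y$-monotone climb from $a$ to $s_V$ using ascending vertical edges of $L$ together with boundary edges of slope $\tfrac12\eps^{-1/2}$ of intermediate $V$-regions, followed by a path from $s_V$ to $b$ inside the \textsf{SLT} of Lemma~\ref{lem:combine6}. Each edge on the climb is either vertical or of slope at least $\tfrac12\eps^{-1/2}$, so all edges are within angle $O(\sqrt{\eps})$ of the vertical, and Lemma~\ref{lem:angle2} gives stretch $1+O(\eps)$ for this portion; Lemma~\ref{lem:combine6} supplies stretch $1+O(\eps)$ for the \textsf{SLT} portion.

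The main obstacle, and the place that requires genuine care rather than a formal rewrite, is verifying that the shadow recursion is well-defined and still ``halves'' when $L$ is allowed to descend. Specifically, one must check that Lemma~\ref{lem:width} applies to the tame subpath bounding each recursive region (which requires its slope to land in the right regime), and that the climb from $a$ to $s_V$ can always be realized using edges that were actually added to $G$ (ascending edges of $L$ plus slope-$\tfrac12\eps^{-1/2}$ boundary edges of the intermediate $V$-regions); the descending vertical edges of $L$ are absorbed inside $\shad_h^*$ and are bypassed via the corresponding $V$-boundary, so no climb ever needs to traverse a descending edge. Once these geometric consistency checks are in place, the weight and stretch bounds follow exactly as in the staircase case.
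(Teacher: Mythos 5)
Your overall route coincides with the paper's: the same three-layer shadow recursion (shadows of the ascending vertical edges, then of the horizontal plus descending vertical edges, with thresholds $\eps^{-1/2}$ and $\frac12\eps^{-1/2}$), the same use of Lemma~\ref{lem:width} to replace the staircase identity $\hght/\wdth=\slope$ and obtain a constant-factor decay of $\sum\hper$ per recursion level, and the same stretch argument (a $y$-monotone climb from $a$ to $s_V$ analyzed via Lemma~\ref{lem:angle2}, followed by an \textsf{SLT} path to $b$ from Lemma~\ref{lem:combine6}). The omissions of the base case $\eps\in(\frac{1}{16},1]$ (where $L$ itself suffices by Lemma~\ref{lem:fat}, and which is needed because Lemma~\ref{lem:width} assumes $\eps\leq\frac{1}{16}$) and of the symmetric construction for negative slopes are cosmetic.

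The one genuine gap is the step ``invoke Lemma~\ref{lem:combine6} between $s_V$ and the tame upper boundary of $V$.'' A single \textsf{SLT} at the scale of $V$ does not suffice: the boundary path $L\cap V$ descends essentially all the way down to $s_V$, so the separation hypothesis of Lemma~\ref{lem:combine6} (the target path lies entirely above a rectangle whose height is $2\eps^{-1/2}$ times its width, with the source on the bottom side) is violated; and even setting that aside, an \textsf{SLT} built at scale $\wdth(V)$ incurs additive error $\Theta(\sqrt{\eps}\,\wdth(V))=\Theta(\eps\,\hght(V))$, which is not $O(\eps\,\|ab\|)$ when $b\in S\cap V$ lies at height much smaller than $\hght(V)$ above $s_V$ and $a$ is only slightly below $s_V$. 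The paper---both in Lemma~\ref{lem:staircase}, which you say you follow verbatim, and in its proof of this lemma---uses a multi-scale device precisely here: horizontal lines $h_i$ at heights $\hght(V)/2^i$ above $s_V$, and one \textsf{SLT} (via Lemma~\ref{lem:combine6}) per dyadic band of $L\cap V$ between $h_i$ and $h_{i+1}$, so that each band is vertically separated from $s_V$ in proportion to its own scale, while the bands' horizontal extents decay geometrically and the total weight remains $O(\eps^{-1/2}\hper(V))$. Your stretch claim for the $s_V$-to-$b$ portion needs this (or an equivalent) multi-scale construction; as written it does not follow from a single application of Lemma~\ref{lem:combine6}.
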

\begin{proof}
By Lemma~\ref{lem:fat}, we have $\|L_{ab}\|\leq 2|x(a)-x(b)|+|y(a)-y(b)|\leq (1+\sqrt{2})\|ab\|$,
so the claim holds for $G=L$ if $\eps\in (\frac{1}{16},1]$.
In the remainder of the proof, assume $\eps\in (0,\frac{1}{16}]$.
We construct $G$ as a union of two graphs, $G^+$ and $G^-$, where $G^+$ is a spanner for $\{a,b\}$ pairs with $\mathrm{slope}(ab)>0$ and $G^-$ for $\mathrm{slope}(ab)<0$. We focus on $G^+$, as the case of $G^-$ is analogous.

Let $a, b\in S$ such that $\mathrm{slope}(ab)\geq \eps^{-1/2}$ and $ab$ lies below $L$.
Without loss of generality, we may assume $y(a)<y(b)$. Since $ab$ is below $L$, point $a$ cannot be an interior of a horizontal edge of $L$. Note that $a$ is a point in an ascending vertical edge of $L$.

Let $A=(\eps^{-1/2})$-$\shad_v(L)$ be the $(\eps^{-1/2})$-shadow of vertical edges of $L$; see Fig.~\ref{fig:tame-shadow}(a). Let $\mathcal{U}$ be the set of connected  components of $A$. By construction every pair $a,b\in L$ with $\mathrm{slope}(ab)\geq \eps^{-1/2}$ and $ab\subset H$ lies in some polygon in $\mathcal{U}$.
For each polygon $U\in \mathcal{U}$, we construct a geometric graph $G^+(U)$ of weight $O(\eps^{-1/2}\hper(U))$
such that $G^+(U)\cup L$ is a directional $(1+\eps)$-spanner for the points in $S\cap U$. Then $L$ together with $\bigcup_{U\in \mathcal{U}} G^+(U)$ is $(1+\eps)$-spanner for all possible $ab$ pairs. Since the polygons in $\mathcal{U}$ are adjacent to disjoint portions of $L$, we have $\sum_{U\in \mathcal{U}} \hper(U)\leq \hper(L)$,
and so $\sum_{U\in \mathcal{U}}\|G^+(U)\|=O(\eps^{-1/2}\hper(L))$, as required.

\smallskip\noindent\textbf{Recursive Construction.}
For each $U\in \mathcal{U}$, we construct $G^+(U)$ recursively as follows.
If $|S\cap U|\geq 2$, then let $B(U)=(\frac12 \eps^{-1/2})$-$\shad_h(L\cap U)$ be the $(\frac12 \eps^{-1/2})$-shadow of horizontal edges of $L\cap U$; see Fig.~\ref{fig:tame-shadow}(b). $|S\cap U|\geq 2$, then let $B(U)=\emptyset$.
Denote by $\mathcal{V}$ the set of connected components of $B(U)$ for all $U\in \mathcal{U}$.

For every $V\in \mathcal{V}$, let $C(V)=(\eps^{-1/2})$-$\shad_v(L\cap V)$ be the $(\eps^{-1/2})$-shadow of vertical edges of $L\cap V$; see Fig.~\ref{fig:tame-shadow}(b). Denote by $\mathcal{W}$ the set of all connected components of $C(V)$ for all $V\in \mathcal{V}$.

We can apply Lemma~\ref{lem:width} with slope $\eps^{-1/2}$ for all $W\in \mathcal{W}$;
and  with slope $\frac12\,\eps^{-1/2}$ for all $V\in \mathcal{V}$. Then
\begin{align}
\sum_{W\in \mathcal{W}}\hper(W)
&\leq \frac32\cdot \sqrt{\eps}\cdot \sum_{W\in \mathcal{W}}\vper(W)
\leq \frac32\cdot \sqrt{\eps}\cdot \sum_{V\in \mathcal{V}}\vper(V)\nonumber\\
&\leq \frac32\cdot \frac12\, \sum_{V\in \mathcal{V}}\hper(V)
\leq \frac34\, \sum_{U\in \mathcal{U}}\hper(U).\label{eq:width1}
\end{align}
Consequently, $\sum_{W\in \mathcal{W}} \|G^+(W)\|$ is proportional to
$\frac34\cdot \eps^{-1/2}\sum_{U\in \mathcal{U}}\hper(U)$.

For each polygon $V\in \mathcal{V}$, let $s_V$ be the bottom vertex of $V$, let $L(V)=L\cap V$ be the portion of $L$ on the boundary of $V$. We construct a sequence of \textsf{SLT}s from source $s_V$ as follows. For every nonnegative integer $i\geq 0$, let $h_i$ be a horizontal line at distance $\mathrm{height}(V)/2^i$ above $s_V$. Let $L_i\subset L(V)$ be a maximum portion of $L(V)$ such that the corresponding staircase path $L'_i$ is on or below $h_i$ and strictly above $h_{i+1}$.
By Lemma~\ref{lem:combine6}, we can construct
a \textsf{SLT} from $s_V$ to $L_i$.
The total weight of these \textsf{SLT}s is $O(\eps^{-1/2}\hper(V))$.
Then the overall weight of these spanners is
$\sum_{V\in \mathcal{V}} O(\eps^{-1/2}\hper(V)) =O(\eps^{-1/2}\hper(U))$.
This completes the description of one iteration.
Recurse on all $W\in \mathcal{W}$ that contain any point in $S$.

\smallskip\noindent\emph{Lightness analysis.}
Each recursive call of the algorithm, for a polygon $U$, adds edges of total weight $O(\eps^{-1/2}\hper(U))$ to $G^+(U)$ and produces subproblems whose combined horizontal perimeter is at most $\frac34\hper(U)$ by Equation~\eqref{eq:width1}.
Consequently, summation over all subsequent levels of the recursion yields
$\|G^+(U)\|=
O(\eps^{-1/2}\hper(U) \cdot
\sum_{i\geq 0}\left(\frac34\right)^{-i})=O(\eps^{-1/2}\hper(U))$, as required.

\smallskip\noindent\emph{Stretch analysis.}
Now consider point pair $a,b\in S$ such that $\mathrm{slope}(ab)\geq \eps^{-1/2}$,
$a$ is in an ascending vertical edge of $L$, and $b$ is in a horizontal edge or a descending vertical edge of $L$. Assume that $U$ is the smallest polygon in the recursion above that contains both $a$ and $b$. Then $b\in V$ for some $V\in \mathcal{V}$, and $a$ is at or below the bottom vertex $s_V$ of $V$. Now we can find an $ab$-path $P_{ab}$ as follows:
First construct a $y$-monotonically increasing path from $a$ to $s_V$ along vertical edges of $L$ and along the edges of slope $\frac{1}{2}\eps^{1/2}$ of some polygons in $\mathcal{V}$. Then from $s_V$ to $b$, follow an \textsf{SLT} provided by Lemma~\ref{lem:combine6}. Specifically, let $b'$ be the orthogonal projection of $b$ to the staircase path $L'$. There exists an integer $i\geq 0$ such that $b'$ lies between the horizontal lines $h_i$ and $h_{i+1}$, and we can use the \textsf{SLT} constructed between $s_V$ and $L_i$.

All edges of $P_{ab}$ from $a$ to $s_V$ are vertical or have slope $\frac12\eps^{-1/2}$, and so their directions differ from vertical by at most $\mathrm{arctan}(2\eps^{1/2})\leq 3\eps^{1/2}$ from the Taylor expansion of $\tan(x)$ near $0$.
By Lemma~\ref{lem:angle2} the stretch factor of the $as_V$-path and the path $(a,s_V,b)$ are each at most $1+O(\eps)$. Lemma~\ref{lem:combine6} provides a path from $s_V$ to $b$ with stretch factor $1+O(\eps)$. Overall, $\|P_{ab}\|\leq (1+O(\eps))\|ab\|$.
\end{proof}

The combination of Lemmas~\ref{lem:dir2} and~\ref{lem:tame-staircase} provides a directional $(1+\eps)$-spanner for
all point pairs on the boundary of a tame histogram.

\begin{corollary}\label{cor:dir3}
Let $H$ be a tame histogram and  $S\subset \partial H$ a finite point set.
Then there exists a geometric graph $G$ of weight $\|G\|=O(\eps^{-1/2}\,\hper(H))$
such that $G$ contains a $ab$-path $P_{ab}$ with $\|P_{ab}\|\leq (1+O(\eps))\|ab\|$
for all $a,b\in S$ whenever $ab\subset H$ and $|\slope(ab)|\geq \eps^{-1/2}$.
\end{corollary}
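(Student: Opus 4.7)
}
The plan is to observe that the corollary is a direct consequence of Lemmas~\ref{lem:dir2} and~\ref{lem:tame-staircase}, once the geometry of a tame histogram is used to convert $\per(H)$ and $\hper(L)$ into $\hper(H)$. Let $H$ be bounded below by the horizontal segment $pq$ and above by the $x$-monotone tame path $L$. I will first argue a trivial case distinction on the location of $a,b \in S$. Since $|\slope(ab)| \geq \eps^{-1/2}$, the segment $ab$ is nearly vertical, so it is impossible for both endpoints to lie on the horizontal base $pq$. Hence only two nontrivial cases remain: (i) one of $a,b$ lies on $pq$ and the other on $L$; (ii) both $a,b$ lie on $L$.

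Next, I would define the spanner $G$ as the union of three pieces: the base segment $pq$ itself, the graph $G_1$ from Lemma~\ref{lem:dir2} (which already handles case~(i)), and the graph $G_2$ obtained by applying Lemma~\ref{lem:tame-staircase} to $L$ and the point set $S \cap L$ (which handles case~(ii), since in that case $ab \subset H$ forces $ab$ to lie on or below $L$). Each of the two lemmas supplies an $ab$-path of stretch $1+O(\eps)$ for the pairs it covers, so the stretch factor of $G$ is correct; this part needs no further work beyond citing the two lemmas.

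The real content is the weight bookkeeping. For a tame histogram, the tameness condition applied to the chord $pq$ itself gives $\|L\| \leq 2\|pq\|$, so
\[
\per(H) = \|pq\| + \|L\| \leq 3\|pq\| = \tfrac{3}{2}\,\hper(H),
\]
which lets me absorb the $O(\eps^{-1/2}\per(H))$ bound from Lemma~\ref{lem:dir2} into $O(\eps^{-1/2}\hper(H))$. For the second piece, since $L$ is $x$-monotone, $\hper(L) = \wdth(H) = \tfrac{1}{2}\,\hper(H)$, so Lemma~\ref{lem:tame-staircase} contributes $O(\eps^{-1/2}\hper(L)) = O(\eps^{-1/2}\hper(H))$. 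The base segment $pq$ adds only $\tfrac{1}{2}\hper(H)$, which is subsumed by the $\eps^{-1/2}$ term. Summing the three bounds gives $\|G\| = O(\eps^{-1/2}\hper(H))$.

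There is essentially no obstacle here; the only thing to double-check is that the hypothesis ``$ab$ lies below $L$'' required by Lemma~\ref{lem:tame-staircase} follows from $a,b \in L$ and $ab \subset H$, which is immediate because $L$ is the upper boundary of $H$. The corollary then follows by combining the two lemmas as above.
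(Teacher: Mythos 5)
Your proposal is correct and follows essentially the same route as the paper, which obtains the corollary directly by combining Lemma~\ref{lem:dir2} (pairs with one endpoint on the base $pq$) and Lemma~\ref{lem:tame-staircase} (pairs with both endpoints on $L$); your explicit case analysis and the conversion $\per(H)\leq \tfrac{3}{2}\hper(H)$ and $\hper(L)=\tfrac12\hper(H)$ via tameness are exactly the routine bookkeeping the paper leaves implicit.
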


\section{Directional Spanners for Thin Histograms}
\label{sec:thin}

We can now construct a directional $(1+\eps)$-spanner for a thin histogram.

\begin{lemma}\label{lem:dir1}
Let $F$ be a thin histogram and $S\subset \partial F$ a finite point set.
Then there exists a geometric graph $G$ of weight $\|G\|=O(\eps^{-1/2}\,\hper(F))$
such that $G$ contains a $ab$-path $P_{ab}$ with $\|P_{ab}\|\leq (1+\eps)\|ab\|$
for all $a,b\in S$ if $ab\subset F$ and $\mathrm{dir}(ab)\in D$.
\end{lemma}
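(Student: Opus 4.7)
My plan is to adapt the shallow-light-tree machinery from Section~\ref{sec:SLT} to the thin-histogram setting, in the spirit of Lemma~\ref{lem:combine2}. Write $h=\hght(F)$ and $w=\wdth(F)$, and let $V$ be the vertical side of $F$ and $L$ the opposite $y$-monotone path. Since $F$ is $\sqrt{\eps}$-tame one has $w\leq \sqrt{\eps}\,h$ and $\hper(L)\leq \sqrt{\eps}\,h$, so the target weight $O(\eps^{-1/2}\hper(F))$ is $\Theta(h)$ for the sweep-produced thin histograms of Lemma~\ref{lem:sweep}, which matches $O(\per(F))$. This is the weight budget I would aim to fit inside.

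The starting point is to put $\partial F$ into $G$. The naive approach of routing pairs $a,b\in L\cap S$ along $L$ itself only gives $(1+\sqrt{\eps})$-stretch for a nearly-vertical chord $ab$, because $L$ may drift by up to $\sqrt{\eps}\,|y(a)-y(b)|$ horizontally between $a$ and $b$, whereas I need $(1+\eps)$. To recover the missing factor of $\sqrt{\eps}$, I would subdivide $V$ into $\Theta(h\sqrt{\eps}/w)$ consecutive segments of length $2\eps^{-1/2}w$ and, at the midpoint $s$ of each segment, install two shallow-light trees built via a rotated analog of Lemma~\ref{lem:monotone}: one that reaches every point of $L$ in the horizontal strip of height $\eps^{-1/2}w$ above $s$, and one that reaches every point of $L$ in the horizontal strip of height $\eps^{-1/2}w$ below $s$. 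Each \textsf{SLT} has weight $O(\eps^{-1/2}w)$, so the total \textsf{SLT} contribution is $O(h)$; combined with $\partial F$ this meets the budget.

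For $a,b\in S$ with $\dir(ab)\in D$ and $ab\subset F$, I would produce the spanning path exactly as in Lemma~\ref{lem:combine2}. If both endpoints lie on $V$, then $ab\subseteq V\subseteq G$ already. Otherwise I concatenate the \textsf{SLT} leg from $a$ up to its nearest source $s_a\in V$, the straight segment of $V$ between $s_a$ and $s_b$, and the \textsf{SLT} leg from $s_b$ down to $b$. Each \textsf{SLT} leg approximates the vertical distance between its source and its endpoint to within $(1+O(\eps))$, the middle segment of $V$ is exactly vertical, and since $|x(a)-x(b)|\leq \sqrt{\eps}\,|y(a)-y(b)|$ the additional slack inside $(1+\eps)\|ab\|$ absorbs all the lateral errors; this is the same combination argument as in Lemma~\ref{lem:combine2}, simply rotated by 90 degrees so that the separating rectangle lies between $V$ and $L$ instead of between two paths above and below.

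The main obstacle will be controlling the stretch for chords that straddle many \textsf{SLT} neighborhoods along $V$: the source spacing $2\eps^{-1/2}w$ is dictated precisely by the aspect ratio required in Lemma~\ref{lem:combine2}, and I will have to verify that, regardless of how deeply $a$ and $b$ are embedded inside their respective \textsf{SLT} strips, the length spent on the two \textsf{SLT} legs is dominated by $|y(a)-y(b)|$, so the combined stretch remains $(1+O(\eps))$; this mirrors the estimate at the end of the proof of Lemma~\ref{lem:combine2} but must be redone in the presence of the tame path $L$ rather than a generic $y$-monotone path.
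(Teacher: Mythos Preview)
Your single-scale approach via sources on $V$ does not achieve stretch $1+O(\eps)$ for pairs $a,b\in L$ that are close together. The geometry of Lemma~\ref{lem:monotone} requires the source to sit at vertical distance $\eps^{-1/2}$ times the horizontal extent of the target; your sources lie \emph{inside} the strip they are supposed to serve, so the bound $\|P_{st}\|\le(1+O(\eps))\,|y(s)-y(t)|$ is simply false there (it would claim a path of length near zero to a point at horizontal distance~$w$). Concretely, take $a=(w,y_0)$ and $b=(w,y_0+\delta)$ on $L$, where between $a$ and $b$ the path $L$ makes a bump of horizontal size $\tfrac12\sqrt{\eps}\,\delta$ to the right and back; this is allowed by $\sqrt{\eps}$-tameness since $\|L_{ab}\|=(1+\sqrt{\eps})\,\delta$. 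Then $ab$ is a vertical chord of $F$, but routing along $L$ gives stretch $1+\sqrt{\eps}$, while routing via any source on $V$ costs at least $2w$, which can be arbitrarily large compared to~$\delta$. So neither option in your construction attains $1+O(\eps)$ here.

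The paper's proof addresses exactly this multi-scale issue. For chords with both endpoints on $L$, it \emph{unfolds} $L$ into a staircase $L'$ (as in Lemma~\ref{lem:monotone}), shows that near-vertical chords of $F$ map to chords of slope at least $\tfrac12\eps^{-1/2}$ in the unfolded picture, and then invokes the recursive shadow construction of Lemma~\ref{lem:staircase}, which is precisely designed to handle all length scales simultaneously. For chords between $pq$ and $L$ it uses a separate decomposition by segments of slope $\eps^{-1/2}$ together with the square-covering argument of Lemma~\ref{lem:dir2}. Your idea of tiling $V$ into segments of height $2\eps^{-1/2}w$ is a reasonable first move for the $pq$--$L$ case, but it cannot replace the recursive machinery needed for the $L$--$L$ case.
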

\begin{proof}
Let $F$ be a thin histogram bounded by a vertical segment $pq$ and a $y$-monotone $pq$-path. By the definition of thin histograms, for all vertical chords $ab$ with $a,b\in L\cap S$, we have $\|L_{ab}\|\leq \frac12\eps^{1/2}\, \|ab\|$.
We construct directional spanner in two steps.

\smallskip\noindent\textbf{Case~1: Directional spanners for chords $ab$, with $a,b\in L$.}
Similarly to the proof of Lemma~\ref{lem:monotone}, let $L'$ be the unfolding of $L$ into a staircase path; refer to Figs.~\ref{fig:thin}(a)--(b). For every point $p\in L$, let $p'$ denote the corresponding point in $L'$. Denoting by $R$ and $R'$ the vertical strips spanned by $L$ and $L'$, respectively, there is a continuous and piecewise isometric function $\varrho:R'\rightarrow R$ such that $\varrho(L')=L$.
Any chord $ab$ of $F$ with $a,b,\in L$ corresponds to a segment $a'b'$ with $a',b' \in L'$, where $L'_{a'b'}$ denotes the subpath of $L'$ between $a'$ and $b'$.

\begin{figure}[htbp]
 \centering
 \includegraphics[width=0.95\textwidth]{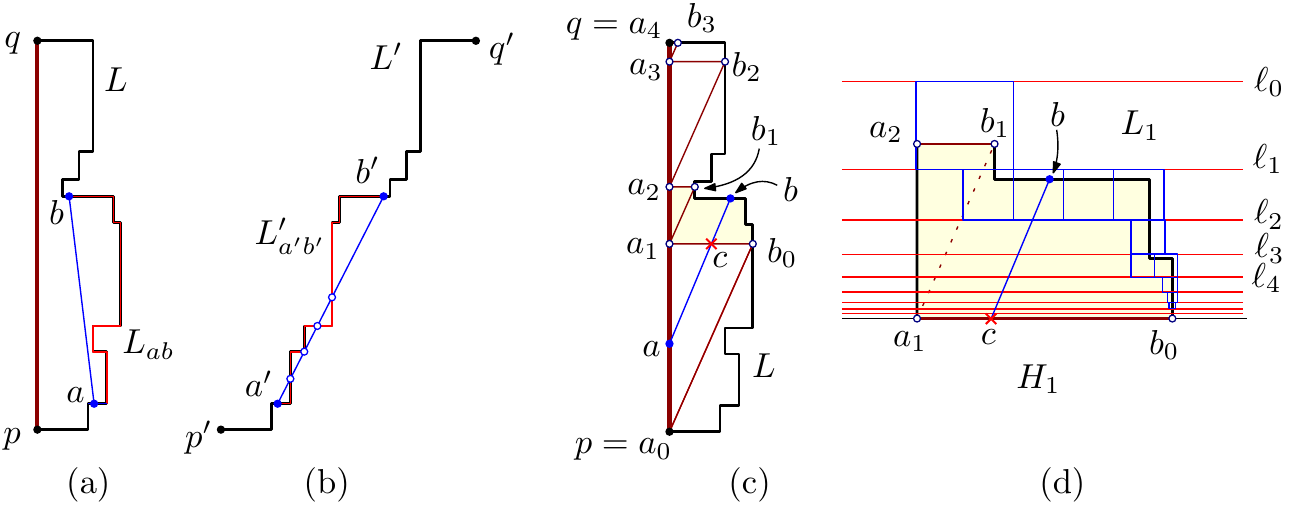}
 \caption{(a) A thin histogram, and a chord $ab$ with $|\slope(ab)|\geq \eps^{-1/2}$.
 (b) The $y$-monotone path $L$ is unfolded into a staircase path $L'$.
 (c) A $pq$ path $(a_0,b_0,a_1,\ldots, a_4)$.
 (d) A covering of $L_i$ with squares.}
    \label{fig:thin}
\end{figure}

Let $ab$ be a chord of $F$ with $a,b\in L$ and $|\slope(ab)|\geq \eps^{-1/2}$.
Then $\hght(ab)\geq \eps^{-1/2}\wdth(ab)$. Since $F$ is a thin histogram, then
\begin{align*}
\hper(L_{ab})
&\leq \wdth(ab)+\frac{\eps^{1/2}}{2}\, \hght(ab)\\
&\leq \left(\frac{1}{|\slope(ab)|} + \frac{\eps^{1/2}}{2}\right)\hght(ab)\\
&\leq 2\,\eps^{1/2}\, \vper(L_{ab}).
\end{align*}
Consequently,
\begin{equation}\label{eq:slope'}
    |\slope(a'b')|
    \geq \frac{\vper(L_{a'b'})}{\hper(L_{a'b'})}
    = \frac{\vper(L_{ab})}{\hper(L_{ab})}
    \geq \frac{\eps^{-1/2}}{2} .
\end{equation}
This in turn implies
\begin{align*}
\|a'b'\|
    & = \Big( (\wdth(a'b'))^2+ (\hght(a'b'))^2 \Big)^{1/2}
      = \left(1+\frac{1}{(\slope(a'b'))^2} \right)^{1/2} \hght(a'b') \\
    & \leq \left(1+4\eps\right)^{1/2} \hght(ab)
      \leq \left(1+O(\eps)\right) \|ab\|.
\end{align*}

We are now ready to construct a spanner, using Lemma~\ref{lem:staircase}.
Let $S'=\{s'\in L': s\in S\cap L\}$ be the set of points in $L'$ corresponding to the points in $S\cap L$. Let $S''$ be union of $S'$ and all intersection points between $L'$ and the line segments spanned by $S'$. Applying Lemma~\ref{lem:staircase} for $L'$ and the point set $S''$ twice (both above and below $L'$),  we obtain a geometric graph $G'$ of weight $O(\eps^{-1/2}\,\wdth(L'))$ that contains, for all chords $p'q'$ of $L'$ with $p',q'\in S''$ and $|\slope(p'q')|\geq 2\,\eps^{-1/2}$, a path $P_{p'q'}$ of weight at most $\|P_{p'q'}\|\leq (1+\eps)\|p'q'\|$.

Given a point pair 

\smallskip\noindent\emph{Lightness analysis in Case~1.}
Let $G=\varrho(G')$, which has the same weight as $G'$, that is,
    \[
    \|G\|=\|G'\|
    =O(\eps^{-1/2}\,\wdth(L'))
    = O(\eps^{-1/2}\,\hper(L))
    = O(\eps^{-1/2}\,\hper(F)).
    \]

\smallskip\noindent\emph{Stretch analysis in Case~1.}
Let $ab$ be a chord of $L$ with $|\slope(ab)|\geq \eps^{-1/2}$. Then  $|\slope(a'b')|\geq \frac12\,\eps^{-1/2}$ by Equation~\eqref{eq:slope'}. The line segment $a'b'$ is not necessarily a chord of $L'$. The staircase path $L'$ subdivides $a'b'$ into a chain $(p'_0,\ldots ,p'_m)$ of collinear chords of $L'$. For all $i=1,\ldots ,m$, the spanner $G'$ contains a $p'_{i-1}p'_i$-path $P'_i$ of weight $(1+O(\eps))\|p'_{i-1} p'_i\|$. The concatenation of these paths is an
$a'b'$-path $P'_{a'b'}$ of weight $(1+O(\eps))\|a'b'\|$. Finally, $G=\varrho(G')$
contains the $ab$-path $P_{ab}=\varrho(P'_{a'b'})$ of weight
    \[
    \|P_{ab}\|=\|P_{a'b'}\|
    \leq (1+O(\eps))\|a'b'\|
    \leq (1+O(\eps))^2\|ab\|
    \leq (1+O(\eps))\|ab\|,
    \]
as required.

\smallskip\noindent\textbf{Case~2: Directional spanner for chords between $pq$ and $L$.}
Assume w.l.o.g.\ that $pq$ is the left edge of $F$, and $p$ is the bottom vertex of $pq$.
We describe a construction for chords $ab$ with $\slope(ab)\geq \eps^{-1/2}$;
the construction is analogous for $\slope(ab)\leq -\eps^{-1/2}$, after a reflection.

We subdivide $F$ by a $pq$-path constructed recursively as follows;
see Fig.~\ref{fig:thin}(c).
Initially, we set $i=0$ and $a_0=p$. While $a_i\neq q$, we construct
point $b_i\in L$ such that $\slope(a_ib_i)=\eps^{-1/2}$;
and then construct $a_{i+1}\in pq$ such that $b_ia_{i+1}$ is horizontal.
Since $P$ has finitely many vertices, the algorithm terminates
with $r=a_q$ for some integer $m\geq 1$.
For short, denote by $L_i$ the subpath of $L$ between $b_{i-1}$ and $b_i$,
and let $H_i$ be the $y$-monotone histogram bounded by $L_i$ and the
path $(b_{i-1},a_i,a_{i+1},b_i)$.

For every $i=0,\ldots , m-1$, we construct a geometric graph $G_i$ as follows.
Graph $G_i$ includes the boundary of $H_i$, that is, $L_i$ and the path $(b_{i-1},a_i,a_{i+1},b_i)$.
The segments $a_{i-1}a_i\cup b_{i-1}a_i$ form a staircase path; by Lemma~\ref{lem:staircase}, there exists a geometric graph of weight $O(\|a_ia_{i+1}\|)$ that is a directional $(1+\eps)$-spanner for chords of $a_{i-1}a_i\cup b_{i-1}a_i$ of slope $\eps^{-1/2}$ or more; we add this graph to $G_i$.
Finally, similarly to the proof of Lemma~\ref{lem:dir2},
we cover $H_i$ with squares.
Specifically, for every $j\in \mathbb{N}$, let
\[\ell_j:y=y(a_i)+\hght(H_i) \left(1-3\cdot \sqrt{\eps}\right)^i.\]
We tile the horizontal strip between two consecutive lines, $\ell_j$ and $\ell_{j+1}$, by squares see Fig.~\ref{fig:thin}(d). For each square $Q$ that intersects $L_i$, Lemma~\ref{lem:combine5} (invoked with $36\,\eps$ in place of $\eps$) yields a geometric graph of weight $O(\eps^{-1/2}\, \wdth(Q)) = O(\eps^{-1/2}\, \hper(L\cap Q))$
that contains, for every chord $st$ of $H_i$ with $s\in b_{i-1}a_i$ and $t\in L_i\cap Q$,
an $st$ path of weight $(1+O(\eps))\|st\|$. We add all these graphs to $G_i$.

\smallskip\noindent\emph{Lightness analysis in Case~2.}
We have subdivided the edge $pq$ into a path $(a_0,\ldots ,a_m)$,
hence $\sum_{i=1}^m \|a_{i-1}a_i\|= \|pq\|$. Consequently,
$\sum_{i=1}^m \|b_{i-1}a_i\| =\sum_{i=1}^m \eps^{1/2} \, \|a_{i-1}a_i\| = \eps^{1/2}\|pq\|$. The total weight of the spanners between $a_{i-1}a_i$ and $b_{i-1}a_i$, for $i=1,\ldots, m$, is also bounded by $\sum_{i=1}^m O(\|a_{i-1}a_i\|) = O(\|pq\|)$.
Finally, the path $L$ is covered by squares, and for each
square $Q$ with $Q\cap L\neq \emptyset$, we have added a graph of weight
$O(\eps^{-1/2}\, \hper(L\cap Q))$. Summation over all squares yields
    \[
    \sum_Q O(\eps^{-1/2}\, \hper(L\cap Q))
    = O(\eps^{-1/2}\, \hper(L))
    = O(\eps^{-1/2}\, \hper(F)).
    \]
As $\|pq\|=O(\eps^{-1/2}\, \hper(F))$ in a thin histogram, then the total weight of the spanner for $F$ is $O(\eps^{-1/2}\, \hper(F))$.

\smallskip\noindent\emph{Stretch analysis in Case~2.}
Let $a\in pq$ and $b\in L$ such that $|\slope(ab)|\geq \eps^{-1/2}$.
By symmetry, we may assume $\slope(ab)\geq \eps^{-1/2}$.
Since $\slope(a_ib_i)=\eps^{-1/2}$ for all $i=0,\ldots , m-1$,
then $ab$ cannot cross any of these segments, and so $ab$
crosses at least one segment $a_ib_{i-1}$.
Let $j$ be the largest index such that $ab$ crosses $a_jb_{j-1}$,
and let $c=ab\cap a_jb_{j-1}$; see Fig.~\ref{fig:thin}(c).
If $a\in a_{j-1}a_j$, then we find a path $P_{ab}$ as
a concatenation of an $ac$-path $P_{ac}$ using the
spanner for the staircase path $(a_{j-1},a_j,b_{j-1})$,
and a $cb$-path $P_{cb}$ using the spanner in the histogram $H_j$.
Then
    \[
    \|P_{ab}\| =\|P_{ac}\|+\|P_{cb}\|
    \leq (1+O(\eps))(\|ac\| + \|cb\|)=(1+O(\eps))\|ab\|.
    \]
If $a$ is below point $a_{j-1}$, then we construct an $ab$-path $P_{ab}$
as a concatenation of edge $aa_{j-1}$, followed by a path from
$a_{j-1}$ to $b$ via $c$ as in the previous case.
The slope of every edge of the path $(a,a_{j-1},c,b)$ is more than
$\eps^{-1/2}$, hence $\|aa_{j-1}\|+\|a_{j-1}c\|+\|cb\|\leq (1+\eps)\|ab\|$
by Lemma~\ref{lem:angle2}. The spanner contains paths that
approximate $a_{j-1}c$ and $cb$ by a factors of $1+O(\eps)$. Overall, we have
$\|P_{ab}\|\leq (1+O(\eps))\|ab\|$, as required.
\end{proof}

Corollary~\ref{cor:dir3} and Lemma~\ref{lem:dir1} jointly imply Lemma~\ref{lem:hist}.

\histlemma*

This completes all components needed for Theorem~\ref{thm:UB}.

\section{Conclusion and Outlook}
\label{sec:cons}

We have studied Euclidean Steiner $(1+\eps)$-spanners under two optimization
criteria, \emph{lightness} and \emph{sparsity}, and obtained improved lower and upper bounds. In Euclidean $d$-space, the same point sets (grids in two parallel hyperplanes) establish the lower bounds $\Omega(\eps^{-d/2})$ for lightness and $\Omega(\eps^{(1-d)/2})$ for sparsity, for Steiner $(1+\eps)$-spanners
(cf.\ Theorem~\ref{thm:lb}).
For lightness, we obtained a matching lower bound of $O(\varepsilon^{-1})$ in the plane (cf.\ Theorem~\ref{thm:upper}). However, in dimensions $d \ge 3$, a $\tilde{\Theta}(\varepsilon^{-1/2})$-factor gap remains between the current upper bound $\tilde{O}(\varepsilon^{-(d+1)/2})$ \cite[Theorem~1.7]{le2020unified} and the lower bound $\Theta(\varepsilon^{-d/2})$ of Theorem~\ref{thm:lb}.
Le and Solomon~\cite[Theorem~1.3]{le2019truly} constructed spanners with sparsity $\tilde{O}(\varepsilon^{(1-d)/2})$, matching the lower bound up to lower-order factors in every dimension $d\in \mathbb{N}$.

Without Steiner points, the greedy-spanner achieves the worst-case lower bounds of $\Omega(\eps^{-d})$ and $\Omega(\eps^{-d+1})$ for lightness and sparsity, resp., in every dimension $d\geq 2$. When Steiner points are allowed, however, it is unclear whether a  $(1+\eps)$-spanner can meet both optimization criteria. The current best constructions for sparsity  (\cite[Theorem~1.3]{le2019truly}) and lightness (Theorem~\ref{thm:upper} and \cite[Theorem~1.7]{le2020unified}) place Steiner points in $d$-space to optimize one criterion, but not the other.
We conjecture that a Euclidean Steiner $(1+\varepsilon)$-spanner cannot simultaneously attain both lower bounds (lightness and sparsity) of Theorem~\ref{thm:lb}. 
Exploring the trade-offs between lightness and sparsity in Euclidean $d$-space remains an open problem.

In the plane, in particular, we have proved a tight upper bound of $O(\eps^{-1})$ on the lightness of Euclidean Steiner $(1+\eps)$-spanners (cf. Theorem~\ref{thm:upper}). Our proof is constructive: For every finite set $S\subset \mathbb{R}^2$, we describe a Euclidean Steiner $(1+\eps)$-spanner of weight $O(\eps^{-1}\,\|\MST(S)\|)$. However, we do not control the number of Steiner points. This immediately raises two questions: What is the minimum number of Steiner points and what is the minimum sparsity of a Euclidean Steiner $(1+\eps)$-spanner of lightness $O(\eps^{-1})$ that can be attained for all finite point sets in the plane?

Planarity is an important aspect of any geometric network in $\mathbb{R}^2$.
It is desirable to construct Euclidean $(1+\eps)$-spanners that are plane, i.e., no two edges of the spanner cross. Any Steiner spanner can be turned into a plane spanner (planarized), with the same weight and the same spanning ratio between the input points, by introducing Steiner points at all edge crossings. However, planarization may substantially increase the number of Steiner points. Bose and Smid~\cite[Sec.~4]{BoseS13} note that Arikati et al.~\cite{ArikatiCCDSZ96} constructed a Euclidean plane $(1+\eps)$-spanner with $O(\eps^{-4} n)$ Steiner points for $n$ points in $\mathbb{R}^2$; see also~\cite{MaheshwariSZ08}. Borradaile and Eppstein~\cite{BorradaileE15} improved the bound to $O(\eps^{-3} n\log \eps^{-1})$ in certain special cases where all Delaunay faces of the point set are fat. It remains an open problem to find the optimum dependence of $\eps$ for plane Steiner $(1+\eps)$-spanners; and for plane Steiner $(1+\eps)$-spanners of lightness $O(\eps^{-1})$.

\paragraph*{Acknowledgements.} We thank the anonymous reviewers of earlier versions of this paper for many helpful comments and suggestions that helped clarify the presentation. 

\bibliographystyle{plainurl}
\bibliography{main}

\end{document}